\documentclass[journal,onecolumn,10pt,draftclsnofoot]{IEEEtran}
\usepackage[T1]{fontenc}
\usepackage{microtype}
\usepackage{graphicx}
\usepackage{subcaption}
\usepackage{booktabs} 
\usepackage{hyperref}
\usepackage{tikz}
\usepackage{amsmath}
\usepackage{amssymb}
\usepackage{mathtools}
\usepackage{amsthm}
\usepackage{multicol}
\usepackage{multirow}
\usepackage{makecell}
\usepackage{cite}
\usepackage{changepage}
\usepackage{setspace}
\usepackage{pdfpages}
\usepackage{algorithm}
\usepackage{algorithmic}
\usepackage{forloop}
\usepackage{url}
\usepackage{caption}
\usepackage{pifont}
\newcommand{\cmark}{\ding{51}}%
\newcommand{\xmark}{\ding{55}}%

\definecolor{cblue}{rgb}{0.21,0.49,0.74}
\hypersetup{
       colorlinks=true,
       allcolors=cblue
}

\usepackage{amsmath,amsfonts,bm}

\def\eqref#1{Eq.~(\ref{#1})}

\def\1{\bm{1}}

\DeclareMathAlphabet{\mathsfit}{\encodingdefault}{\sfdefault}{m}{sl}
\SetMathAlphabet{\mathsfit}{bold}{\encodingdefault}{\sfdefault}{bx}{n}

\newcommand\E{\mathbb{E}}

\newcommand\Cov{\operatorname{Cov}}
\newcommand\diag{\operatorname{diag}}
\newcommand\Tr{\operatorname{Tr}}
\newcommand\I{\mathbf{I}}

\newcommand\bfx{\mathbf{x}}
\newcommand\bfy{\mathbf{y}}
\newcommand\bfz{\mathbf{z}}
\newcommand\bfA{\mathbf{A}}
\newcommand\bfB{\mathbf{B}}

\newcommand\bfC{\mathbf{C}}
\newcommand\tildebfz{\tilde{\mathbf{z}}}
\newcommand\tildebfA{\tilde{\mathbf{A}}}

\newcommand\tildebfb{\tilde{\mathbf{b}}}
\newcommand\tildebfC{\tilde{\mathbf{C}}}
\newcommand\bfQ{\mathbf{Q}}
\newcommand\bfUrho{\mathbf{U}^{\rho}}
\newcommand\bfVrho{\mathbf{V}^{\rho}}
\newcommand\baralpha{\bar{\alpha}}
\newcommand\bfSigma{\boldsymbol{\Sigma}}
\newcommand\bfLambda{\boldsymbol{\Lambda}}
\newcommand\bfmu{\boldsymbol{\mu}}
\newcommand\Normalnd{\mathcal{N}(\mathbf{0},\I)}
\newcommand\Normaloned{\mathcal{N}(0,1)}

\theoremstyle{plain}
\newtheorem{theorem}{Theorem}
\newtheorem{proposition}[theorem]{Proposition}
\newtheorem{lemma}[theorem]{Lemma}

\theoremstyle{definition}

\theoremstyle{remark}
\newtheorem{remark}{Remark}

\newcommand*\circled[1]{\tikz[baseline=(char.base)]{
            \node[shape=circle,draw,inner sep=2pt] (char) {#1};}}

\begin{document}

\title{Training-Free Rate-Distortion-Perception Traversal With Diffusion}

\author{
  Yuhan Wang, Suzhi Bi, and
  Ying-Jun Angela Zhang
  \thanks{The work of Yuhan Wang and Ying-Jun Angela Zhang is supported in part by the General Research Fund (project number 14214122, 14202723, 14207624), Area of Excellence Scheme grant (project number AoE/E-601/22-R), and NSFC/RGC Collaborative Research Scheme (project number CRS\_HKUST603/22, CRS\_HKU702/24), all from the Research Grants Council of Hong Kong. The work of Suzhi Bi is supported in part by the Guangdong Basic and Applied Basic Research Foundation under Project 2024B1515020089, and in part by Shenzhen Science and Technology Program under Project JCYJ20250604181912016.}
  \thanks{Yuhan Wang and Ying-Jun Angela Zhang are with the Department of Information Engineering, The Chinese University of Hong Kong, Hong Kong (e-mail: wy023@ie.cuhk.edu.hk; yjzhang@ie.cuhk.edu.hk).}
  \thanks{Suzhi Bi is  with the College of Electronic and Information Engineering, Shenzhen University, Shenzhen 518060,
  China (e-mail: bsz@szu.edu.cn).}
  }



\maketitle

  \begin{abstract}
    The rate-distortion-perception (RDP) tradeoff characterizes the fundamental limits of lossy compression by jointly considering bitrate, reconstruction fidelity, and perceptual quality. While recent neural compression methods have improved perceptual performance, they typically operate at fixed points on the RDP surface, requiring retraining to target different tradeoffs. In this work, we propose a training-free framework that leverages pre-trained diffusion models to traverse the entire RDP surface. Our approach integrates a reverse channel coding (RCC) module with a novel score-scaled probability flow ODE decoder. We theoretically prove that the proposed diffusion decoder is optimal for the distortion-perception tradeoff under AWGN observations and that the overall framework with the RCC module achieves the optimal RDP function in the Gaussian case. Empirical results across multiple datasets demonstrate the framework's flexibility and effectiveness in navigating the ternary RDP tradeoff using pre-trained diffusion models. Our results establish a practical and theoretically grounded approach to adaptive, perception-aware compression.
\end{abstract}

\begin{IEEEkeywords}
  Rate-distortion-perception tradeoff, lossy compression, training-free, diffusion model
\end{IEEEkeywords}

\section{Introduction}\label{Sec-Intro}

\IEEEPARstart{T}{raditionally} the objective of lossy compression is to represent data using the fewest possible bits while preserving acceptable fidelity to the original data. This can be formalized by Shannon's rate-distortion theory, which characterizes the tradeoff between compression rate and data distortion, e.g., mean squared error (MSE). However, distortion-centric metrics often fail in perceptual domains such as image and video compression. This has led to growing interest in the rate-distortion-perception (RDP) tradeoff \cite{RethinkingRDP,RDPSurvey_Niu25}, which incorporates perceptual quality into the classical framework, resulting in a ternary tradeoff that better aligns with the goals of modern compression systems. Mathematically, the information RDP function \cite{RethinkingRDP} for a source $X\sim p_X$ is defined as 
\begin{align}
    R(D,P) = &\min_{p_{\hat{X}|X}}  I(X; \hat{X})\notag\\
    &\text{s.t.} ~ \E[\Delta (X, \hat{X})] \leq D, ~~~ d(p_X, p_{\hat{X}}) \leq P, \label{Eq_I-RDP}
\end{align}
where $\Delta:\mathcal{X}\times\hat{\mathcal{X}}\to\mathbb{R}^+$ is a data distortion measure (e.g., squared error), and $d(\cdot, \cdot)$ is a divergence between probability distributions, such as total variation (TV) divergence or Wasserstein-2 (W2) distance \cite{Invitation_Wasserstein_space}. From an information-theoretic perspective, $R(D,P)$ serves as a \emph{lower bound} on the one-shot achievable rate \cite{CodingTheorem_RDP_Theis2021} when unlimited common randomness is shared between encoder and decoder.

Understanding and traversing the RDP surface is crucial for building adaptive and user-controllable compression algorithms. A line of work from the information theory community has established coding theorems for the RDP function under various perceptual constraints \cite{CodingTheorem_RDP_Theis2021,OnRDP_Coding_Chen2022,2021PerceptualReconstruction,rate_distortion_perception_conditional_percep,RDP_role_private_randomness_Hamdi2024}. Moreover, the universal RDP function studied by \cite{UniversalRDPs} shows the possibility to fix the encoder and adapt only the decoder to achieve multiple distortion-perception (DP) pairs.

Despite the theoretical potential, existing neural compression methods fall short of flexibly traversing the RDP tradeoff. Approaches like HiFiC \cite{mentzer2020HiFiC}, Conditional Diffusion Compression (CDC) \cite{yang2023_ConditionalDiffusionC}, and DLF \cite{xue2025_DLF} operate at fixed tradeoffs, yielding only a single point on the RDP surface per pre-trained model. Recent diffusion based methods \cite{li2024_DiffEIC_TCSVT,li2025_RDEIC_TCSVT,xue2025_OneDC,Ke2025resulic,Zhang2025_StableCodec} utilize pre-trained diffusion models but still require training additional modules or adapters to operate at different points on the RDP surface. While methods such as DiffC \cite{DiffC21,DiffC_Pretrain}, Posterior Sampling Compression (PSC) \cite{PSCompression_Elata_25}, Universally Quantized Diffusion Model (UQDM) \cite{yang2025UQDM}, and Denoising Diffusion Codebook Model (DDCM) \cite{DDCM_compression} offer progressive rate control via adaptive sensing or diffusion encoding, they lack mechanisms to navigate the DP axis. As a result, no existing approach enables full traversal of the RDP tradeoff using one pre-trained model.


In this work, we propose a training-free framework to traverse the RDP surface based on the DiffC scheme,  as shown in Figure~\ref{Fig_framework}. Specifically, we utilize the reverse channel coding (RCC) module \cite{Channel_Sim_Book,Strong_Functional_Rep_Lemma} to transmit the Gaussian perturbed data, and introduce a flexible ODE decoder powered by pre-trained diffusion models. The framework introduces two intuitive control parameters steering the ternary tradeoff among rate, distortion, and perception. Our contributions can be summarized as follows:
\begin{itemize}
    \item We introduce a training-free framework that enables flexible traversal of the RDP surface using a pre-trained diffusion model based on DiffC. In particular, we propose a novel score-scaled probability flow ODE (PF-ODE) decoder, enabling single-parameter control of the DP tradeoff for each compression rate. The RCC module introduces a second parameter to control the compression rate.
    \item We derive new theoretical guarantees for the achievability of DP and RDP functions. We prove that the proposed score-scaled PF-ODE with per-dimension scaling is optimal for the DP tradeoff under additive white Gaussian noise (AWGN) observations in the multivariate Gaussian cases. The full framework with the RCC module achieves the optimal RDP function for scalar Gaussian sources.
    \item We conduct extensive experiments on CIFAR-10, Kodak, and DIV2K datasets. The results demonstrate superior flexibility and reconstruction quality of our scheme across a wide range of rate, distortion, and perception settings, using a single pre-trained diffusion model.
\end{itemize}

\textbf{Notations:} 
Let $X$ be a random variable (r.v.) with distribution $p_X(\mathbf{x})$ over the alphabet $\mathcal{X}$. Realizations are denoted by lowercase letters $\mathbf{x}$. The expectation of $X$ is denoted by $\E[X]$. The covariance between r.v.s $X$ and $Y$ is $\Cov[X, Y]$. Matrices are denoted by bold uppercase letters (e.g., $\mathbf{\Sigma}$), with $\Tr(\mathbf{\Sigma})$ and $\mathbf{\Sigma}^{-1}$ representing the trace and inverse, respectively. $H(X)$ and $I(X;Y)$ denote the Shannon entropy and mutual information.

\section{Background} \label{Sec-Backgrounds}
The existing DiffC scheme \cite{DiffC21,DiffC_Pretrain} provides a way to progressively control the compression rate, yet lacks a mechanism to navigate the DP axis. Our method is rooted in the DiffC scheme but discovers a simple yet effective approach to fully traverse the RDP surface. 

The core idea of DiffC is to transmit Gaussian-perturbed data and reconstruct it using a pre-trained diffusion model. In the following subsections, we first provide preliminaries for two key ingredients of the DiffC scheme: diffusion models and the RCC module, followed by a rate-distortion analysis of the existing DiffC framework.


\begin{figure*}[t]
    \centering
    \includegraphics[width=0.56\textwidth]{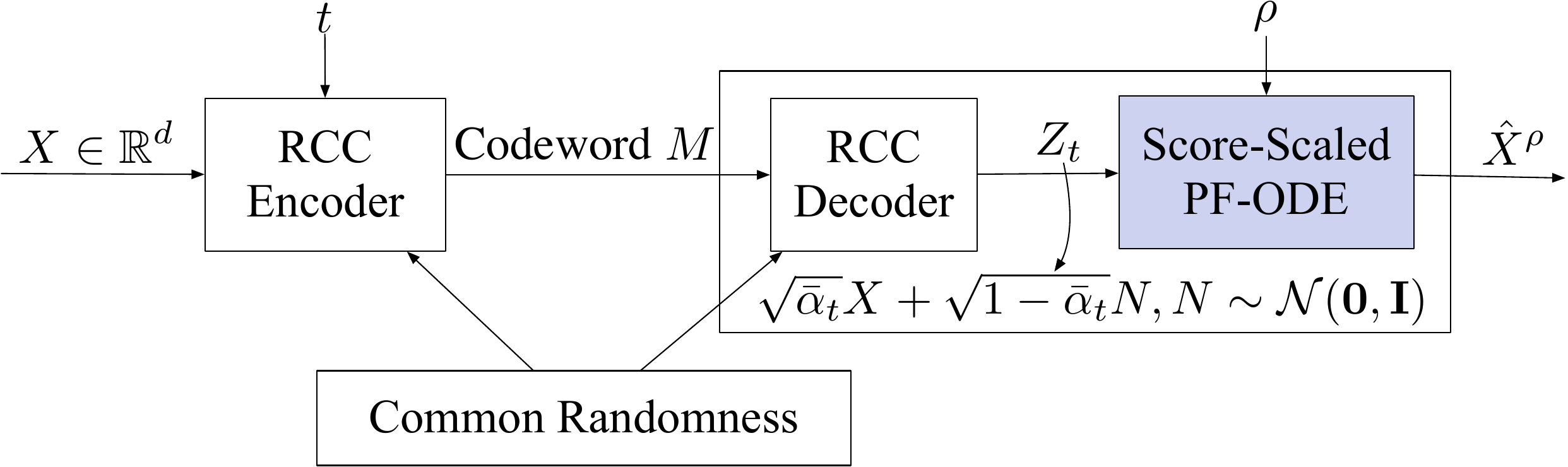}
    \caption{The proposed framework to traverse the RDP function using pre-trained diffusion models.}\label{Fig_framework}
\end{figure*}

\subsection{Diffusion Models and Probability Flow ODE:} \label{sub-sec-Backg_Diffusion_and_ODE}
Diffusion models (or score-based generative models) define a forward process $(\overrightarrow{Z}_\tau)_{\tau\in[0,T_c]}$ that progressively perturbs data $X\!\in\!\mathbb{R}^d$ with Gaussian noise, governed by the stochastic differential equation (SDE) \cite{song2021scorebased}:
\begin{align}
    d\overrightarrow{Z}_\tau = -\frac{1}{2}\beta(\tau)\overrightarrow{Z}_\tau d\tau + \sqrt{\beta(\tau)} dW_\tau, \overrightarrow{Z}_{0} = X \sim p_{\text{data}}, \label{forward_SDE}
\end{align}
$\text{where}$ $(W_\tau)_{\tau\in[0,T_c]}$ is standard Brownian motion and $\beta(\tau)$ is the noise schedule.
To generate samples from $p_{\text{data}}$, one can reverse the SDE and discretize the resulting process.
According to \cite{ANDERSON1982_reverseSDE} and \cite{song2021scorebased}, the reverse SDE associated with \eqref{forward_SDE} is
\begin{align}
    d\overleftarrow{Z}_{\tau}\!=\!\Big[\!-\!\frac{1}{2}\beta(\tau)\overleftarrow{Z}_{\tau} \!-\! \beta(\tau)\nabla\log p_{Z_\tau}(\overleftarrow{Z}_{\tau})\Big]&d\tau \!+\!\sqrt{\beta(\tau)}d\tilde{W}_\tau, ~~\overleftarrow{Z}_{T_c}\sim p_{T_c},\label{reverse_SDE}
\end{align}
$\text{where}$ $(\tilde{W}_\tau)_{\tau\in[0,T_c]}$ is an independent Brownian motion.
The score function $\nabla_{\mathbf{z}_\tau} \log p_{Z_\tau}(\mathbf{z}_\tau)$ is approximated by a neural network $s_{\theta}(\mathbf{z}_\tau, \tau)$ via denoising score matching \cite{Vincent_score_matching}. Note that the reverse process $\overleftarrow{Z}_\tau$ has the same distribution with $\overrightarrow{Z}_\tau$ for $\tau\in[0, T_c]$.



Meanwhile, there exists a deterministic counterpart to the SDE, known as the \emph{probability flow ODE (PF-ODE)} \cite{song2021scorebased}, whose trajectories share the same path distribution as the SDE. By manipulating the Fokker-Planck equations \cite{Fokker-Planck_Eq, AppliedSDE_Book}, the PF-ODE can be derived as:
\begin{align}
    d\overleftarrow{Z}_{\tau} \!=\! \Big[\!-\!\frac{1}{2}\beta(\tau)\overleftarrow{Z}_\tau\!-\!\frac{1}{2}\beta(\tau)\nabla\log p_{Z_\tau}(\overleftarrow{Z}_\tau)\Big]d\tau, \overleftarrow{Z}_{T_c}\!\sim p_{T_c}, \label{probability_ODE}
\end{align}
$\text{which}$ can also be simulated using the same score-based model $s_{\theta}(\mathbf{z}_\tau, \tau)$.

In the DiffC framework, diffusion models serve as \emph{denoising samplers} to reconstruct data from Gaussian-perturbed observations. Specifically, given an input perturbed to a specific noise level, the decoder solves the reverse-time SDE (\ref{reverse_SDE}) or the PF-ODE (\ref{probability_ODE}), starting from the time index corresponding to that noise level. Furthermore, the same diffusion model facilitates progressive coding within the RCC module, as detailed in the subsequent sections.


\subsection{Reverse Channel Coding}
Reverse channel coding (RCC), also known as the channel simulation problem \cite{Channel_Sim_Book}, addresses the efficient transmission of a sample drawn from a predefined conditional distribution $p_{Z|X}(\cdot|\mathbf{x})$ given data $\mathbf{x}$, using the shortest possible codeword. Practical DiffC implementations \cite{DiffC21,DiffC_Pretrain} adopt the Poisson Functional Representation (PFR) algorithm from \cite{Strong_Functional_Rep_Lemma}, which achieves the tightest coding length and supports CUDA-accelerated implementation \cite{DiffC_Pretrain}.

Specifically, in the PFR algorithm, the encoder transmits an index $M$ that allows the decoder to draw a sample from the target conditional distribution $p_{Z|X}(\cdot|\mathbf{x})$ using a shared reference distribution $p_Z$, which is the $Z$-marginal of the joint distribution $p_X p_{Z|X}$. \cite{Strong_Functional_Rep_Lemma} proved that the entropy of this index satisfies $H(M) \leq I(X;Z) + \log(I(X;Z) + 1) + 4$,
and this upper bound is achievable via entropy coding $M$ using a Zipf distribution.


\subsection{A Rate-Distortion Analysis of Current DiffC}\label{Subsec-Backg_RD-DiffC}
Theoretically, the existing DiffC scheme supports progressive rate control and achieves \emph{perfect realism}—that is, the distribution of the reconstruction $\hat{X}$ matches the source distribution $p_X$—due to the generative nature of diffusion models \cite{DiffC21}. Thus, standard DiffC operates at a single edge on the RDP plane, providing perfect realism at the cost of relatively high distortion.


Moreover, \cite{DiffC21} conducted a rate-distortion analysis of the DiffC algorithm, characterizing its performance for Gaussian sources. They showed that under the \emph{perfect realism} constraint, reconstruction given by PF-ODE can provably achieve lower MSE than SDE when the source distribution satisfies certain regularity conditions.

\begin{remark}
    Since the RCC module yields a Gaussian-perturbed observation at the decoder, the subsequent decoding process can be viewed as a specialized denoising problem. While methods for navigating the DP tradeoff have been proposed for general inverse problems \cite{PSCGAN, DP-tradeoff_Wasserstein_Freirich2021, ScoreDP_25_CVPR}, fundamental differences exist between the DP and RDP tradeoffs. Specifically, RDP optimization requires the joint design of the encoder and decoder, whereas the DP tradeoff in inverse problems assumes a fixed degradation model. Thus, certain optimality results and closed-form expressions derived for the DP tradeoff cannot be directly applied to the RDP setting. We provide a detailed comparison between DP and RDP tradeoffs in Appendix \ref{App_Sec_Compare_DP_RDP} for further clarification.
\end{remark}

\section{Score-Scaled Probability Flow ODE} \label{Sec-Decoder}

While our ultimate objective is to navigate the full RDP tradeoff, in this section we first focus on the DP frontier. We propose a novel decoder design based on a \emph{score-scaled probability flow ODE (PF-ODE)}, as highlighted in the blue module of Figure~\ref{Fig_framework}. This mechanism enables flexible traversal along the DP axis for any given compression level.




Note that in DiffC algorithm, the RCC module generates the noisy observation $Z_t$ indexed by the diffusion time $t$. Thus, the decoder is tasked with a specialized \emph{denoising problem} for each compression rate, where a series of DP tradeoffs must be addressed. Mathematically, the DP function with MSE distortion and Wasserstein-2 distance at time $t$ is defined as \cite{PD-tradeoff}
\begin{align}
    D_t(P)=&\min_{p_{\hat{X}|Z_t}}\E[\|X-\hat{X}\|_2^2], \quad \text{s.t.} \quad W_2^2(p_X, p_{\hat{X}})\leq P,\label{DP_tradeoff}
\end{align}
$\text{where}$ $W_2^2(\cdot, \cdot)$ denotes the squared Wasserstein-2 distance between two distributions. 
In the following, we detail the proposed score-scaled PF-ODE and analyze its optimality for the DP tradeoff.


\subsection{Conditional and Marginal Distributions of Score-Scaled PF-ODE}
The original PF-ODE in \eqref{probability_ODE} yields reconstructions with perfect perception ($P=0$) at cost of relatively high distortion \cite{DiffC21}.  For general inverse problems, \cite{Xue2024_scorebased_variational_infer} showed that the reverse mean propagation process by casting out the randomness in a conditional version of \eqref{reverse_SDE} converges to the minimum MSE (MMSE) estimation. 

In Appendix~\ref{App_Proof_Lem_End_SSODE_Conditional}, we further show that the mean propagation process of the reverse SDE in \eqref{reverse_SDE} starting from time index $t$ also converges to the MMSE estimation for Gaussian sources, which is equivalent to applying a scale on the score term in \eqref{probability_ODE}. Motivated by these observations, we propose the following \emph{score-scaled PF-ODE} to bridge the two extremes:
\begin{align}
    d\overleftarrow{Z}_\tau = \Big[-\frac{1}{2}\beta(\tau)\overleftarrow{Z}_\tau -\frac{1}{2}&(2-\rho)\beta(\tau)\nabla\log p_{Z_\tau}(\overleftarrow{Z}_\tau)\Big]d\tau, ~~\overleftarrow{Z}_t\sim \sqrt{\baralpha_t}X + \sqrt{1-\baralpha_t} N, \label{SS_General_ODE}
\end{align}
$\text{where}$ $\rho\in [0,1]$. When $\rho=1$, the above score-scaled PF-ODE collapses to the original PF-ODE (\ref{probability_ODE}) and recovers $X$ with perfect perception. When $\rho = 0$, it corresponds to the mean propagation process and converges to the MMSE estimate for Gaussian sources. Using Euler-Maruyama discretization \cite{AppliedSDE_Book} and following approximations similar to those in \cite{song2021scorebased}, we can discretize \eqref{SS_General_ODE} into $T$ intervals and simulate it via the following iterations:
\begin{align}
    Z_k =\frac{1}{\sqrt{1-\beta_{k+1}}}\Big(Z_{k+1}\!+\!\frac{1}{2}(2\!-\!\rho)&\beta_{k+1}\nabla\log p_{Z_{k\!+\!1}}(Z_{k+1})\Big), ~~ k\in\{0, 1, \ldots, t-1\}, \label{Discretized_SS_ODE}
\end{align}
starting from an instance $Z_t\sim \sqrt{\baralpha_t}X + \sqrt{1-\baralpha_t} N$ for $t\in\{1, \ldots, T\}$. Here, $\beta_T \geq \cdots \geq \beta_0 = 0$ is the variance schedule, $\alpha_k = 1 - \beta_k$, and $\bar{\alpha}_k = \prod_{i=1}^{k} \alpha_i$. The discretization details are provided in Appendix~\ref{App_Subsec_Discretization_SS_ODE}.

We now investigate whether score-scaling enables flexible and effective control over the DP axis for any given compression rate. To this end, we first analyze the conditional and marginal distributions of the reconstructions produced by the proposed score-scaled PF-ODE for Gaussian sources. We then prove its optimality for the DP tradeoff in the multivariate Gaussian case in the next subsection.


\begin{lemma}\label{Lem_end_of_SSODE}
    Consider the multivariate Gaussian source $X\sim \mathcal{N}(\boldsymbol\mu_0,\boldsymbol\Sigma_0)$. Let $\bfmu_k = \sqrt{\baralpha_k}\bfmu_0$ and $\bfSigma_k=\bar{\alpha}_k\boldsymbol\Sigma_0+(1-\bar{\alpha}_k)\mathbf I~$ for $k\in\{0, \dots, t\}$. Starting from $Z_t \sim \sqrt{\bar{\alpha}_t} X + \sqrt{1 - \bar{\alpha}_t} N$, $N\sim\Normalnd$ and applying the score-scaled PF-ODE iterations in \eqref{Discretized_SS_ODE}, the \emph{conditional} reconstruction given $Z_t = \check{\mathbf{z}}_t$ is 
        $Z_0(\check\bfz_t)=\bfA_t^{\rho} \check\bfz_{t} +  \bfB_t^{\rho} \bfmu_0$,
    $\text{where}$
    $\bfA_t^{\rho} := \sqrt{\baralpha_t}\bfSigma_0\prod_{i=0}^{t-1}\big(\I+\frac{1}{2}\rho\frac{\beta_{i+1}}{\alpha_{i+1}}\bfSigma_i^{-1}\big)\bfSigma_t^{-1}$ and $\bfB_t^{\rho} := \frac{1}{2}(2-\rho)\big( \sum_{i=2}^t \baralpha_{i-1} \beta_i \bfSigma_{0}  \prod_{j=0}^{i-2} (\I+\frac{1}{2}\rho\frac{\beta_{j+1}}{\alpha_{j+1}}\bfSigma_j^{-1}) \bfSigma_{i-1}^{-1}\bfSigma_i^{-1} + \beta_{1}\bfSigma_{1}^{-1} \big)$ as $T\to\infty$. In particular, the reconstruction $Z_0(\check\bfz_t)=\bfA_t^{0} \check\bfz_{t} +  \bfB_t^{0} \bfmu_0$ can be proved to be the MMSE point when $\rho=0$.  When $\rho=1$, the reconstruction matches the optimal transport solution with $\mathbf{A}_t^1 = \boldsymbol{\Sigma}_0^{\frac{1}{2}} \boldsymbol{\Sigma}_t^{-\frac{1}{2}}$. Then, the \emph{marginal} distribution of $Z_0$ is given by 
    \begin{align*}
        Z_0\sim \mathcal{N}\Big(\bfmu_0, ~\bfSigma_{0}\prod_{i=0}^{t-1}\big(\rho\I+(1-\rho)\alpha_{i+1}\bfSigma_{i}\bfSigma_{i+1}^{-1}\big)\Big),
    \end{align*}
     as $T\to\infty$. When $\rho=0$, the variance is $\baralpha_t\bfSigma_0^2\bfSigma_t^{-1}$. When $\rho\!=\!1$, the marginal distribution of $Z_0$ is $\mathcal{N}(\bfmu_0,\bfSigma_0)$, which coincides with that of the original source $X$.
\end{lemma}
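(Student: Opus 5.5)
The computation rests on the fact that for a Gaussian source every intermediate marginal is Gaussian with an affine score. Specifically, $Z_k\sim\mathcal N(\bfmu_k,\bfSigma_k)$, so
\[
\nabla\log p_{Z_k}(\mathbf z)=-\bfSigma_k^{-1}(\mathbf z-\bfmu_k).
\]
Substituting this into \eqref{Discretized_SS_ODE} turns each step into an affine map,
\[
Z_k=\tfrac{1}{\sqrt{\alpha_{k+1}}}\big(\I-\tfrac12(2-\rho)\beta_{k+1}\bfSigma_{k+1}^{-1}\big)Z_{k+1}+\tfrac{1}{2\sqrt{\alpha_{k+1}}}(2-\rho)\beta_{k+1}\bfSigma_{k+1}^{-1}\bfmu_{k+1}.
\]
All the matrices involved are polynomials in $\bfSigma_0$, so they commute and can be treated eigenvalue by eigenvalue. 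Composing the maps from $k=t-1$ down to $0$ gives $Z_0(\check\bfz_t)=\bfA\check\bfz_t+\bfB\bfmu_0$ exactly, with $\bfA$ a product of the per-step linear factors and $\bfB$ a telescoped sum of the offset terms.

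The first step is to rewrite the linear factor in the stated form. I would use the identity
\[
\bfSigma_{k+1}=\alpha_{k+1}\bfSigma_k+\beta_{k+1}\I,
\]
which follows from the definition of $\bfSigma_k$. It gives $\I-\beta_{k+1}\bfSigma_{k+1}^{-1}=\alpha_{k+1}\bfSigma_k\bfSigma_{k+1}^{-1}$, and hence
\[
\I-\tfrac12(2-\rho)\beta_{k+1}\bfSigma_{k+1}^{-1}=\alpha_{k+1}\bfSigma_k\bfSigma_{k+1}^{-1}\Big(\I+\tfrac12\rho\tfrac{\beta_{k+1}}{\alpha_{k+1}}\bfSigma_k^{-1}\Big).
\]
The product over $k$ then telescopes: the $\bfSigma_k\bfSigma_{k+1}^{-1}$ factors collapse to $\bfSigma_0\bfSigma_t^{-1}$, and $\prod_k \alpha_{k+1}/\sqrt{\alpha_{k+1}}=\sqrt{\baralpha_t}$. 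This yields $\bfA_t^\rho$ exactly. Collecting the offsets, each multiplied by the partial products of the same factors, and using $\bfmu_i=\sqrt{\baralpha_i}\bfmu_0$, gives $\bfB_t^\rho$ after the analogous telescoping. The bookkeeping of indices, in particular the isolated $i=1$ term $\beta_1\bfSigma_1^{-1}$, is routine but fiddly.

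The second step treats the two special cases. For $\rho=0$, $\bfA_t^0=\sqrt{\baralpha_t}\bfSigma_0\bfSigma_t^{-1}$, which is exactly the Gaussian posterior-mean gain $\Cov[X,Z_t]\Cov[Z_t]^{-1}$. I would verify the offset by checking unbiasedness, i.e.\ $\bfA\bfmu_t+\bfB\bfmu_0=\bfmu_0$. This identity holds for every $\rho$, because each affine step maps $\bfmu_{k+1}$ to $\bfmu_k$; that is immediate from the step formula. Unbiasedness also gives the marginal mean $\bfmu_0$ in general.

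For $\rho=1$, write $\bfA_t^1$ in a common eigenbasis and take logs:
\[
\sum_k\log\Big(1+\tfrac{\beta_{k+1}}{2\alpha_{k+1}\sigma_k^2}\Big)\approx\sum_k\tfrac{\beta_{k+1}}{2\sigma_k^2}.
\]
As $T\to\infty$ with $\beta_k\to0$, this Riemann sum tends to $\tfrac12\log(\sigma_t^2/\sigma_0^2)$, since $d\sigma^2=\beta(1-\sigma^2)\,d\tau$ and the remaining $\alpha$ factors contribute the matching $\sqrt{\baralpha}$ terms. The result is $\bfA_t^1=\bfSigma_0^{1/2}\bfSigma_t^{-1/2}$, the OT map between $\mathcal N(\bfmu_t,\bfSigma_t)$ and $\mathcal N(\bfmu_0,\bfSigma_0)$.

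The third step gives the marginal covariance, $\bfA\bfSigma_t\bfA^\top=\bfA^2\bfSigma_t$ since the matrices commute. Squaring the product, one would need
\[
\alpha_{k+1}^2\bfSigma_k^2\bfSigma_{k+1}^{-2}\Big(\I+\tfrac{\rho\beta_{k+1}}{2\alpha_{k+1}}\bfSigma_k^{-1}\Big)^2\frac{1}{\alpha_{k+1}}\bfSigma_{k+1}\bfSigma_k^{-1}=\rho\I+(1-\rho)\alpha_{k+1}\bfSigma_k\bfSigma_{k+1}^{-1}+O(\beta^2)
\]
per step. Expanding to first order in $\beta_{k+1}$, the two sides agree, and the $O(\beta^2)$ errors vanish in the limit.

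This limiting step is the main obstacle. The stated product formula is only asymptotically correct, so I must show the accumulated second-order errors are negligible. I would do this by comparing logarithms: the difference is $\sum_k O(\beta_k^2)\to0$. With that in hand, the special cases follow directly. At $\rho=0$ the product telescopes to $\baralpha_t\bfSigma_0\bfSigma_t^{-1}$, so the covariance is $\baralpha_t\bfSigma_0^2\bfSigma_t^{-1}$. At $\rho=1$ every factor is $\I$, so the covariance is $\bfSigma_0$ and the marginal is $\mathcal N(\bfmu_0,\bfSigma_0)$.
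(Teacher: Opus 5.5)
Your proposal is correct. The backbone is the paper's: the Gaussian score is affine, so each iteration is an affine map with commuting coefficients, and you compose the maps. Several of the individual steps, however, go differently.

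\textbf{Mean and the $\rho=0$ case.} The paper proves the marginal mean by a long telescoping of the summands of $\sqrt{\baralpha_t}\bfA_t^\rho+\bfB_t^\rho$. It proves the $\rho=0$ claim by computing $\bfB_t^0=(1-\baralpha_t)\bfSigma_t^{-1}$ and invoking Tweedie's formula. Your observation that the score vanishes at $\bfmu_{k+1}$, so every step sends $\bfmu_{k+1}$ to $\bfmu_k$, gives $\sqrt{\baralpha_t}\bfA_t^\rho+\bfB_t^\rho=\I$ exactly in one line. Combined with $\bfA_t^0=\Cov[X,Z_t]\bfSigma_t^{-1}$, the MMSE statement then reduces to the Gaussian conditioning formula.

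\textbf{Covariance.} The paper first derives the square-root form \eqref{Eq_A_t_rho_V2} from the per-step approximation $(\alpha_{i+1}\bfSigma_i+\frac{\rho}{2}\beta_{i+1}\I)^2\approx\alpha_{i+1}\bfSigma_i(\rho\bfSigma_{i+1}+(1-\rho)\alpha_{i+1}\bfSigma_i)$, and then squares it. Your per-step identity for $\bfA^2\bfSigma_t$ is the same approximation after multiplying through by $\alpha_{k+1}\bfSigma_k\bfSigma_{k+1}$. You skip the square roots and, unlike the paper, actually argue that the accumulated errors vanish. In fact the per-step error is exactly $\frac{\rho^2\beta_{k+1}^2}{4\alpha_{k+1}}\bfSigma_k^{-1}\bfSigma_{k+1}^{-1}$, because $\rho\I+(1-\rho)\alpha_{k+1}\bfSigma_k\bfSigma_{k+1}^{-1}=\alpha_{k+1}\bfSigma_k\bfSigma_{k+1}^{-1}+\rho\beta_{k+1}\bfSigma_{k+1}^{-1}$ holds exactly. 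This makes your log comparison immediate.

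What the paper's route buys is explicit square-root expressions for $\bfA_t^\rho$ and $\bfB_t^\rho$ at general $\rho$, which it reuses in the DP and RDP achievability proofs. You can recover the first of these from your covariance step by taking the positive definite square root.

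\textbf{A slip in your $\rho=1$ paragraph.} Integrating $d\log\sigma^2=\beta(\sigma^{-2}-1)\,d\tau$ shows that $\sum_k\beta_{k+1}/(2\sigma_k^2)$ tends to $\frac12\log(\sigma_t^2/\sigma_0^2)$ plus half the integral of $\beta$ over the same interval, not to $\frac12\log(\sigma_t^2/\sigma_0^2)$ alone. That extra term, which equals $-\frac12\log\baralpha_t$ in the limit, is what cancels the prefactor $\sqrt{\baralpha_t}$. The $1/\alpha_{k+1}$ factors only contribute $O(\beta^2)$ per step.

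You do not need the Riemann-sum argument at all. Your covariance step at $\rho=1$ gives $(\bfA_t^1)^2\bfSigma_t\to\bfSigma_0$. Since $\bfA_t^1$ is symmetric positive definite and commutes with $\bfSigma_t$, this forces $\bfA_t^1\to\bfSigma_0^{\frac12}\bfSigma_t^{-\frac12}$.
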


\begin{proof}
    See the details in Appendix \ref{App_Proof_Lem_End_SSODE}.
\end{proof}
From Lemma~\ref{Lem_end_of_SSODE}, we can observe that the reconstruction is deterministic for a fixed $Z_t=\bfz_t$, with behavior controlled by $\rho$ and $t$. For the marginal distribution, the mean remains fixed at $\boldsymbol{\mu}_0$, regardless of $\rho$ and $t$. When $\rho$ increases, the variance of $Z_0$ gradually approaches that of the original source $X$, leading to improved perception but increased distortion.

\subsection{Optimal Distortion-Perception Tradeoff Through AWGN Channel}
In Proposition \ref{Lem_Optimal_DP_Multivariate_Gaussian}, we first present the optimal solution to \eqref{DP_tradeoff} for the multivariate Gaussian case.  Then, we show that our score-scaled PF-ODE achieves this optimal DP tradeoff under AWGN.

\begin{proposition}[\cite{DP-tradeoff_Wasserstein_Freirich2021}] \label{Lem_Optimal_DP_Multivariate_Gaussian}
    Consider the $d$-dimensional source $X\sim \mathcal{N}(\boldsymbol\mu_0,\boldsymbol\Sigma_0)$ and the additive white Gaussian noise (AWGN) observation $Z_t\sim \sqrt{\baralpha_t}X + \sqrt{1-\baralpha_t} N$, $N\sim\Normalnd$. Assume $\boldsymbol\Sigma_0$ admits an eigen-decomposition $\bfSigma_0 = \bfQ\bfLambda_0\bfQ^\top$ with positive eigenvalues $\bfLambda_0 = \diag(\lambda_1,\ldots,\lambda_d)$. Then the optimal solution to \eqref{DP_tradeoff} yields
    \begin{align}
        D_t \!=\!\bigg(\! \sqrt{\sum_{i=1}^d\! \frac{\lambda_i}{\lambda_i^{(t)}}\big(\sqrt{\lambda_i^{(t)}}\!\!-\!\sqrt{\baralpha_t}\sqrt{\lambda_i}\big)^2} \!\!-\! \sqrt{P} \bigg)^2 \!\!\!+\! \sum_{\ell=1}^d\!\frac{(1\!-\!\baralpha_t)\lambda_\ell}{\lambda_\ell^{(t)}}, \label{Eq_Optimal_DP_Multivariate_GS}
    \end{align}
    for $0\leq P< \sum_{i=1}^d\! \frac{\lambda_i}{\lambda_i^{(t)}}\big(\sqrt{\lambda_i^{(t)}}\!\!-\!\sqrt{\baralpha_t}\sqrt{\lambda_i}\big)^2$,  where $\lambda_\ell^{(t)}:=\baralpha_t\lambda_\ell + (1-\baralpha_{t})$. 
\end{proposition}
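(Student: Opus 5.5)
The plan is to use the decomposition of the DP problem due to \cite{DP-tradeoff_Wasserstein_Freirich2021}, which holds for any estimator of $X$ from $Z_t$. Let $X^* := \E[X\mid Z_t]$ be the MMSE estimator. Any estimator $\hat{X}$ of $X$ from $Z_t$ satisfies $X - Z_t - \hat{X}$, so
\[
\E\|X-\hat{X}\|^2 = \E\|X-X^*\|^2 + \E\|X^*-\hat{X}\|^2 .
\]
The first term is the MMSE, which is fixed. The problem therefore reduces to minimizing $\E\|X^*-\hat{X}\|^2$ subject to $W_2^2(p_X,p_{\hat{X}})\le P$. The key fact is that the optimal value of this reduced problem equals $\min_{q:\,W_2^2(p_X,q)\le P} W_2^2(p_{X^*},q)$. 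The infimum is attained by choosing $q$ on a $W_2$ geodesic between $p_{X^*}$ and $p_X$, and realizing $\hat X = T(X^*)$, where $T$ is the optimal transport map from $p_{X^*}$ to $q$; this $\hat X$ is a deterministic function of $Z_t$, so it is admissible. The matching lower bound follows from $\E\|X^*-\hat X\|^2\ge W_2^2(p_{X^*},p_{\hat X})$ and the triangle inequality in $W_2$:
\[
W_2(p_{X^*},p_{\hat X}) \ge W_2(p_{X^*},p_X) - W_2(p_X,p_{\hat X}) \ge W_2(p_{X^*},p_X)-\sqrt{P}.
\]
On the geodesic this holds with equality. Hence
\[
D_t(P) = \mathrm{MMSE} + \big(W_2(p_{X^*},p_X)-\sqrt P\big)_+^2 .
\]

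It then remains to compute the two Gaussian quantities. With $\bfSigma_t = \baralpha_t\bfSigma_0+(1-\baralpha_t)\I$, we have $X^* = \bfmu_0 + \sqrt{\baralpha_t}\bfSigma_0\bfSigma_t^{-1}(Z_t-\sqrt{\baralpha_t}\bfmu_0)$, so $p_{X^*}=\mathcal{N}(\bfmu_0,\baralpha_t\bfSigma_0^2\bfSigma_t^{-1})$; this is consistent with the $\rho=0$ case in Lemma~\ref{Lem_end_of_SSODE}. All covariances are simultaneously diagonalized by $\bfQ$. The error covariance is $\bfSigma_0 - \baralpha_t\bfSigma_0^2\bfSigma_t^{-1}$, with eigenvalues $\lambda_\ell - \baralpha_t\lambda_\ell^2/\lambda_\ell^{(t)} = (1-\baralpha_t)\lambda_\ell/\lambda_\ell^{(t)}$. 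Summing these gives the MMSE term.

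For the Wasserstein term, the means coincide and the covariances commute, so
\[
W_2^2 = \sum_i \big(\sqrt{\lambda_i} - \sqrt{\baralpha_t}\lambda_i/\sqrt{\lambda_i^{(t)}}\big)^2 = \sum_i \frac{\lambda_i}{\lambda_i^{(t)}}\big(\sqrt{\lambda_i^{(t)}}-\sqrt{\baralpha_t}\sqrt{\lambda_i}\big)^2 ,
\]
which matches \eqref{Eq_Optimal_DP_Multivariate_GS}. The range $P< W_2^2(p_{X^*},p_X)$ is exactly where the positive part is active; for larger $P$ the MMSE estimator itself is feasible.

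The main obstacle is the first step: establishing that the geodesic interpolation is optimal among randomized estimators $p_{\hat X|Z_t}$, not just deterministic ones. The lower bound above handles this, since it never used determinism of $\hat X$. Attainability also needs care: $X^*$ is Gaussian with possibly degenerate covariance, but when $\lambda_i>0$ and $\baralpha_t>0$ its covariance is nonsingular, so the transport map $T$ is linear and well defined.
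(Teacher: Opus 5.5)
Your proof is correct. It follows the original argument of Freirich et al.\ \cite{DP-tradeoff_Wasserstein_Freirich2021}, not the one the paper gives.

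\textbf{The paper's route.} The paper works parametrically. It rotates to $Y=\bfQ^\top X$ and uses the $W_2$ tensorization inequality to decouple coordinates. Through Lemma~\ref{lem_recon_linear}, it restricts to coordinatewise linear-Gaussian estimators $\hat Y=\tilde{\mathbf{A}}\tilde Z_t+\tilde{\mathbf{b}}+\tilde{\mathbf{C}}N$, with joint Gaussianity asserted by citing prior work. It then sets the means equal and $\tilde c_\ell=0$, and solves the resulting convex program in the variables $f_\ell$ through KKT conditions.

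\textbf{Your route.} You prove the converse over all randomized $p_{\hat X|Z_t}$ at once, using the orthogonality decomposition and the $W_2$ triangle inequality. Achievability comes from the geodesic. Gaussianity enters only when you compute the MMSE and $W_2(p_{X^*},p_X)$.

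\textbf{What each buys.} Your argument is more general: $D_t(P)=\mathrm{MMSE}+\bigl(W_2(p_{X^*},p_X)-\sqrt P\bigr)_+^2$ holds for any source with finite second moments, since private randomness can realize any coupling. It also avoids the paper's weakest step, the claim that the optimal $\hat Y$ must be jointly Gaussian, which is imported by citation rather than proved. The paper's route gives explicit per-coordinate coefficients $f_\ell$ in \eqref{f_ell}. Its achievability proof of Theorem~\ref{Thm_Optimal_DP} then matches these by choosing $\rho_\ell$.

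You lose nothing on that front either. The covariances $\bfSigma_0$ and $\baralpha_t\bfSigma_0^2\bfSigma_t^{-1}$ are simultaneously diagonalized by $\bfQ$, so your geodesic interpolant acts coordinatewise in the eigenbasis. Its standard-deviation ratio is $f_\ell=1-\frac{\sqrt P}{W_2(p_{X^*},p_X)}\bigl(1-\sqrt{\baralpha_t\lambda_\ell/\lambda_\ell^{(t)}}\bigr)$, which is exactly \eqref{f_ell}, obtained without any KKT analysis.
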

\begin{proof}
    In \cite{DP-tradeoff_Wasserstein_Freirich2021}, the authors derived the optimal DP tradeoff and the optimal estimators for multivariate Gaussian sources. Our results are equivalent to theirs in the case of AWGN degradations. The closed-form expression in~\eqref{Eq_Optimal_DP_Multivariate_GS} and Lemma~\ref{lem_recon_linear} in our proof offer insights into the achievability of our newly proposed method. Therefore, we include the proof in Appendix~\ref{App_sub-sec-DP_Converse_Multivariate_Gaussian} for completeness.
\end{proof}

\begin{theorem} \label{Thm_Optimal_DP}
     The optimal DP tradeoff (\ref{Eq_Optimal_DP_Multivariate_GS}) can be achieved by simulating the score-scaled PF-ODE iterations in \eqref{Discretized_SS_ODE}, with a dedicated choice of $\rho$ for each dimension of $Z_t$.
\end{theorem}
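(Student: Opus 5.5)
The plan is to reduce everything to independent scalar problems in the eigenbasis of $\bfSigma_0$ and then match the per-dimension gain of the score-scaled PF-ODE to the gain of the optimal estimator from Proposition~\ref{Lem_Optimal_DP_Multivariate_Gaussian}. Since $\bfSigma_k=\baralpha_k\bfSigma_0+(1-\baralpha_k)\I=\bfQ\big(\baralpha_k\bfLambda_0+(1-\baralpha_k)\I\big)\bfQ^\top$ for every $k$, all matrices in Lemma~\ref{Lem_end_of_SSODE} are simultaneously diagonalized by $\bfQ$. Write $\tilde{\bfz}_t=\bfQ^\top\check\bfz_t$. With a per-dimension parameter vector $(\rho_1,\ldots,\rho_d)$, the score-scaled iteration \eqref{Discretized_SS_ODE} is applied coordinatewise in this rotated basis. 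Because the score of $Z_k$ is affine with a matrix diagonal in this basis, each coordinate evolves independently. So Lemma~\ref{Lem_end_of_SSODE}, applied in the scalar case with variance $\lambda_\ell$, gives $\tilde Z_{0,\ell}=a_\ell(\rho_\ell)\tilde z_{t,\ell}+b_\ell(\rho_\ell)\tilde\mu_{0,\ell}$ as $T\to\infty$. Here
\[
a_\ell(\rho)=\sqrt{\baralpha_t}\,\lambda_\ell\prod_{i=0}^{t-1}\Big(1+\tfrac12\rho\tfrac{\beta_{i+1}}{\alpha_{i+1}}(\lambda_\ell^{(i)})^{-1}\Big)(\lambda_\ell^{(t)})^{-1},
\]
with $\lambda_\ell^{(i)}=\baralpha_i\lambda_\ell+1-\baralpha_i$. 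Lemma~\ref{Lem_end_of_SSODE} also says the marginal mean is $\bfmu_0$ for every $\rho$. Hence the affine part is forced: $\tilde Z_{0,\ell}-\tilde\mu_{0,\ell}=a_\ell(\rho_\ell)(\tilde z_{t,\ell}-\sqrt{\baralpha_t}\tilde\mu_{0,\ell})$. The reconstruction is therefore fully described by the $d$ scalar gains $a_\ell(\rho_\ell)$.

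Next I would use the structure of the optimal estimator, which is the content of Lemma~\ref{lem_recon_linear} in the proof of Proposition~\ref{Lem_Optimal_DP_Multivariate_Gaussian}, following \cite{DP-tradeoff_Wasserstein_Freirich2021}. The optimal $\hat X$ is the $W_2$-geodesic interpolation $\hat X=(1-s)X^*+sT(X^*)$, where $X^*=\E[X|Z_t]$ and $T$ is the optimal transport map from $p_{X^*}$ to $p_X$. In the eigenbasis both $X^*$ and $T$ act as per-coordinate scalings about the mean. So the optimal estimator is also coordinatewise affine, with gain
\[
g_\ell(s)=(1-s)\,g_\ell^{\mathrm{MMSE}}+s\,g_\ell^{\mathrm{OT}},\qquad g_\ell^{\mathrm{MMSE}}=\sqrt{\baralpha_t}\lambda_\ell/\lambda_\ell^{(t)},\qquad g_\ell^{\mathrm{OT}}=\sqrt{\lambda_\ell/\lambda_\ell^{(t)}}.
\]
The interpolation weight $s\in[0,1]$ is common to all coordinates and is determined by $P$ through \eqref{Eq_Optimal_DP_Multivariate_GS}.

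The key step is then to show that for every $\ell$ and every $s\in[0,1]$ there is a $\rho_\ell\in[0,1]$ with $a_\ell(\rho_\ell)=g_\ell(s)$. By Lemma~\ref{Lem_end_of_SSODE}, $a_\ell(0)=g_\ell^{\mathrm{MMSE}}$ (the MMSE point) and $a_\ell(1)=g_\ell^{\mathrm{OT}}$ (the OT solution $\bfSigma_0^{1/2}\bfSigma_t^{-1/2}$). Moreover, $a_\ell(\rho)$ is a finite product of factors $1+\rho c_{i}$ with $c_i>0$, so it is continuous and strictly increasing in $\rho$. Since $g_\ell(s)$ lies between the two endpoints, the intermediate value theorem gives $\rho_\ell$, and strict monotonicity makes it unique. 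Choosing these $\rho_\ell$ yields a reconstruction that coincides almost surely with the optimal estimator. Its MSE and $W_2$ distance therefore equal those of the optimum, which attains \eqref{Eq_Optimal_DP_Multivariate_GS}.

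The main obstacle is the $T\to\infty$ limit. For finite $T$, the endpoint identities and the product form hold only asymptotically. I would handle this by fixing the target gain and choosing $\rho_\ell^{(T)}$ for each $T$ via the same monotonicity argument; this relies on uniform convergence of $a_\ell^{(T)}(\rho)$ on $[0,1]$, which follows from uniform convergence of the product. Continuity of MSE and $W_2$ in the gains (both are explicit quadratic expressions in Gaussian parameters) then shows that the achieved pair converges to the optimal DP point. A secondary check is that the chosen $\rho_\ell$ varies by coordinate but is constant along the trajectory. This is consistent with the decoupled dynamics, because each coordinate's ODE uses only its own $\rho_\ell$.
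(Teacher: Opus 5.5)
Your proposal is correct and follows essentially the same route as the paper. Both decouple the score-scaled PF-ODE in the eigenbasis of $\bfSigma_0$ and pick each $\rho_\ell\in[0,1]$ by monotonicity and the intermediate value theorem so that the coordinate gain matches the optimal one; the paper matches $f_\ell$ in \eqref{f_ell} and then sums the scalar $D_\ell,P_\ell$ from \eqref{D_rho_1d}--\eqref{P_rho_1d}, while you identify the reconstruction directly with the optimal estimator of \cite{DP-tradeoff_Wasserstein_Freirich2021}, which is equivalent since $\tilde a_\ell=f_\ell\sqrt{\lambda_\ell/\lambda_\ell^{(t)}}$.

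Your finite-$T$ worry is also unnecessary. At every finite $T$, $a_\ell(0)$ equals the MMSE gain exactly, and since $\alpha_{i+1}\lambda_\ell^{(i)}+\tfrac12\beta_{i+1}=\tfrac12\big(\alpha_{i+1}\lambda_\ell^{(i)}+\lambda_\ell^{(i+1)}\big)\ge\sqrt{\alpha_{i+1}\lambda_\ell^{(i)}\lambda_\ell^{(i+1)}}$, the product telescopes to $a_\ell(1)\ge\sqrt{\lambda_\ell/\lambda_\ell^{(t)}}$; hence your intermediate-value step gives an exact $\rho_\ell$ at each finite $T$, with no uniform-convergence argument needed.
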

\begin{proof}
     In the achievability proof, we use a per-dimension variant of \eqref{Discretized_SS_ODE}, which applies independently to each component of $Z_t$. See details in Appendix \ref{App_sub-sec-DW_achievability_Multivariate_Gaussian}.
\end{proof}

\begin{remark}
    In \cite{DP-tradeoff_Wasserstein_Freirich2021}, the optimal estimators are constructed as linear combinations of two extreme-point estimators (i.e., MMSE and perfect realism), without specifying how to obtain these extremes. In contrast, our achievability proof provides a unified, concrete mechanism that not only covers these two extremes \emph{across all compression rates} but also facilitates continuous control over the entire DP tradeoff.

\end{remark}

\section{Traversing RDP Function: Optimality and General Algorithm} \label{Sec-Traverse_RDP}

This section presents a joint analysis of the RDP performance achieved by combining the RCC module with the proposed score-scaled PF-ODE. We prove that this framework provides an optimal solution to the RDP tradeoff for scalar Gaussian sources. Furthermore, we detail practical algorithms for general sources that enable flexible traversal of the RDP surface using a pre-trained diffusion model.



Recall that we first transmit an instance of $Z_t\sim \sqrt{\baralpha_t}X + \sqrt{1-\baralpha_t} N$, $N\sim\Normalnd$ using the RCC algorithm (e.g., PFR) for selected compression level $t$. Then the decoder can simulate the score-scaled PF-ODE iterations in \eqref{Discretized_SS_ODE} with $\rho\in[0,1]$ to determine the balance between distortion and perception under each compression rate. 

The following theorem shows that, despite its modularity and flexibility, our scheme achieves the information RDP function (\ref{Eq_I-RDP}) in the scalar Gaussian case.

\begin{theorem} \label{Thm_RDP_optimal_gaussian}
    Consider the scalar Gaussian source $X\sim\mathcal{N}(\mu_0, \sigma_0^2)$. The information RDP function under MSE distortion and squared W2 distance is given by
    \begin{align*}
        R(D,P) = \begin{cases}
        &\frac{1}{2}\log\frac{\sigma_0^2(\sigma_0-\sqrt{P})^2}{\sigma_0^2(\sigma_0-\sqrt{P})^2-(\sigma_0^2+(\sigma_0-\sqrt{P})^2-D)^2 / 4}\\ 
        &~~~~~~~~~~~~~~~~~~\text{if } \sqrt{P} < \sigma_0-\sqrt{|\sigma_0^2-D|},\\
        &\max\{\frac{1}{2}\log\frac{\sigma_0^2}{D}, 0\}\\
        &~~~~~~~~~~~~~~~~~~\text{if } \sqrt{P} \geq \sigma_0-\sqrt{|\sigma_0^2-D|}.
    \end{cases}
    \end{align*}
    Let $D_t^\rho$ and $P_t^\rho$ be the squared error distortion and squared W2 distance achieved by the score-scaled PF-ODE iterations (\ref{Discretized_SS_ODE}) for $\rho\in[0,1]$ at compression level $t\in\{1, 2, \ldots, T\}$. Then we have 
    \begin{align*}
        &R(D_t^\rho,\! P_t^\rho) \leq R_t^1 \leq R(D_t^\rho,\! P_t^\rho) \!+\! \log \big(R(D_t^\rho,\! P_t^\rho)\!+\!1\big) \!+\!4,\\
        &R_t^{\infty} = R(D_t^\rho,P_t^\rho),
    \end{align*}
    where $R_t^1$ and $R_t^{\infty}$ are the one-shot and asymptotic coding rate of the Poisson functional representation \cite{Strong_Functional_Rep_Lemma} to transmit samples of $Z_t\sim \sqrt{\baralpha_t}X+\sqrt{1-\baralpha_t}N$.
\end{theorem}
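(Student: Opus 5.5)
The proof has three parts. We first take the closed form of $R(D,P)$ for the scalar Gaussian as known; it can be cited or rederived via the Gaussian-optimality argument. We then compute the pair $(D_t^\rho,P_t^\rho)$ produced by the scheme. Finally, we show that the mutual information actually spent, $I(X;Z_t)$, equals $R(D_t^\rho,P_t^\rho)$, and translate this into the two rate statements through the PFR bound.

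\textbf{Step 1 (closed form).} By the standard argument, for a Gaussian source with MSE and $W_2^2$ constraints the optimal $\hat X$ can be taken jointly Gaussian with $X$, with mean $\mu_0$, variance $\hat\sigma^2$ and correlation coefficient $c$. The constraints become
\[
D\ge \sigma_0^2+\hat\sigma^2-2c\sigma_0\hat\sigma, \qquad P\ge(\sigma_0-\hat\sigma)^2,
\]
and the objective is $-\tfrac12\log(1-c^2)$. Optimizing over $(\hat\sigma,c)$ gives the two regimes. If the perception constraint is inactive, the classical $\max\{\tfrac12\log(\sigma_0^2/D),0\}$ results, with $\hat\sigma^2=\sigma_0^2-D$. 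Otherwise $\hat\sigma=\sigma_0-\sqrt P$ is active and $c=(\sigma_0^2+\hat\sigma^2-D)/(2\sigma_0\hat\sigma)$, which yields the first expression. The threshold $\sqrt P<\sigma_0-\sqrt{|\sigma_0^2-D|}$ is exactly the condition that $(\sigma_0-\sqrt P)^2$ lies strictly inside the interval of $\hat\sigma$ for which the unconstrained optimum would violate the perception constraint.

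\textbf{Step 2 (performance of the scheme).} Apply Lemma~\ref{Lem_end_of_SSODE} in one dimension. This gives $\hat X=Z_0=A_t^\rho Z_t+B_t^\rho\mu_0$, a deterministic linear function of $Z_t$, with mean $\mu_0$ and variance $\hat\sigma^2=\sigma_0^2\prod_{i}(\rho+(1-\rho)\alpha_{i+1}\sigma_i^2/\sigma_{i+1}^2)$. Hence $P_t^\rho=(\sigma_0-\hat\sigma)^2$. Since $Z_t$ is jointly Gaussian with $X$, $\hat X$ is jointly Gaussian with $X$, and its correlation with $X$ equals that of $Z_t$ with $X$ in absolute value: $c_t^2=\bar\alpha_t\sigma_0^2/(\bar\alpha_t\sigma_0^2+1-\bar\alpha_t)$. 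This uses $A_t^\rho>0$, which follows from the product form. Therefore $D_t^\rho=\sigma_0^2+\hat\sigma^2-2c_t\sigma_0\hat\sigma$.

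\textbf{Step 3 (optimality).} The key observation is that $(\hat\sigma,c_t)$ is feasible for the Step 1 program at $(D_t^\rho,P_t^\rho)$, with both constraints tight. We must show it is optimal, i.e.\ that no larger $|c|$, which would mean lower rate, is feasible. Fix the rate, i.e.\ fix $c$. The achievable $(D,P)$ pairs are then parametrized by $\hat\sigma$, and the pair $((\sigma_0-\hat\sigma)^2,\ \sigma_0^2+\hat\sigma^2-2c\sigma_0\hat\sigma)$ traces the lower boundary of the feasible region exactly when $\hat\sigma\in[c\sigma_0,\sigma_0]$. At one end, $\hat\sigma=c\sigma_0$ is MMSE scaling; at the other, $\hat\sigma=\sigma_0$ gives perfect realism. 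Lemma~\ref{Lem_end_of_SSODE} shows that $\rho=0$ gives variance $\bar\alpha_t\sigma_0^4/\sigma_t^2=c_t^2\sigma_0^2$ and $\rho=1$ gives $\sigma_0^2$. Each factor in the product is monotone in $\rho$, so intermediate $\rho$ land in $[c_t\sigma_0,\sigma_0]$. For such $\hat\sigma$, the tradeoff between $D$ and $|c|$ is strictly monotone, so $R(D_t^\rho,P_t^\rho)=-\tfrac12\log(1-c_t^2)$. This last quantity equals $I(X;Z_t)=I(X;\hat X)$, using invertibility of the linear map. I expect this Step 3 monotonicity and boundary verification to be the main obstacle. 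The plan is to do it by direct calculus on the closed form: show that $\partial R/\partial D<0$ on the region in question.

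\textbf{Step 4 (rates).} The PFR encodes $Z_t$ with $I(X;Z_t)\le H(M)\le I(X;Z_t)+\log(I(X;Z_t)+1)+4$. The lower bound holds since $M$, together with the common randomness, determines $Z_t$. Substituting $I(X;Z_t)=R(D_t^\rho,P_t^\rho)$ gives the one-shot sandwich for $R_t^1$. For $R_t^\infty$, apply PFR to blocks of $n$ i.i.d.\ copies. The rate per symbol is at most $I+\bigl(\log(nI+1)+4\bigr)/n\to I$, and by the lower bound it is at least $I$, so $R_t^\infty=R(D_t^\rho,P_t^\rho)$.
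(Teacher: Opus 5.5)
Your skeleton matches the paper's: cite the closed form from \cite{UniversalRDPs}, compute $(D_t^\rho,P_t^\rho)$ from Lemma~\ref{Lem_end_of_SSODE}, show $R(D_t^\rho,P_t^\rho)=I(X;Z_t)=\tfrac12\log\bigl(1+\tfrac{\baralpha_t}{1-\baralpha_t}\sigma_0^2\bigr)$, and feed this into the PFR bounds. The difference is the middle step. The paper does no optimization reasoning there. It writes $D_t^\rho,P_t^\rho$ in terms of $f_t^\rho$ (so $\hat\sigma=\sigma_0 f_t^\rho$) and notes that $f_t^\rho\in(\sqrt{\baralpha_t}\sigma_0/\sigma_t,1]$ for $\rho>0$, which places the point in the first branch. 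It then substitutes. Because $\sigma_0-\sqrt{P_t^\rho}=\hat\sigma$ and $\sigma_0^2+\hat\sigma^2-D_t^\rho=2c_t\sigma_0\hat\sigma$, the first branch collapses to $\tfrac12\log\frac{1}{1-c_t^2}$ at once. For $\rho=0$ the point sits on the branch boundary, and $\tfrac12\log(\sigma_0^2/D_t^0)$ gives the same value. Your argument instead shows that the scheme's pair $(\hat\sigma,c_t)$ is itself the minimizer of the Gaussian program. That treats all $\rho\in[0,1]$ uniformly, needs no branch bookkeeping, and does not really need the closed form. The paper's substitution is shorter, since it cites the closed form anyway. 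Your Step~4 (why $H(M)\ge I(X;Z_t)$, and the blockwise limit for $R_t^\infty$) is more explicit than the paper, which just cites \cite{Strong_Functional_Rep_Lemma}.

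As written, though, Step~3 points the wrong way. Since $-\tfrac12\log(1-c^2)$ is increasing in $|c|$, a larger $|c|$ means a \emph{higher} rate. Larger $|c|$ is also always feasible, because at fixed $\hat\sigma$ it only lowers $D$. What you must exclude is a feasible $c'<c_t$, where $c'\ge 0$ without loss of generality. Your plan to show $\partial R/\partial D<0$ would not by itself give the equality, and it is not needed. If $(\sigma_0-\hat\sigma')^2\le P_t^\rho=(\sigma_0-\hat\sigma)^2$, then $\hat\sigma'\in[\hat\sigma,2\sigma_0-\hat\sigma]$. Since $\hat\sigma\ge c_t\sigma_0>c'\sigma_0$, the map $s\mapsto s^2-2c'\sigma_0 s$ is increasing on that interval, so
\[
\sigma_0^2+\hat\sigma'^2-2c'\sigma_0\hat\sigma'\ \ge\ \sigma_0^2+\hat\sigma^2-2c'\sigma_0\hat\sigma\ >\ \sigma_0^2+\hat\sigma^2-2c_t\sigma_0\hat\sigma=D_t^\rho .
\]
With this fix, $R(D_t^\rho,P_t^\rho)=-\tfrac12\log(1-c_t^2)=I(X;Z_t)$ for every $\rho\in[0,1]$. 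The ``main obstacle'' you anticipated disappears, and the rest of your proof goes through.
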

\begin{proof}
    The closed-form expression of the RDP function is derived in \cite{UniversalRDPs}. In Appendix \ref{App_Proof_Thm_RDP_Optimal_Gaussian}, we show that the proposed score-scaled PF-ODE combined with the PFR encoder achieves $(R, D, P)$ triplets that asymptotically match the theoretical RDP function.
\end{proof}

\begin{figure}[!t]
    \centering
    \includegraphics[width=0.88\textwidth]{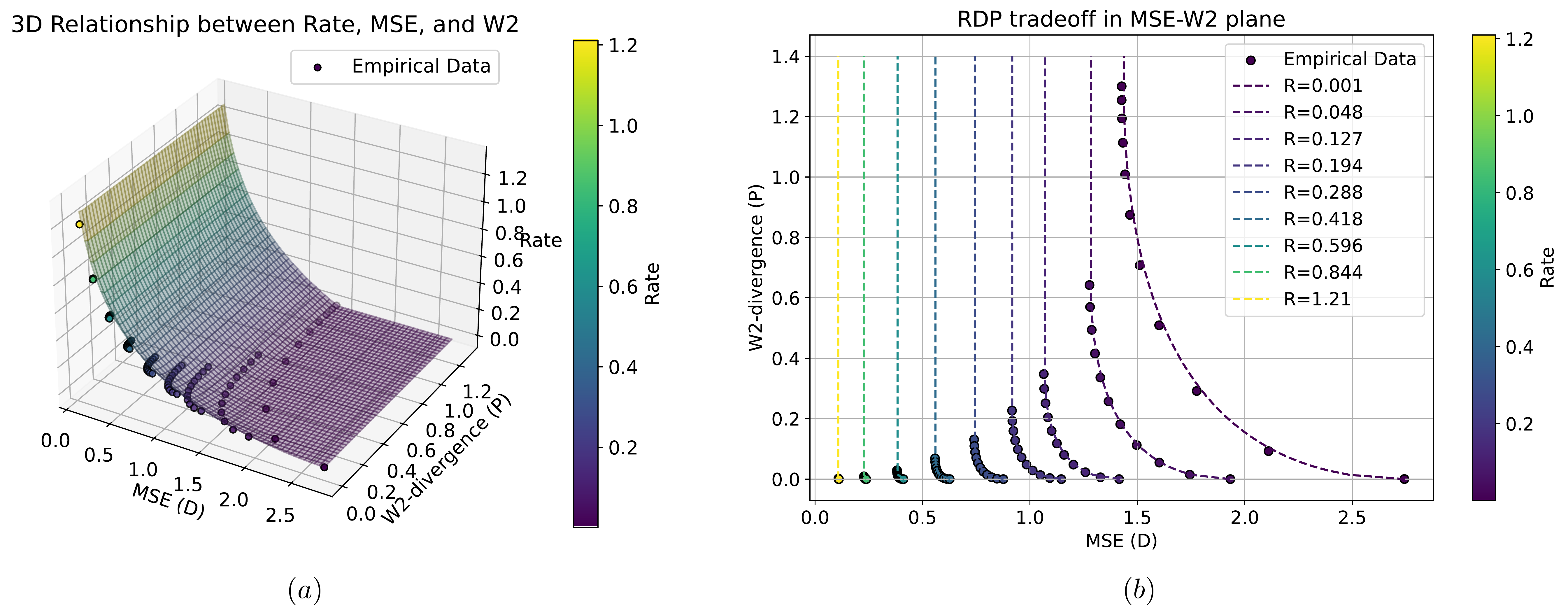}
    \caption{Information-theoretical RDP function for scalar Gaussian source (dashed line) and achieved rate, MSE, and W2 distance levels by our scheme (solid dots). (a) The RDP surface. (b) $R(D,P)$ function along DP planes. Different colors represent different rates. }\label{Fig_RDP_Gaussian}
\end{figure}

Figure~\ref{Fig_RDP_Gaussian} illustrates Theorem~\ref{Thm_RDP_optimal_gaussian} by plotting the theoretical RDP surface alongside the empirical $(R, D, P)$ triplets achieved by our method. For this visualization, we omit entropy coding and directly compute the coding rate of the RCC algorithm. It can be observed that the empirical points, generated by iteratively simulating \eqref{Discretized_SS_ODE} for various values of $\rho$ and $t$, align closely with the theoretical RDP curves.


Theorem~\ref{Thm_RDP_optimal_gaussian} establishes the optimality of our scheme for the scalar Gaussian case. For multivariate Gaussian sources, achievable RDP bounds can be derived by assigning dimension-specific parameters $t$ and $\rho$. According to the converse proof of the vector Gaussian RDP function \cite{RDP_vector_Gaussian_Qian2025}, achieving the information-theoretic optimum may require different $t$ and $\rho$ across dimensions based on joint optimization, e.g., generalized reverse water-filling methods. A rigorous theoretical analysis of our scheme in high-dimensional settings remains a subject for future work. Nevertheless, we demonstrate the empirical effectiveness of our method on real-world datasets in Section~\ref{Sec-Exp}.




We now detail practical algorithms for applying our method to general high-dimensional sources using pre-trained diffusion models in Algorithms \ref{Algo_enc} and \ref{Algo_dec}. As discussed in \cite{DiffC21,DiffC_Pretrain}, directly transmitting $Z_t \sim \sqrt{\bar{\alpha}_t} X + \sqrt{1 - \bar{\alpha}_t} N$, $N\sim\Normalnd$ via RCC is challenging due to the intractability of $p_{Z_t}$ for complex sources $X$. Instead, we approximate the conditional distributions $p_{Z_k|Z_{k+1}}$ using a pre-trained diffusion model and progressively transmit $Z_k$ conditioned on $Z_{k+1}$ and $X$.

Note that we use the PFR algorithm for illustration in Algorithms \ref{Algo_enc} and \ref{Algo_dec}. Other RCC algorithms (including sample-based methods and dithered quantization) can also be employed, provided that samples following the distribution of $\sqrt{\bar{\alpha}_t} X + \sqrt{1 - \bar{\alpha}_t} N$ can be generated or approximated.

\begin{remark}
    Given all $(D, P)$ pairs associated with a fixed rate (e.g., one dashed line in Figure \ref{Fig_RDP_Gaussian}(b)), we can fix the time index $t$ at the encoder and adapt only the score-scaling parameter $\rho$ in the decoder to meet all these DP constraints. This provides great operational flexibility, as the encoder can be designed and deployed without prior knowledge of the specific DP requirements.
    
\end{remark}

\vspace{-0.5cm}
\begin{center}
\begin{minipage}[b]{0.46\textwidth}
\begin{algorithm}[H]
\caption{Encoder}\label{Algo_enc}
\begin{algorithmic}[1]
\STATE \textbf{Input:} Pre-trained model $p_\theta(\mathbf{z}_k|\mathbf{z}_{k+1})$, target time index $t\in\{1, 2, \ldots, T\}$, source data $\bfx$
\STATE $\mathbf{z}_T\leftarrow$ An instance following $p_{Z_T}=\Normalnd$
\FOR{$k = T\!-\!1, \dots, t\!+\!1, t$}
    \STATE \textcolor{gray}{$\triangleright$ Send $\mathbf{z}_k \sim p_{Z_k|Z_{k\!+\!1},X}(\cdot|\mathbf{z}_{k\!+\!1}, \mathbf{x})$ using $p_\theta(\mathbf{z}_k|\mathbf{z}_{k\!+\!1})$. Here we use PFR as an illustration.}
    \STATE $W_1, W_2,\cdots \sim\text{Exp}(1)$, $~S_n=\sum_{i=1}^n W_i$
    \STATE $\bar{\mathbf z}_k^{(1)}, \bar{\mathbf z}_k^{(2)}, \cdots \sim p_{\theta}(\mathbf z_k|\mathbf z_{k+1})$
    \STATE  $c_k = \arg\min_{n\in \mathbb{N}_{+}} S_n \frac{p_{\theta}(\bar{\mathbf z}^{(n)}_{k}|\mathbf z_{k\!+\!1})}{p_{Z_{k}|Z_{k\!+\!1},X}(\bar{\mathbf z}^{(n)}_{k}|\mathbf z_{k\!+\!1}, \mathbf x)}$
    \STATE \textbf{Output:} Entropy code and send $c_k$ to Decoder. 
\ENDFOR
\end{algorithmic}
\end{algorithm}
\end{minipage}
\hfill
\begin{minipage}[b]{0.46\textwidth}
\begin{algorithm}[H]
\caption{Decoder}\label{Algo_dec}

\begin{algorithmic}[1]
\STATE \textbf{Input:} Pre-trained diffusion model $p_\theta(\mathbf{z}_k|\mathbf{z}_{k+1})$, $t\in\{1, 2, \ldots, T\}$, score-scaling parameter $\rho$
\STATE $\mathbf{z}_T\leftarrow$ An instance following $p_{Z_T}=\Normalnd$ \textcolor{gray}{$\triangleright$ Using shared seed.}
\FOR{$k = T\!-\!1, \dots, t\!+\!1, t$}
    \STATE $\bar{\mathbf z}_k^{(1)}, \bar{\mathbf z}_k^{(2)}, \cdots \sim p_{\theta}(\mathbf z_k|\mathbf z_{k+1})$ \textcolor{gray}{$\triangleright$ Using shared seed.}
    \STATE $\bfz_k = \bar{\mathbf z}_k^{(c_k)}$ after receiving $c_k$
\ENDFOR
\STATE $\hat{\bfx}^\rho_t\leftarrow$ Simulate \eqref{Discretized_SS_ODE} given $Z_t=\bfz_t$ with chosen $\rho$.
\STATE \textbf{Output:} $\hat{\bfx}^\rho_t$
\end{algorithmic}
\end{algorithm}
\end{minipage}
\end{center}

\section{Experimental Results} \label{Sec-Exp}

In this section, we demonstrate the flexibility and effectiveness of the proposed framework through experiments conducted on high-dimensional, real-world datasets.

\subsection{CIFAR-10 Dataset}
We begin with the CIFAR-10 dataset to validate our theoretical findings and illustrate the effectiveness of adjusting both the diffusion time index $t$ in the PFR algorithm and the score-scaling parameter $\rho$ in the proposed PF-ODE iterations (\ref{Discretized_SS_ODE}). We directly employ a pre-trained diffusion model from a third-party repository\footnote{\url{https://github.com/w86763777/pytorch-ddpm}}, which is a PyTorch implementation following the details in \cite{DDPM}.

\textbf{Baselines:} We benchmark our method against two traditional codecs, JPEG and BPG, and a posterior sampling-based diffusion compression method, PSC \cite{PSCompression_Elata_25}. PSC utilizes adaptive compressed sensing with a pre-trained diffusion model for flexible-rate compression. Notably, both our method and PSC share the same diffusion model backbone, ensuring a fair comparison.

\begin{figure*}[!t]
    \centering
    \includegraphics[width=0.92\textwidth]{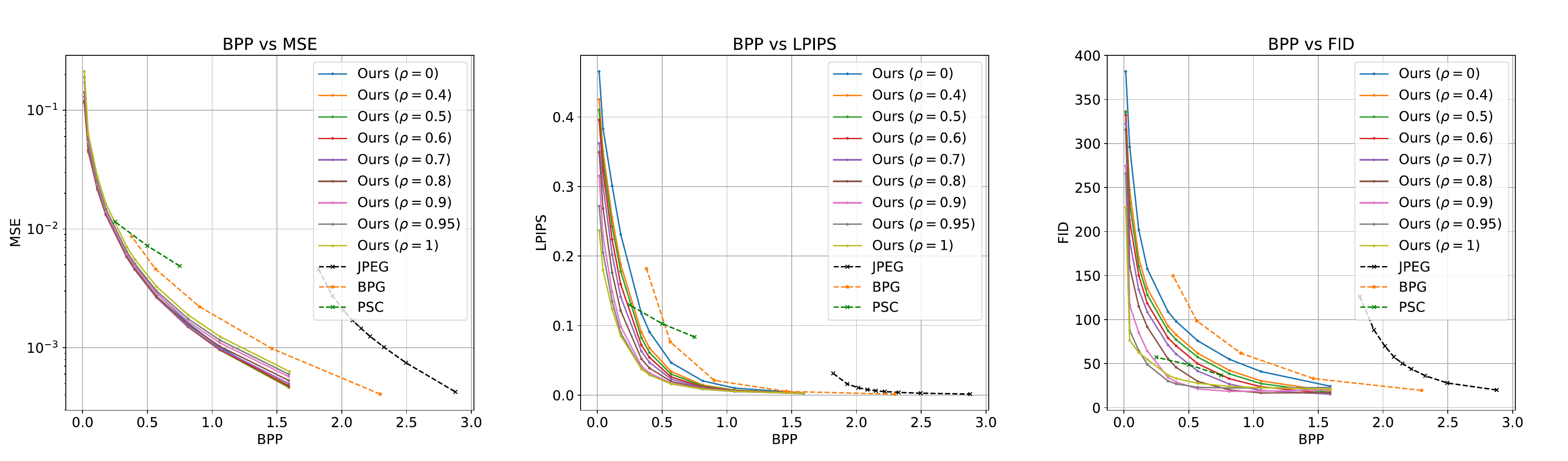}
    \caption{Effects of controlling $t$ and $\rho$ on different metrics for the CIFAR-10 dataset. Distortion is quantified by MSE, and perception is measured by LPIPS and FID.}\label{Fig_CIFAR10_metrics}
    \vspace{-0.5cm}
\end{figure*}

\begin{figure}[!t]
    \centering
    \includegraphics[width=0.78\textwidth]{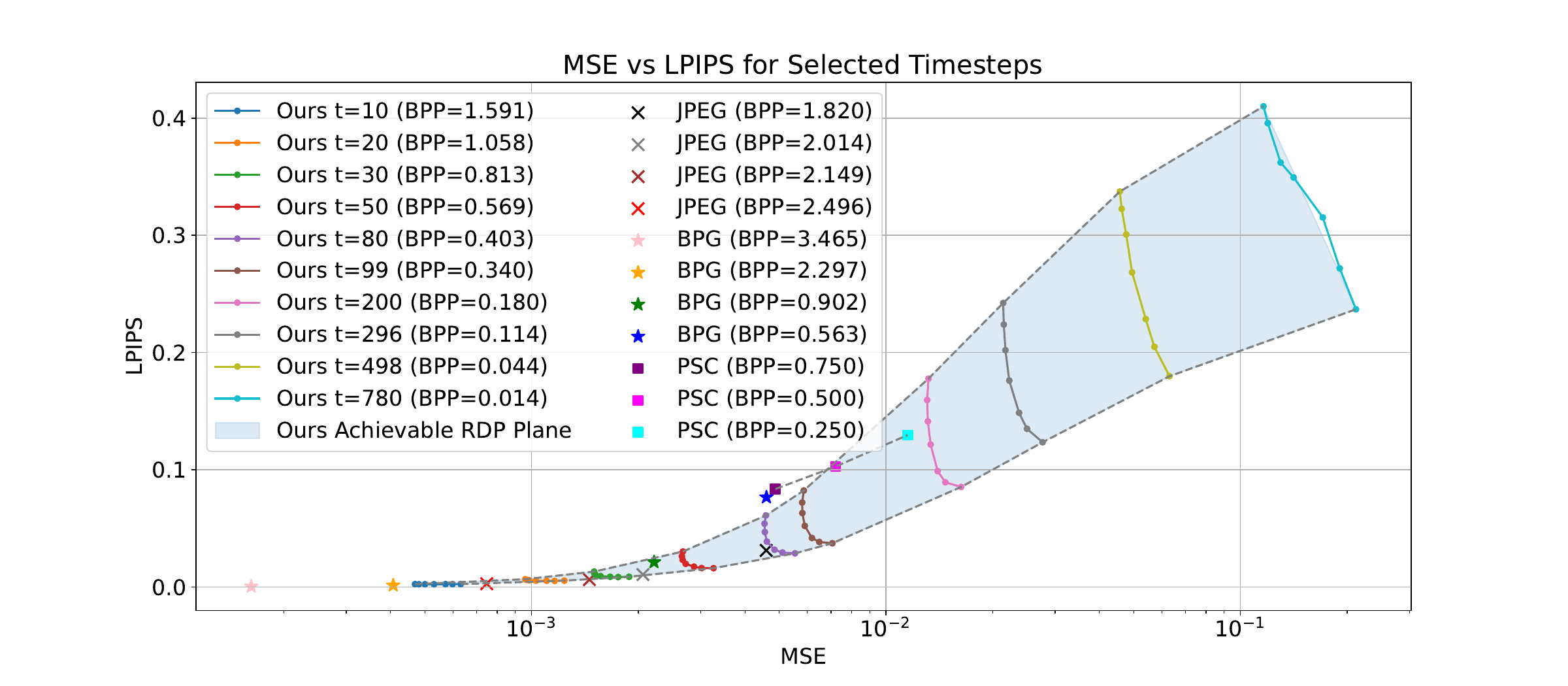}
    \vspace{-0.4cm}
    \caption{Rate-distortion-perception curves on the CIFAR-10 dataset. Distortion levels are quantified by MSE and perception levels are measured by LPIPS.}\label{Fig_CIFAR10_RDP_MSE_LPIPS}
    \vspace{-0.5cm}
\end{figure}



\textbf{Effect of RDP Control Parameters $t$ and $\rho$:} Figure~\ref{Fig_CIFAR10_metrics} illustrates the impact of varying $t$ and $\rho$ on distortion and perception metrics. As expected, for a fixed $t$, increasing $\rho$ improves perceptual metrics (lower LPIPS and FID) at the expense of higher distortion (increased MSE), aligning with our theoretical predictions. Meanwhile, decreasing $t$ results in a higher bitrate (BPP), leading to improvements in both distortion and perception metrics. Our scheme achieves lower distortion and superior perceptual quality compared to JPEG, BPG, and PSC at comparable bitrates.

\textbf{Achievable RDP Regions:} We also plot the RDP regions in Figure \ref{Fig_CIFAR10_RDP_MSE_LPIPS} by varying $t$ and $\rho$. The curves are constructed by connecting points with the same $t$ but different $\rho$. They are convex and shift toward the lower-left as $t$ decreases, indicating improved tradeoffs at higher bitrates. These observations confirm the theoretical analysis. Our scheme provides \emph{full control} over rate, distortion, and perception using a single pre-trained model, whereas PSC only supports adaptive rate control. 

Experimental details and additional results (e.g., PSNR-FID tradeoffs), as well as visual illustrations, are provided in Appendix~\ref{App_CIFAR10_Exp}.

\subsection{Kodak and DIV2K Datasets}
In this subsection, we further evaluate our method on the Kodak and DIV2K datasets \cite{Div2k}. The Kodak dataset consists of 24 high-quality $768 \times 512$ images, while the DIV2K validation set contains 100 high-resolution images. We aggregate both datasets to form our comprehensive test set.

\textbf{Pre-trained Diffusion Models:} We utilize various open-source diffusion models, including multiple versions of Stable Diffusion (SD) \cite{StableDiffusion} and Flux \cite{Flux}. We adopt the CUDA-accelerated PFR algorithm from \cite{DiffC_Pretrain}. In Appendix~\ref{App-subsec-KD-more-results}, we provide a detailed comparison of the compression and reconstruction performances across different models (e.g., SD-1.5, SD-2.1, SD-XL, and Flux).

Notably, both Stable Diffusion and Flux are \emph{latent} diffusion models, which operate in the latent space of a pre-trained autoencoder. Although our theoretical results are derived for the original source space, experimental results demonstrate that our scheme remains effective in the latent space. Specifically, it can flexibly navigate the RDP surface in the image space by adjusting the time index $t$ and the score-scaling parameter $\rho$ in the latent space. The influence of these two parameters on RDP metrics is consistent with the trends observed in the CIFAR10 dataset. Corresponding figures and tables are provided in Appendix~\ref{App-subsec-KD-more-results}.

\textbf{Baselines:} We compare our method against several baselines. 
 For HiFiC \cite{mentzer2020HiFiC}, CDC \cite{yang2023_ConditionalDiffusionC}, and DLF \cite{xue2025_DLF}, separate models trained with different weighted objectives are required to operate at different points on the RDP plane. We use their official implementations and released pretrained models. StableCodec \cite{Zhang2025_StableCodec}, OneDC \cite{xue2025_OneDC}, and RDEIC \cite{li2025_RDEIC_TCSVT} leverage pretrained diffusion models but require additional trained modules or adapters that must be retrained or tuned for each RDP operating point. For each method, we use the official implementation and corresponding pretrained modules, with the same diffusion backbone when applicable. DDCM \cite{DDCM_compression} and original DiffC \cite{DiffC21,DiffC_Pretrain} use pretrained diffusion models and can adapt to different rates without retraining, but they lack a mechanism to control the DP tradeoff. Note that the original DiffC is a special case of our framework at $\rho=1$.



\begin{figure*}[!t]
    \centering
    \includegraphics[width=1\textwidth]{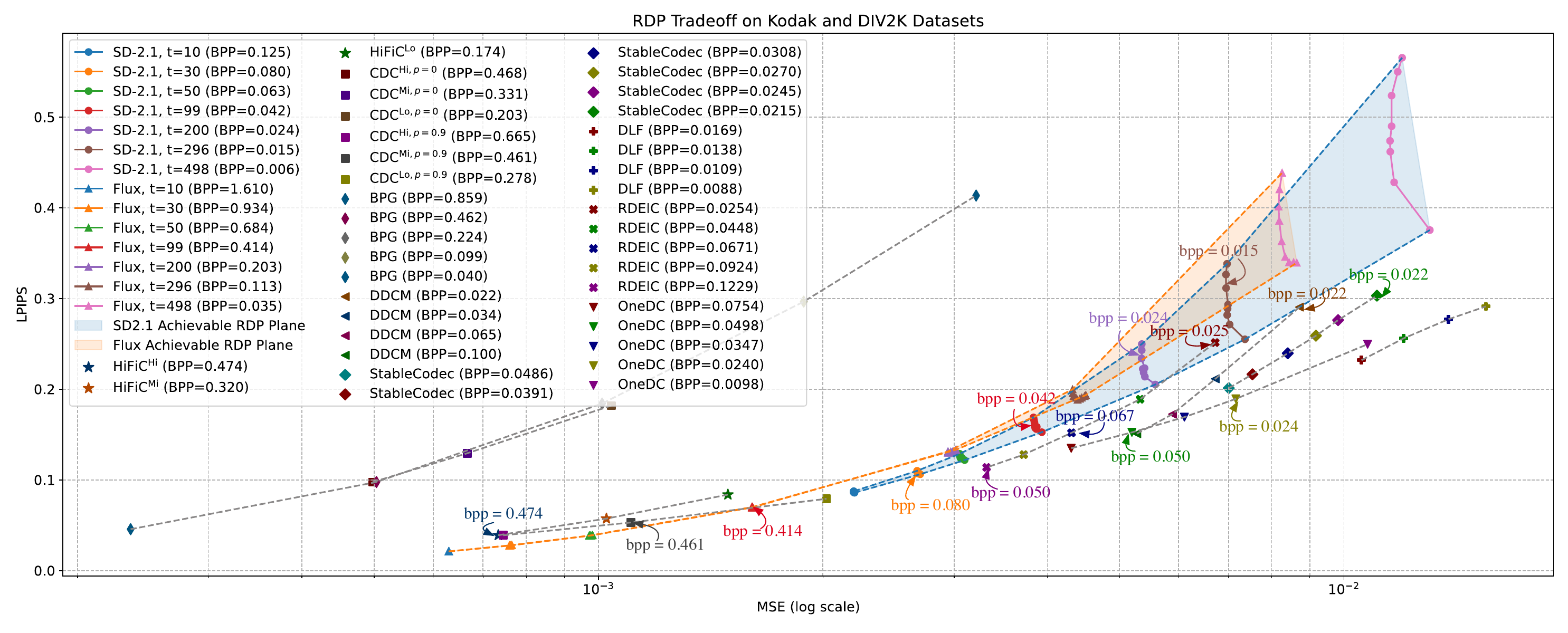}
    \caption{RDP tradeoff traversed by our proposed scheme on the Kodak and DIV2K datasets. We show the results obtained with Stable Diffusion (SD) 2.1 and the Flux model, respectively. More tradeoffs measured in different metrics (e.g., PSNR and FID) can be found in Appendix~\ref{App-subsec-KD-more-results}.}\label{Fig_KD_RDP}
    \vspace{-0.2cm}
\end{figure*}

\textbf{Achievable RDP Regions:} Figure \ref{Fig_KD_RDP} visualizes the RDP tradeoff traversed by our proposed scheme on the Kodak and DIV2K datasets.
As $t$ decreases, the RDP curves shift downward and shrink, indicating that the tension between distortion and perception is more pronounced at lower bitrates. At higher bitrates, reconstructions converge toward the original images, achieving both low distortion and high perceptual quality. Note that the diffusion process is applied in the latent space, thus the supported $\rho$ ranges may differ from the $[0,1]$ interval defined in Theorem~\ref{Thm_Optimal_DP}. 

HiFiC and CDC occasionally outperform in one metric due to their targeted training losses, and DDCM achieves lower LPIPS at the cost of much higher BPP. However, these methods often suffer from performance degradation in other dimensions and lack the flexibility to navigate the RDP tradeoff dynamically.

\textbf{Model Sizes and Latencies: } We report the model sizes and encoding/decoding latencies of different methods in Table~\ref{tab:method_comparison}. The encoding/decoding time is measured on Kodak dataset. Note that the running time of the proposed scheme is the same as DiffC \cite{DiffC_Pretrain}. Although it is slower than some lightweight models like HiFiC, it remains acceptable. Meanwhile, our scheme is compatible with any RCC coding method. Thus, the encoding time can be reduced by using more efficient RCC coding methods. The decoding time can also be reduced by employing improved diffusion model sampling methods.

Furthermore, our framework is \emph{training-free}. With a single pre-trained model, we can cover a wide range of RDP tradeoffs, thereby saving significant training time and model storage costs. For example, to cover 10 different bitrates and 5 different distortion-perception tradeoffs, HiFiC or CDC would need to train and store 50 distinct models, resulting in a storage cost several times larger than that of our method.

\begin{table*}[!t]
\centering
\caption{Comparison of different methods in terms of model size and latency on Kodak dataset. Both our approach and DDCM utilize SD-2.1 as the backbone diffusion model. Notably, our framework is \emph{training-free}. By leveraging a single pre-trained model, we can traverse a wide range of RDP tradeoffs, thereby significantly reducing training time and model storage requirements.}
\label{tab:method_comparison}
\begin{tabular}{
    l 
    r 
    r 
    r 
    r 
}
\toprule
 & \textbf{Param.} & \textbf{Enc. (s)} & \textbf{Dec. (s)} & \textbf{Total (s)} \\
\midrule
HiFiC\cite{mentzer2020HiFiC}        & 181M    & 0.67  & 1.53  & 2.20  \\
CDC\cite{yang2023_ConditionalDiffusionC}          & 53.8M   & 0.07  & 3.25  & 3.32  \\
OneDC\cite{xue2025_OneDC}         & 1.4B    & 0.31  & 0.43  & 0.74  \\
StableCodec\cite{Zhang2025_StableCodec}   & 1.1B    & 0.27  & 0.47  & 0.74  \\
Ours & 950M    & 0.22-9.14 & 0.33-2.10 & 2.31-9.47 \\
DDCM\cite{DDCM_compression} & 950M    & 37.76 & 37.91 & 75.67 \\
\bottomrule
\end{tabular}
\end{table*}

\begin{figure*}[!t]
    \centering
    \includegraphics[width=1\textwidth]{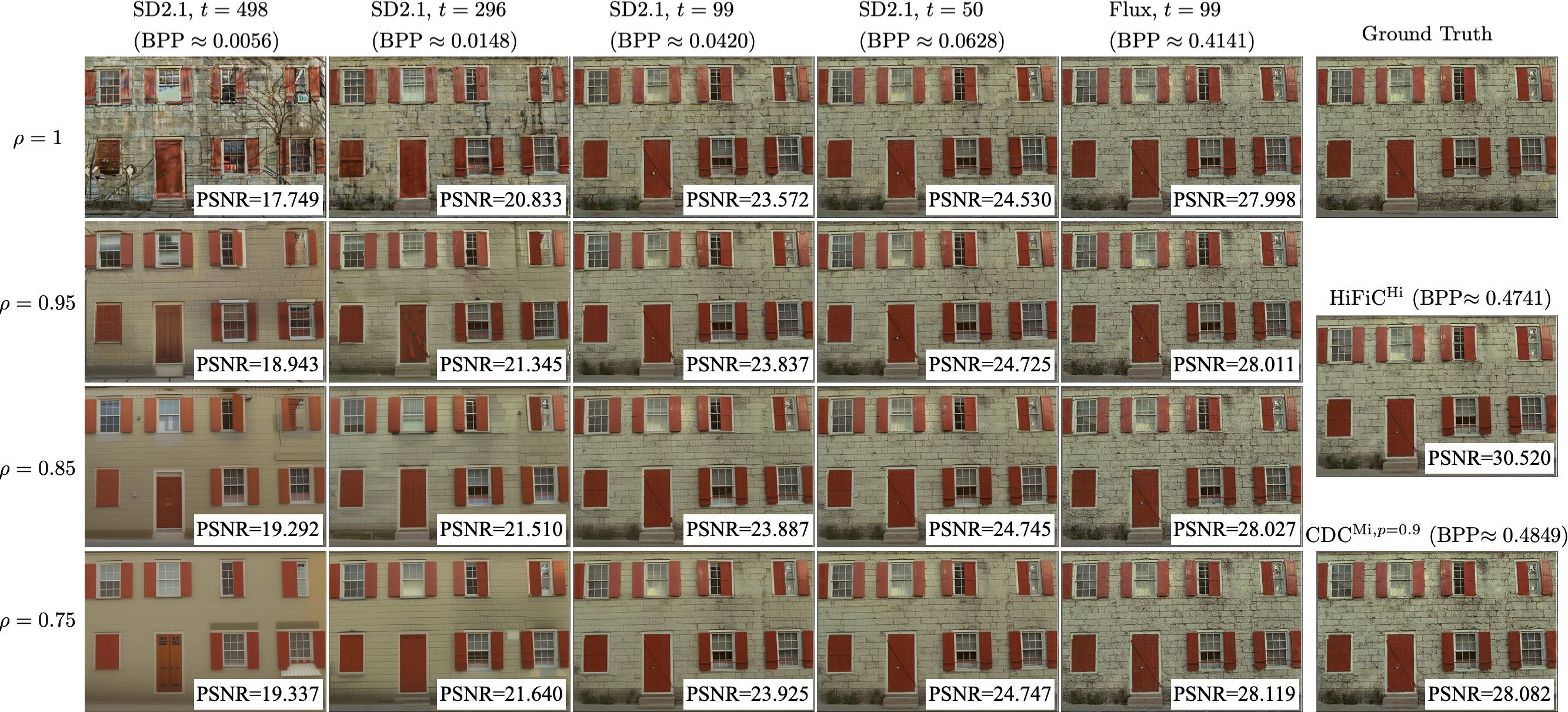}
    \caption{Samples from HiFiC, CDC, and our proposed schemes with different $\rho$ and $t$.}\label{Fig_Samples_Kodak}
    \vspace{-0.3cm}
\end{figure*}

\textbf{Visualizations:} We provide visual examples under various settings in Figure \ref{Fig_Samples_Kodak} to illustrate the RDP tradeoff. \emph{In the low-BPP regime}, a high $\rho$ yields perceptually pleasing images with vivid colors and sharp edges; however, the details may be unfaithful to the original source. Lowering $\rho$ suppresses hallucinated details, producing reconstructions that are smoother with reduced distortion. \emph{At higher bitrates}, reconstructions become both sharp and faithful across all $\rho$ values. More samples can be found in Appendix \ref{App-subsec-KD-more-results}. We also provide a demo website\footnote{\url{https://diffrdp.github.io/}} to compare the details of \emph{high-resolution images} across different $t$ and $\rho$.

We include the experimental details with suggested choices of $\rho$ in Appendix \ref{App_subsec-Exp-details}. Meanwhile, additional experimental results, including more perceptual metrics (e.g., FID) and more visualizations can be found in Appendix \ref{App-subsec-KD-more-results}.

In summary, our method provides an efficient way to control RDP tradeoff with one pre-trained model and two parameters $t$ and $\rho$. In practice, users can select a suitable bitrate according to resource restrictions and adjust the DP balance according to their specific needs without retraining.

\section{Discussions}
\textbf{On Gaussian Assumption:} Our theoretical optimality analysis is based on Gaussian assumptions, which may not hold for complex real-world data. However, analyzing the Gaussian case is the standard, necessary first step to establish theoretical optimality and provides a principled derivation for why the proposed score scaling controls the tradeoff.

As \cite{RDP_vector_Gaussian_Qian2025} demonstrated, achieving strict RDP optimality for non-isotropic vector-Gaussian sources requires generalized reverse water-filling. Within our framework, this corresponds to assigning dimension-specific noise levels ($t$) at the encoder and varying scaling parameters ($\rho$) at the decoder. Implementing exact water-filling dimension-by-dimension in practical high-resolution diffusion models is computationally prohibitive. Thus, designing efficient approximations and analyzing their theoretical properties remains an interesting direction for future work.

\textbf{On Latent Diffusion: } While our theoretical results are derived in the original source space, our experiments on the Kodak and DIV2K datasets are conducted within the latent space of pre-trained autoencoders. In practice, latent diffusion models (such as SD2.1) use a KL-regularized latent space to approximate a standard Gaussian \cite{StableDiffusion}, bridging our theory with empirical practice. 

Statistical properties in the latent space correlate with those in pixel space. Thus, the effectiveness of our method in the latent space can be attributed to the fact that preserving the distribution or reconstructing the mean in the latent space effectively achieves the same goals in the pixel space. Still, a rigorous theoretical analysis of our scheme operating explicitly within the latent space remains an open question.

\textbf{On Model Size and Latency:} Currently, our method is slower than some lightweight models or one-step diffusion distillation methods, but it remains acceptable. The comparison highlights a fundamental tradeoff in current generative compression: \textbf{highly optimized, task-specific distillation} (e.g., StableCodec and OneDC) versus \textbf{training-free, universal flexibility} (Ours). We believe both directions are highly valuable. Moreover, their effective storage cost is much higher if flexibility is required. For example, to cover the same wide range of bitrates and DP tradeoffs that our single model traverses, one would need to retrain and store dozens of separate modules ($N \times 492\text{M}$ for OneDC, and $N \times 109\text{M}$ for StableCodec, where $N$ is the number of RDP points to be traversed), making their practical deployment size prohibitively large for adaptive streaming scenarios.

Furthermore, our framework's latency can be mitigated through both engineering and methodological improvements. Because it is derived from the reverse ODE and remains agnostic to the specific solver or RCC method employed, it can seamlessly integrate future advancements. Adopting advanced few-step ODE solvers (e.g., DPM-Solver), diffusion distillation techniques, or highly optimized RCC implementations will directly accelerate our method. We leave the design of efficient solvers and RCC methods for our framework as future work.

\section{Conclusions} \label{Sec-Conclusions}
In this paper, we introduced a training-free framework to traverse the complete RDP tradeoff in lossy compression, leveraging the RCC module and the proposed score-scaled PF-ODE decoder. Our theoretical analysis established the optimality of the diffusion decoder for DP tradeoffs in the multivariate Gaussian case, and of the full framework for the RDP function in scalar Gaussian settings. Empirical evaluations on the real-world datasets validated the method's ability to flexibly and effectively balance bitrate, distortion, and perceptual quality using pre-trained diffusion models. This work offers a practical and theoretically principled solution for adaptive compression across the entire RDP surface, using a single pre-trained diffusion model.


{\appendices

\section{On RDP and DP Tradeoffs}\label{App_Sec_Compare_DP_RDP}

\subsection{Rate-Distortion-Perception Tradeoff in Lossy Compression} \label{sub-sec-Backg_RDPF}

\textbf{Information RDP function:} Mathematically, the information RDP function \cite{RethinkingRDP} for a source $X\sim p_X$ is defined as 
\begin{align*}
    R(D,P) = &\min_{p_{\hat{X}|X}}  I(X; \hat{X})\\
    &\text{s.t.} ~ \E[\Delta (X, \hat{X})] \leq D, ~~~ d(p_X, p_{\hat{X}}) \leq P, 
\end{align*}
where $\Delta:\mathcal{X}\times\hat{\mathcal{X}}\to\mathbb{R}^+$ is a data distortion measure (e.g., square-error), and $d(\cdot, \cdot)$ is a divergence between probability distributions, such as total variation (TV) divergence or Wasserstein-2 (W2) distance \cite{Invitation_Wasserstein_space}. From an information-theoretic perspective, $R(D,P)$ serves as a \emph{lower bound} on the one-shot achievable rate \cite{CodingTheorem_RDP_Theis2021} when unlimited common randomness is shared between encoder and decoder. Converse and achievability results have been established under various assumptions on shared randomness \cite{wagner2022_RDP_Randomness} and realism constraints \cite{OnRDP_Coding_Chen2022, RDP_role_private_randomness_Hamdi2024, rate_distortion_perception_conditional_percep}. Readers may refer to \cite{RDPSurvey_Niu25} for a comprehensive survey on the RDP function.

Closed-form expressions of $R(D,P)$ have been derived for binary sources with Hamming distortion and TV divergence \cite{RethinkingRDP}, as well as scalar Gaussian sources under MSE distortion and W2 distance \cite{UniversalRDPs}. For multivariate Gaussian cases, \cite{RDP_vector_Gaussian_Qian2025} solved the RDP function via an extended reverse water-filling algorithm. In this paper, we focus on the practical design to traverse the RDP function for general sources.

\textbf{Universal RDP function:} \cite{UniversalRDPs} further reveal the potential to fix an encoder and only adapt the decoder to meet multiple distortion-perception pairs $(D,P) \in \Theta$. For example, $\Theta$ could be the set of all $(D, P)$ pairs associated with a given rate along the information RDP function. \cite{UniversalRDPs} demonstrated that for the scalar Gaussian distribution, the theoretical rate required by a fixed encoder to achieve all $(D,P)$ pairs in $\Theta$ is exactly $\sup_{(D,P)\in\Theta}R(D,P)$, where $R(D,P)$ is the information RDP function defined in \eqref{Eq_I-RDP}.

\subsection{Distortion-perception Tradeoff in Image Restoration} A related problem to the RDP function is the distortion-perception (DP) tradeoff in image restoration \cite{PD-tradeoff,DP-tradeoff_Wasserstein_Freirich2021}. Given a noisy observation $Y$ of the source $X$, the DP function is defined as
\begin{align}
    D &= \min_{p_{\hat{X}|Y}} \E[\Delta (X, \hat{X})] \quad \text{s.t.} \quad d(p_X, p_{\hat{X}}) \leq P. \label{DP_tradeoff_General}
\end{align}
Note that the denoising problem is fundamentally different from the lossy compression problem, as the observation is fixed and there is no rate constraint.

Regarding the closed-form expressions in \eqref{DP_tradeoff_General}, \cite{PD-tradeoff} derived the DP function for scalar Gaussian sources under KL divergence and MSE distortion, assuming the estimator is linear. \cite{DP-tradeoff_Wasserstein_Freirich2021} studied multivariate Gaussian sources under W2 distance and MSE distortion, deriving the optimal estimator and the corresponding DP function. \cite{ScoreDP_25_CVPR} considered a \emph{conditional} DP tradeoff given a fixed observation, assuming the source follows certain conditional distributions; in contrast, the proposed score-scaled ODE module in this paper considers the \emph{unconditional} DP tradeoff, where the stochasticity introduced by the RCC encoder is explicitly accounted for.

Towards traversing the DP tradeoff in general inverse problems, \cite{PSCGAN} proposed a posterior sampling method based on conditional GANs. The balance between distortion and perception can be controlled by adjusting the noise variance input to the generator or by averaging multiple posterior samples. \cite{DP-tradeoff_Wasserstein_Freirich2021} investigated the DP tradeoff in the Wasserstein space and proved that the optimal estimators can be constructed by linear combinations of two extremes: the MMSE estimator and a perfect perception sampler with minimum distortion. However, acquiring these idealized models in practice is often challenging. \cite{ScoreDP_25_CVPR} also considered a diffusion-based posterior sampling method to traverse the \emph{conditional} DP tradeoff, but those results cannot be directly applied to the \emph{unconditional} DP tradeoff and the lossy compression problem studied in this work.

Note that there are fundamental differences between the DP tradeoff in image restoration and the RDP tradeoff in lossy compression. Some optimality results for the DP tradeoff do not directly extend to the RDP tradeoff due to the presence of rate constraints—for example, the closed-form expressions for multivariate Gaussian sources.

In Table~\ref{Comparison_scheme_DP_RDP}, we present a comparison of related works concerning image restoration and lossy compression problems, focusing on DP and RDP tradeoffs. Note that  our ODE decoder offers an advantage in flexibility in the lossy compression framework: a single model can adaptively operate across all varying noise levels and solve a series of denoising problems.

\begin{table*}[]
    \centering
    \small
    \caption{Comparison of related schemes in image restoration and lossy compression problems. Here the rate or DP control means the ability to adjust the rate or DP tradeoff with a single pre-trained model.}\label{Comparison_scheme_DP_RDP}
\begin{tabular}{l|llll}
\toprule
     &       & \begin{tabular}[c]{@{}l@{}}Rate\\ Control\end{tabular} & \begin{tabular}[c]{@{}l@{}}DP\\ Control\end{tabular} & \begin{tabular}[c]{@{}l@{}}Proved Optimality\\ in Gaussian\end{tabular} 
     \\ \midrule
\multirow{4}{*}{\begin{tabular}[c]{@{}l@{}}Image\\ Restoration\end{tabular}}  & \cite{PSCGAN}     & /   & \cmark   & \xmark   \\ 
    & \cite{ScoreDP_25_CVPR}  & /  & \cmark  & \cmark    \\ 
    & \cite{DP-tradeoff_Wasserstein_Freirich2021} & /     & \cmark   & \cmark     \\ 
    & Our ODE decoder & /     & \cmark   & \cmark     \\ 
\midrule
\multirow{5}{*}{\begin{tabular}[c]{@{}l@{}}Lossy\\ Compression\end{tabular}} & HiFiC\cite{mentzer2020HiFiC}  & \xmark   & \xmark   & \xmark \\ 
    & CDC\cite{yang2023_ConditionalDiffusionC}    & \xmark   & \xmark     & \xmark  \\ 
    & DDCM\cite{DDCM_compression}    & \cmark     & \xmark     & \xmark    \\ 
    & DiffC\cite{DiffC21}     & \cmark    & \xmark     & at perfect realism   \\ 
    & Ours  & \cmark     & \cmark   & scalar Gaussian  \\ 
\bottomrule
\end{tabular}
\vspace{-0.4cm}
\end{table*}

\section{Discretization of Score-Scaled PF-ODE} \label{App_Subsec_Discretization_SS_ODE}
Consider the following time-reversed SDE starting from a particular time index $t\in[0, T_c]$, which is the proposed score-scaled PF-ODE in \eqref{SS_General_ODE} with an additional scaled Brownian-motion term:
\begin{align}
    d\overleftarrow{Z}_\tau = \Big[-\!\frac{1}{2}\beta(\tau)\overleftarrow{Z}_\tau-\frac{1}{2}(2\!-\!\rho)\beta(\tau)\nabla\log p_{Z_\tau}(\overleftarrow{Z}_\tau)\Big]d\tau + \sqrt{\lambda\beta(\tau)} dW_{\tau},~ \overleftarrow{Z}_t\sim\sqrt{\baralpha_t}X \!+\! \sqrt{1-\baralpha_t} N.\label{Eq_General_SDE}
\end{align}
To analyze the probabilistic behavior of the proposed score-scaled PF-ODE in \eqref{SS_General_ODE}, we can study the above SDE at $\lambda\to 0$. By Euler-Maruyama discretization \cite{AppliedSDE_Book}, we can divide the whole time interval $[0, T_c]$ into $T$ equal segments with step size $\Delta\tau = \frac{T_c}{T}, \tau = k\Delta\tau$. The noise schedule in the discrete time space is given by $\beta_k = \beta(k\Delta\tau)\Delta\tau$ for $k=0, 1, \cdots, T$. Following the similar approximation with \cite[Appendix E]{song2021scorebased}, the discretized version of the above SDE can be written as
\begin{align*}
    Z_k = & (2-\sqrt{1-\beta_{k+1}})Z_{k+1} + \frac{1}{2}(2\!-\!\rho)\beta_{k+1}\nabla\log p_{Z_{k\!+\!1}}(Z_{k+1})+\sqrt{\lambda\beta_{k+1}}\epsilon\\
    \overset{(a)}{\approx} & \big(2-(1-\frac{1}{2}\beta_{k+1})\big)Z_{k+1}+\frac{1}{2}(2-\rho)\beta_{k+1}\nabla\log p_{Z_{k\!+\!1}}(Z_{k+1})+\sqrt{\lambda\beta_{k+1}}\epsilon\\
    \overset{(b)}{\approx} & (1+\frac{1}{2}\beta_{k+1})Z_{k+1}+\frac{1}{2}(2-\rho)\beta_{k+1}\nabla\log p_{Z_{k\!+\!1}}(Z_{k+1})+\frac{1}{2}(2-\rho)\beta_{k+1}^2\nabla\log p_{Z_{k\!+\!1}}(Z_{k+1}) +\sqrt{\lambda\beta_{k+1}}\epsilon\\
    = & (1+\frac{1}{2}\beta_{k+1})\Big(Z_{k+1}+\frac{1}{2}(2-\rho)\beta_{k+1}\nabla\log p_{Z_{k\!+\!1}}(Z_{k+1})\Big)+\sqrt{\lambda\beta_{k+1}}\epsilon\\
    \overset{(c)}{\approx}&\frac{1}{\sqrt{1-\beta_{k+1}}}\Big(Z_{k+1}+\frac{1}{2}(2-\rho)\beta_{k+1}\nabla\log p_{Z_{k\!+\!1}}(Z_{k+1})\Big)+\sqrt{\lambda\beta_{k+1}}\epsilon,
\end{align*}
where $(a)$ and $(c)$ are due to the equivalent infinitesimal $\sqrt{1-\beta_k} \approx 1-\frac{1}{2}{\beta_k}$ and $\frac{1}{\sqrt{1-\beta_k}} \approx 1+\frac{1}{2}{\beta_k}$, and $(b)$ is given by eliminating the $\mathcal{O}(\beta_k^2)$ term when $\Delta\tau\to 0$. Here $\epsilon\sim\mathcal{N}(\mathbf{0}, \I)$.

\section{Proof of Lemma \ref{Lem_end_of_SSODE}} \label{App_Proof_Lem_End_SSODE}

Consider the source $X\sim \mathcal{N}(\boldsymbol\mu_0,\boldsymbol\Sigma_0)$, and the decoder receives $Z_t = \sqrt{\bar{\alpha}_t}X+\sqrt{1-\bar{\alpha}_t}N, ~N\sim\mathcal{N}(\boldsymbol 0, \mathbf I)$, which has distribution $p_{Z_t}(\bfz_t) = \mathcal{N}(\sqrt{\bar{\alpha}_t}\boldsymbol\mu_0, \bar{\alpha}_t\boldsymbol\Sigma_0+(1-\bar{\alpha}_t)\mathbf I)$. Denote the marginal distribution of $Z_k$ for $k=t, t-1, \cdots, 0$ as 
\begin{align*}
    p_{Z_k}(\bfz_k) = \mathcal{N}(\underbrace{\sqrt{\bar{\alpha}_k}\boldsymbol\mu_0}_{:=\boldsymbol\mu_k}, \underbrace{\bar{\alpha}_k\boldsymbol\Sigma_0+(1-\bar{\alpha}_k)\mathbf I}_{:=\boldsymbol\Sigma_k}), 
\end{align*}
then $\nabla_{\bfz_k}\log p_{Z_k}(\mathbf z_{k}) = \boldsymbol\Sigma_k^{-1}(\boldsymbol\mu_k-\mathbf z_{k})$. For $k=t-1, \ldots, 0$, the discretized time-reverse score-scaled SDE (\ref{Eq_General_SDE}) provides us with
\begin{align}
    \bfz_k &=\frac{1}{\sqrt{1-\beta_{k+1}}}\Big(\bfz_{k+1}+\frac{1}{2}(2-\rho)\beta_{k+1}\nabla_{\bfz_{k\!+\!1}}\log p_{Z_{k\!+\!1}}(\bfz_{k+1})\Big)+\sqrt{\lambda\beta_{k+1}}\epsilon \notag\\
    &=\frac{1}{\sqrt{1-\beta_{k+1}}}\Big(\bfz_{k+1}+\frac{1}{2}(2-\rho)\beta_{k+1}\boldsymbol\Sigma_{k+1}^{-1}(\boldsymbol\mu_{k+1}-\bfz_{k+1})\Big)+\sqrt{\lambda\beta_{k+1}}\epsilon \notag\\
    &=\frac{1}{\sqrt{\alpha_{k+1}}}\Big(\bfSigma_{k+1}-\frac{1}{2}(2-\rho)\beta_{k+1}\I\Big)\bfSigma_{k+1}^{-1}\bfz_{k+1} + \frac{1}{\sqrt{\alpha_{k+1}}}\frac{1}{2}(2-\rho)\beta_{k+1}\bfSigma_{k+1}^{-1}\bfmu_{k+1}+\sqrt{\lambda\beta_{k+1}}\epsilon \notag\\
    &=\sqrt{\alpha_{k+1}}\Big(\bfSigma_k+\frac{1}{2}\rho\frac{\beta_{k+1}}{\alpha_{k+1}}\I\Big)\bfSigma_{k+1}^{-1}\bfz_{k+1} + \frac{1}{2}(2-\rho)\beta_{k+1}\sqrt{\baralpha_k}\bfSigma_{k+1}^{-1}\bfmu_0 +\sqrt{\lambda\beta_{k+1}}\epsilon, \label{Eq_discret_General_SDE}
\end{align}
which implies the following conditional distribution of $Z_{k}$ given $Z_{k+1}$:
\begin{align}
    p_{Z_k|Z_{k+1}}(\bfz_{k}|\bfz_{k+1}) = \mathcal{N}(\bfUrho_{k+1}\bfz_{k+1}+\bfVrho_{k+1}\bfmu_0, ~\lambda\beta_{k+1}\I), \label{Eq_Cond_Distribution_Zk_given_Zk1}
\end{align}
where $\bfUrho_{k+1} := \sqrt{\alpha_{k+1}}\Big(\bfSigma_k+\frac{1}{2}\rho\frac{\beta_{k+1}}{\alpha_{k+1}}\I\Big)\bfSigma_{k+1}^{-1}$ and $\bfVrho_{k+1} := \frac{1}{2}(2-\rho)\beta_{k+1}\sqrt{\baralpha_k}\bfSigma_{k+1}^{-1}$ for $k=t-1, t-2 \ldots, 0$. The proposed score-scaled PF-ODE corresponds to the case $\lambda\to 0$.

\subsection{Conditional Distributions of $Z_0$ given $Z_t=\check{\bfz}_t$} \label{App_Proof_Lem_End_SSODE_Conditional}

\begin{lemma}{\cite[Section 2.3.3]{PRML}}\label{Gaussian_lem}
    Given a marginal Gaussian distribution for $X$ and a conditional Gaussian distribution for $Y$ given $X=\mathbf x$ in the form
    \begin{align*}
    p_X(\mathbf{x}) & =\mathcal{N}\left(\boldsymbol{\mu}, \boldsymbol{\Lambda}^{-1}\right), \\
    p_{Y|X}(\mathbf{y} \mid \mathbf{x}) & =\mathcal{N}\left(\mathbf{A} \mathbf{x}+\mathbf{b}, \mathbf{L}^{-1}\right),
    \end{align*}
    the marginal distribution of $Y$ and the conditional distribution of $X$ given $Y$ are given by
    \begin{align*}
    p_Y(\mathbf{y}) & =\mathcal{N}\left(\mathbf{A} \boldsymbol{\mu}+\mathbf{b}, \mathbf{L}^{-1}+\mathbf{A} \mathbf{\Lambda}^{-1} \mathbf{A}^{\top}\right) \\
    p_{X|Y}(\mathbf{x} \mid \mathbf{y}) & =\mathcal{N}\left(\boldsymbol{\Sigma}\left\{\mathbf{A}^{\top} \mathbf{L}(\mathbf{y}-\mathbf{b})+\boldsymbol{\Lambda} \boldsymbol{\mu}\right\}, \boldsymbol{\Sigma}\right),
    \end{align*}
    where
    $$
    \boldsymbol{\Sigma}=\left(\boldsymbol{\Lambda}+\mathbf{A}^{\top} \mathbf{L} \mathbf{A}\right)^{-1}.
    $$
\end{lemma}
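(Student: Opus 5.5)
The plan is to work directly with the joint density $p_{X,Y}(\mathbf{x},\mathbf{y})=p_X(\mathbf{x})\,p_{Y|X}(\mathbf{y}\mid\mathbf{x})$. Both factors are Gaussian, so the joint is Gaussian, and both claims follow from reading off quadratic forms. Throughout I take $\boldsymbol{\Lambda}$ and $\mathbf{L}$ to be symmetric positive definite, so that every precision matrix below is invertible.

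\textbf{Marginal of $Y$.} I would use the equivalent generative representation $Y=\mathbf{A}X+\mathbf{b}+E$, where $E\sim\mathcal{N}(\mathbf{0},\mathbf{L}^{-1})$ is independent of $X$. This representation has exactly the conditional law $p_{Y|X}$ and the same $X$-marginal, so it has the same joint law. Then $Y$ is an affine image of the jointly Gaussian vector $(X,E)$, hence Gaussian. Its mean is $\mathbf{A}\boldsymbol{\mu}+\mathbf{b}$ by linearity of expectation. By independence its covariance is $\mathbf{A}\,\Cov[X]\,\mathbf{A}^\top+\Cov[E]=\mathbf{A}\boldsymbol{\Lambda}^{-1}\mathbf{A}^\top+\mathbf{L}^{-1}$. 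This gives the first formula without any matrix inversion.

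\textbf{Conditional of $X$ given $Y$.} I would write
\begin{align*}
\log p_{X,Y}(\mathbf{x},\mathbf{y}) = -\tfrac12(\mathbf{x}-\boldsymbol{\mu})^\top\boldsymbol{\Lambda}(\mathbf{x}-\boldsymbol{\mu})-\tfrac12(\mathbf{y}-\mathbf{A}\mathbf{x}-\mathbf{b})^\top\mathbf{L}(\mathbf{y}-\mathbf{A}\mathbf{x}-\mathbf{b})+\text{const},
\end{align*}
and expand, keeping only the terms that depend on $\mathbf{x}$ with $\mathbf{y}$ held fixed. The quadratic part is $-\tfrac12\mathbf{x}^\top(\boldsymbol{\Lambda}+\mathbf{A}^\top\mathbf{L}\mathbf{A})\mathbf{x}$. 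The linear part is $\mathbf{x}^\top\big(\boldsymbol{\Lambda}\boldsymbol{\mu}+\mathbf{A}^\top\mathbf{L}(\mathbf{y}-\mathbf{b})\big)$, where symmetry of $\boldsymbol{\Lambda}$ and $\mathbf{L}$ is used to merge the cross terms.

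Since $p_{X|Y}(\mathbf{x}\mid\mathbf{y})\propto p_{X,Y}(\mathbf{x},\mathbf{y})$ as a function of $\mathbf{x}$, I would complete the square. Set $\boldsymbol{\Sigma}=(\boldsymbol{\Lambda}+\mathbf{A}^\top\mathbf{L}\mathbf{A})^{-1}$, which is positive definite as the inverse of a positive definite matrix plus a positive semidefinite one. Then $\log p_{X|Y}$ equals
\begin{align*}
-\tfrac12(\mathbf{x}-\mathbf{m})^\top\boldsymbol{\Sigma}^{-1}(\mathbf{x}-\mathbf{m})+c(\mathbf{y}),
\end{align*}
where $\mathbf{m}=\boldsymbol{\Sigma}\{\mathbf{A}^\top\mathbf{L}(\mathbf{y}-\mathbf{b})+\boldsymbol{\Lambda}\boldsymbol{\mu}\}$. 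This identifies the conditional as $\mathcal{N}(\mathbf{m},\boldsymbol{\Sigma})$, as claimed.

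\textbf{Main subtlety.} The only point needing care is the justification that a density proportional in $\mathbf{x}$ to a Gaussian kernel is exactly that Gaussian. This holds because the $\mathbf{y}$-dependent factor $c(\mathbf{y})$ is absorbed by normalization. As a consistency check, I would note that the same expansion shows the joint precision is
\begin{align*}
\begin{pmatrix}\boldsymbol{\Lambda}+\mathbf{A}^\top\mathbf{L}\mathbf{A} & -\mathbf{A}^\top\mathbf{L}\\ -\mathbf{L}\mathbf{A} & \mathbf{L}\end{pmatrix}.
\end{align*}
Its Schur-complement inverse reproduces the marginal covariance of $Y$ found above.
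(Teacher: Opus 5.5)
Your argument is correct. The paper gives no proof of its own and defers to Bishop (PRML, Section 2.3.3). His derivation writes $\log p_{X,Y}$, reads off the joint precision (the block matrix in your consistency check) and the joint mean, and inverts the precision with the partitioned-inverse formula to get the joint covariance. It then reads off $p_Y$ and $p_{X|Y}$ from the general marginal and conditional formulas for a partitioned Gaussian.

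For the conditional, your route is the same. Those general formulas are themselves derived by completing the square in one block with the other held fixed, which is exactly what you do in $\mathbf{x}$ with $\mathbf{y}$ fixed.

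For the marginal, your route is genuinely different. The representation $Y=\mathbf{A}X+\mathbf{b}+E$ gives the mean and covariance with no inversion at all. Bishop instead inverts the joint precision: the Schur complement of the $\mathbf{L}$ block is $\boldsymbol{\Lambda}$, and the $\mathbf{y}\mathbf{y}$ block of the inverse then comes out as $\mathbf{L}^{-1}+\mathbf{A}\boldsymbol{\Lambda}^{-1}\mathbf{A}^\top$. In other words, he uses your consistency check as the main argument.

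His route yields the full joint covariance, including the cross-covariance $\boldsymbol{\Lambda}^{-1}\mathbf{A}^\top$, in one computation. Yours is shorter and more general. The affine argument never uses invertibility, so it stays valid when the noise covariance (written $\mathbf{L}^{-1}$ in the statement) or the covariance of $X$ is singular; your positive-definiteness assumption is needed only for the conditional part.

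That generality matters for how the paper uses the lemma. Only the marginal formula is ever invoked, on Gaussian kernels whose covariances are proportional to $\lambda$ (e.g., $\lambda\beta_{k+1}\mathbf{I}$). These are afterwards sent to $\lambda\to 0$ to obtain the deterministic ODE map. Your argument covers the degenerate case $\lambda=0$ directly, without a limiting step.
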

Starting from a sample $Z_t=\check{\bfz}_t$, we perform the reverse process in \eqref{Eq_discret_General_SDE} and compute the conditional distribution of the reconstruction. Considering the above Lemma \ref{Gaussian_lem}, together with $p_{Z_{t-1}|Z_t=\check\bfz_{t}}(\bfz_{t-1}) = \mathcal{N}(\bfUrho_{t}\check\bfz_{t}+\bfVrho_{t}\bfmu_0, ~\lambda\beta_{t}\I)$ and $p_{Z_{t\!-\!2}|Z_{t\!-\!1}}(\bfz_{t-2}|\bfz_{t-1}) = \mathcal{N}(\bfUrho_{t-1}\bfz_{t-1}+\bfVrho_{t-1}\bfmu_0, ~\lambda\beta_{t-1}\I)$, we have
\begin{align*}
    p_{Z_{t\!-\!2}|Z_t=\check\bfz_{t}}(\bfz_{t-2}) &= \mathcal{N}\Big( \bfUrho_{t-1}\big(\bfUrho_{t}\check\bfz_{t}+\bfVrho_t\bfmu_0\big) + \bfVrho_{t-1}\bfmu_0, ~\lambda\beta_{t-1}\I + \bfUrho_{t-1}\lambda\beta_t\mathbf{U}_{t-1}^{\rho\top} \Big)\\
    &= \mathcal{N}\Big( \bfUrho_{t-1}\bfUrho_{t} \check\bfz_{t} + \big(\bfUrho_{t-1}\bfVrho_t + \bfVrho_{t-1}\big) \bfmu_0, ~ \lambda\beta_{t-1}\I + \bfUrho_{t-1}\lambda\beta_t\mathbf{U}_{t-1}^{\rho\top} \Big),
\end{align*}
where the coefficients can be further expressed as
\begin{small}
\begin{align*}
    \bfUrho_{t-1}\bfUrho_{t} &= \sqrt{\alpha_{t-1}}\Big( \bfSigma_{t-2} + \frac{1}{2}\rho\frac{\beta_{t-1}}{\alpha_{t-1}}\I \Big)\bfSigma_{t-1}^{-1} \cdot \sqrt{\alpha_{t}}\Big( \bfSigma_{t-1} + \frac{1}{2}\rho\frac{\beta_{t}}{\alpha_{t}}\I \Big)\bfSigma_{t}^{-1}\\
    &=\sqrt{\alpha_{t-1}}\sqrt{\alpha_t}\Big(\bfSigma_{t-2} + \frac{1}{2}\rho\frac{\beta_{t-1}}{\alpha_{t-1}}\I\Big)\Big(\I + \frac{1}{2}\rho\frac{\beta_{t}}{\alpha_{t}}\bfSigma_{t-1}^{-1}\Big)\bfSigma_{t}^{-1},
\end{align*}
\end{small}
and
\begin{small}
\begin{align*}
    \big(\bfUrho_{t-1}\bfVrho_t + \bfVrho_{t-1}\big) &= \sqrt{\alpha_{t-1}}\Big( \bfSigma_{t-2} + \frac{1}{2}\rho\frac{\beta_{t-1}}{\alpha_{t-1}}\I \Big)\bfSigma_{t-1}^{-1} \cdot \frac{1}{2}(2-\rho)\beta_t\sqrt{\baralpha_{t-1}}\bfSigma_{t}^{-1} + \frac{1}{2}(2-\rho)\beta_{t-1}\sqrt{\baralpha_{t-2}}\bfSigma_{t-1}^{-1}\\
    &=\frac{1}{2}(2-\rho)\alpha_{t-1}\sqrt{\baralpha_{t-2}}\beta_t\Big(\bfSigma_{t-2}+\frac{1}{2}\rho\frac{\beta_{t-1}}{\alpha_{t-1}}\I\Big)\bfSigma_{t-1}^{-1}\bfSigma_{t}^{-1} + \frac{1}{2}(2-\rho)\beta_{t-1}\sqrt{\baralpha_{t-2}}\bfSigma_{t-1}^{-1}.
\end{align*}
\end{small}

Now we prove the general case by induction. Suppose that in step $k$, $p_{Z_k|Z_t=\check\bfz_{t}}(\bfz_{k})$ has a Gaussian distribution with mean
\begin{align}
    \bfUrho_{k+1}\bfUrho_{k+2}\cdots\bfUrho_{t} \check\bfz_{t} + \Big(\bfUrho_{k+1}\cdots\big(\bfUrho_{t-1}\bfVrho_t + \bfVrho_{t-1}\big)\cdots +\bfVrho_{k+1}\Big)\bfmu_0, \label{Eq_Induction_mean_k}
\end{align}
and variance
\begin{align*}
    \lambda \sum_{i=k+1}^{t}\beta_{i}\Big(\prod_{j=k+1}^{i-1}\bfUrho_{j}\Big)\Big(\prod_{j=k+1}^{i-1}\bfUrho_{j}\Big)^\top,
\end{align*}
where the coefficients of $\check{\bfz}_t$ and $\bfmu_0$ in \eqref{Eq_Induction_mean_k} can be expressed as
\begin{align*}
    \bfUrho_{k+1}\bfUrho_{k+2}\cdots\bfUrho_{t}=\Big(\prod_{i=k+1}^{t}\sqrt{\alpha_i}\Big)\big(\bfSigma_{k}+\frac{1}{2}\rho\frac{\beta_{k+1}}{\alpha_{k+1}}\I\big)\prod_{i=k+1}^{t-1}\big(\I+\frac{1}{2}\rho\frac{\beta_{i+1}}{\alpha_{i+1}}\bfSigma_{i}^{-1}\big)\bfSigma_{t}^{-1},
\end{align*}
and
\begin{small}
\begin{align*}
    &\Big(\bfUrho_{k+1}\cdots\big(\bfUrho_{t-1}\bfVrho_t + \bfVrho_{t-1}\big)\cdots +\bfVrho_{k+1}\Big)\\
    =&\frac{1}{2}(2\!-\!\rho)\sqrt{\baralpha_k}\Big( \sum_{i=k+2}^t \big(\prod_{j=k+1}^{i-1}\alpha_j\big) \beta_i \big(\bfSigma_{k}+\frac{1}{2}\rho\frac{\beta_{k+1}}{\alpha_{k+1}}\I\big) \big( \prod_{j=k+1}^{i-2} \big(\I+\frac{1}{2}\rho\frac{\beta_{j+1}}{\alpha_{j+1}}\bfSigma_j^{-1}\big) \big)\bfSigma_{i-1}^{-1}\bfSigma_i^{-1} + \beta_{k+1}\bfSigma_{k+1}^{-1} \Big).
\end{align*}
\end{small}

Again, with Lemma \ref{Gaussian_lem} and $p_{Z_{k-1}|Z_k}(\bfz_{k-1}|\bfz_{k}) = \mathcal{N}(\bfUrho_{k}\bfz_{k}+\bfVrho_{k}\bfmu_0, ~\lambda\beta_{k}\I)$, we can obtain that $p_{Z_{k-1}|Z_t=\check\bfz_{t}}(\bfz_{k-1})$ has mean 
\begin{align*}
    &\bfUrho_{k}\bigg( \bfUrho_{k+1}\bfUrho_{k+2}\cdots\bfUrho_{t} \check\bfz_{t} + \Big(\bfUrho_{k+1}\cdots\big(\bfUrho_{t-1}\bfVrho_t + \bfVrho_{t-1}\big)\cdots +\bfVrho_{k+1}\Big) \bfmu_0\bigg) + \bfVrho_{k}\bfmu_0\\
    =& \bfUrho_{k} \bfUrho_{k+1}\bfUrho_{k+2}\cdots\bfUrho_{t}  \check\bfz_{t} + \bigg(\bfUrho_{k}\Big(\bfUrho_{k+1}\cdots\big(\bfUrho_{t-1}\bfVrho_t + \bfVrho_{t-1}\big)\cdots +\bfVrho_{k+1}\Big)+\bfVrho_{k}\bigg)\bfmu_0,
\end{align*}
and variance 
\begin{align*}
    &\lambda\beta_k + \bfUrho_k\lambda \sum_{i=k+1}^{t}\beta_{i}\Big(\prod_{j=k+1}^{i-1}\bfUrho_{j}\Big)\Big(\prod_{j=k+1}^{i-1}\bfUrho_{j}\Big)^\top\mathbf{U}_k^{\rho\top}=\lambda \sum_{i=k}^{t}\beta_{i}\Big(\prod_{j=k}^{i-1}\bfUrho_{j}\Big)\Big(\prod_{j=k}^{i-1}\bfUrho_{j}\Big)^\top,
\end{align*}
where the coefficients of $\check{\bfz}_t$ and $\bfmu_0$ in the mean can be further simplified as
\begin{small}
\begin{align*}
    &\bfUrho_{k} \bfUrho_{k+1}\bfUrho_{k+2}\cdots\bfUrho_{t}\\ 
    &= \sqrt{\alpha_{k}}\Big(\bfSigma_{k-1}+\frac{1}{2}\rho\frac{\beta_{k}}{\alpha_{k}}\I\Big)\bfSigma_{k}^{-1} \cdot \Big(\prod_{i=k+1}^{t}\sqrt{\alpha_i}\Big)\big(\bfSigma_{k}+\frac{1}{2}\rho\frac{\beta_{k+1}}{\alpha_{k+1}}\I\big)\prod_{i=k+1}^{t-1}\big(\I+\frac{1}{2}\rho\frac{\beta_{i+1}}{\alpha_{i+1}}\bfSigma_{i}^{-1}\big)\bfSigma_{t}^{-1}\\
    &=\Big(\prod_{i=k}^{t}\sqrt{\alpha_i}\Big)\Big(\bfSigma_{k-1}+\frac{1}{2}\rho\frac{\beta_{k}}{\alpha_{k}}\I\Big)\big(\I+\frac{1}{2}\rho\frac{\beta_{k+1}}{\alpha_{k+1}}\bfSigma_{k}^{-1}\big)\prod_{i=k+1}^{t-1}\big(\I+\frac{1}{2}\rho\frac{\beta_{i+1}}{\alpha_{i+1}}\bfSigma_{i}^{-1}\big)\bfSigma_{t}^{-1}\\
    &=\Big(\prod_{i=k}^{t}\sqrt{\alpha_i}\Big)\big(\bfSigma_{k-1}+\frac{1}{2}\rho\frac{\beta_{k}}{\alpha_{k}}\I\big)\prod_{i=k}^{t-1}\big(\I+\frac{1}{2}\rho\frac{\beta_{i+1}}{\alpha_{i+1}}\bfSigma_{i}^{-1}\big)\bfSigma_{t}^{-1},
\end{align*}
\end{small}
and
\begin{small}
\begin{align*}
    &\bfUrho_{k}\Big(\bfUrho_{k+1}\cdots\big(\bfUrho_{t-1}\bfVrho_t + \bfVrho_{t-1}\big)\cdots +\bfVrho_{k+1}\Big)+\bfVrho_{k} \\
    =& \frac{1}{2}(2-\rho)\sqrt{\baralpha_k}\bigg( \sum_{i=k+1}^t \Big(\prod_{j=k}^{i-1}\alpha_j\Big) \beta_i \Big(\bfSigma_{k-1}+\frac{1}{2}\rho\frac{\beta_{k}}{\alpha_{k}}\I\Big) \Big( \prod_{j=k}^{i-2} \big(\I+\frac{1}{2}\rho\frac{\beta_{j+1}}{\alpha_{j+1}}\bfSigma_j^{-1}\big) \Big)\bfSigma_{i-1}^{-1}\bfSigma_i^{-1} + \beta_{k}\bfSigma_{k}^{-1} \bigg).
\end{align*}
\end{small}
At the last step, the conditional distribution of the reconstruction $p_{Z_0|Z_t=\check\bfz_t}(\bfz_0)$ is given by 
\begin{align}
    p_{Z_0|Z_t=\check\bfz_t}(\bfz_0) = \mathcal{N}(\bfA_t^{\rho} \check\bfz_{t} + \bfB_t^{\rho}\bfmu_0, ~\lambda\bfLambda_t^{\rho}), \label{Eq_Conditonal_MultiGS_PDF}
\end{align}
where
\begin{align*}
    \bfA_t^{\rho} &:= \bfUrho_{1} \bfUrho_{2}\cdots\bfUrho_{t} =\sqrt{\baralpha_t}\bfSigma_0\prod_{i=0}^{t-1}\big(\I+\frac{1}{2}\rho\frac{\beta_{i+1}}{\alpha_{i+1}}\bfSigma_i^{-1}\big)\bfSigma_t^{-1},\\
    \bfB_t^{\rho} &:= \bfUrho_{1}\Big(\bfUrho_{2}\cdots\big(\bfUrho_{t-1}\bfVrho_t + \bfVrho_{t-1}\big)\cdots +\bfVrho_{2}\Big)+\bfVrho_{1}\\
    &=\frac{1}{2}(2-\rho)\bigg( \sum_{i=2}^t \baralpha_{i-1} \beta_i \bfSigma_{0} \prod_{j=0}^{i-2} \big(\I+\frac{1}{2}\rho\frac{\beta_{j+1}}{\alpha_{j+1}}\bfSigma_j^{-1}\big) \bfSigma_{i-1}^{-1}\bfSigma_i^{-1} + \beta_{1}\bfSigma_{1}^{-1} \bigg),\\
    \bfLambda_t^{\rho}&:=\sum_{i=1}^{t}\beta_{i}\Big(\prod_{j=1}^{i-1}\bfUrho_{j}\Big)\Big(\prod_{j=1}^{i-1}\bfUrho_{j}\Big)^\top.
\end{align*}
    

    

For our score-scaled ODE (which corresponds to $\lambda\to 0$), the variance will be zero, regardless of the value of $\rho$. Thus, the reconstruction $\hat{X}^\rho(\check{\bfz}_t)$ is given by the mean in \eqref{Eq_Conditonal_MultiGS_PDF}, i.e.,
\begin{align}
    \hat{X}^\rho(\check{\bfz}_t)=\bfA_t^{\rho} \check\bfz_{t} + \bfB_t^{\rho}\bfmu_0. \label{Eq_ODE_Rec_AB}
\end{align}

\textbf{At the point of $\rho=0$}, $\bfB_t^\rho$ has a simplified expression, 
\begin{align*}
    \bfB_t^\rho &= \sum_{i=2}^t \baralpha_{i-1} \beta_i \bfSigma_{0} \bfSigma_{i-1}^{-1}\bfSigma_i^{-1} + \beta_{1}\bfSigma_{1}^{-1} = (1-\baralpha_{t})\bfSigma_{t}^{-1},
\end{align*}
and the mean of $p_{Z_0|Z_t=\bfz_t}(\bfz_0)$ is
\begin{align*}
    \hat\bfmu_0(\check\bfz_t)=\sqrt{\baralpha_t}\bfSigma_0\bfSigma_t^{-1} \check\bfz_{t} + (1-\baralpha_{t})\bfSigma_{t}^{-1}\bfmu_0.
\end{align*}
By Tweedie's formula \cite{Tweedies} and the relationship $Z_{t} = \sqrt{\baralpha_{t}}X+\sqrt{1-\baralpha_{t}}N$, we have 
\begin{align*}
    \E[X|Z_{t}=\check{\bfz}_{t}] &= \frac{1}{\sqrt{\baralpha_{t}}}(\check{\bfz}_{t}+(1-\baralpha_{t})\nabla_{\bfz_{t}}\log p_{Z_t}(\check{\bfz}_{t}))\\
    &=\frac{1}{\sqrt{\baralpha_{t}}}(\check{\bfz}_{t}+(1-\baralpha_{t})\bfSigma_{t}^{-1}(\bfmu_{t}-\check{\bfz}_{t}))\\
    &=\frac{1}{\sqrt{\baralpha_{t}}}(\bfSigma_{t}-(1-\baralpha_{t})\I)\bfSigma_{t}^{-1} \check{\bfz}_{t}+ \frac{1}{\sqrt{\baralpha_{t}}}(1-\baralpha_{t})\bfSigma_{t}^{-1}\bfmu_{t}\\
    &=\sqrt{\baralpha_{t}}\bfSigma_0\bfSigma_{t}^{-1}\check{\bfz}_{t}+(1-\baralpha_{t})\bfSigma_{t}^{-1}\bfmu_0 = \hat\bfmu_{0}(\check{\bfz}_{t}),
\end{align*}
which implies that the reverse sampling of the proposed score-scaled PF-ODE can reach the MMSE $\E[X|Z_{t}=\check{\bfz}_{t}]$.

\subsection{An Alternative Expression of $\bfA_t^\rho$ and $\bfB_t^\rho$}
Since $\bfSigma_k$ is a covariance matrix, it is symmetric and positive semi-definite. Further suppose $\bfSigma_k$ is invertible. Then, for any $\bfSigma_i$ and $\bfSigma_j$, we have 
\begin{align*}
    \bfSigma_i\bfSigma_j &= (\baralpha_i\bfSigma_0+(1-\baralpha_i)\I)(\baralpha_j\bfSigma_0+(1-\baralpha_j)\I)\\
    &= \baralpha_i\baralpha_j\bfSigma_0^2 + \big(\baralpha_i+\baralpha_j-2\baralpha_i\baralpha_j\big)\bfSigma_0 + (1-\baralpha_i)(1-\baralpha_j)\I,
\end{align*}
which is symmetric. Thus, $\bfSigma_i\bfSigma_j = (\bfSigma_i\bfSigma_j)^\top=\bfSigma_j\bfSigma_i$, which implies that $\bfSigma_i$ and $\bfSigma_j$ commute with each other.

\begin{lemma}{\cite[Section 5.2]{Linear_Algebra_friedberg2003}} \label{lem_LA_commute}
    Two diagonalizable matrices $\bfA$ and $\bfB$ commute if and only if they can be simultaneously diagonalized, i.e., there exists a nonsingular matrix $\mathbf P$ such that both $\mathbf P\bfA\mathbf P^{-1}$ and $\mathbf P\bfB\mathbf P^{-1}$ are diagonal.
\end{lemma}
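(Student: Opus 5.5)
The plan is to prove the two directions separately. The reverse implication is a direct computation. The forward implication rests on the fact that commuting operators preserve each other's eigenspaces. Throughout, work over the field in which the eigenvalues of $\bfA$ and $\bfB$ lie. In our application the matrices are the real symmetric $\bfSigma_k$, so the field is $\mathbb{R}$.

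\emph{($\Leftarrow$)} Suppose $\mathbf P\bfA\mathbf P^{-1}=\mathbf D_1$ and $\mathbf P\bfB\mathbf P^{-1}=\mathbf D_2$ with $\mathbf D_1,\mathbf D_2$ diagonal. Diagonal matrices commute, so
\begin{align*}
\bfA\bfB=\mathbf P^{-1}\mathbf D_1\mathbf D_2\mathbf P=\mathbf P^{-1}\mathbf D_2\mathbf D_1\mathbf P=\bfB\bfA .
\end{align*}

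\emph{($\Rightarrow$)} Assume $\bfA\bfB=\bfB\bfA$ and that both matrices are diagonalizable.

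First, since $\bfA$ is diagonalizable, the whole space decomposes as a direct sum $\bigoplus_\lambda E_\lambda(\bfA)$ of its eigenspaces.

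Second, each $E_\lambda(\bfA)$ is $\bfB$-invariant. Indeed, if $\bfA v=\lambda v$, then $\bfA(\bfB v)=\bfB\bfA v=\lambda \bfB v$.

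Third, and this is the main obstacle, we must show that $\bfB$ restricted to each $E_\lambda(\bfA)$ is again diagonalizable. The approach is to work with the eigenspace decomposition of $\bfB$ directly.
\begin{itemize}
\item Take $v\in E_\lambda(\bfA)$ and write $v=\sum_\mu w_\mu$ with $w_\mu\in E_\mu(\bfB)$. This is possible because $\bfB$ is diagonalizable.
\item Apply $\bfA-\lambda\I$ to get $0=\sum_\mu(\bfA-\lambda\I)w_\mu$.
\item By commutativity, each $(\bfA-\lambda\I)w_\mu$ again lies in $E_\mu(\bfB)$.
\item The sum $\bigoplus_\mu E_\mu(\bfB)$ is direct, so every term vanishes. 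Hence each $w_\mu\in E_\lambda(\bfA)$.
\end{itemize}
Therefore $E_\lambda(\bfA)=\bigoplus_\mu \big(E_\lambda(\bfA)\cap E_\mu(\bfB)\big)$. An alternative route is to note that the minimal polynomial of $\bfB|_{E_\lambda(\bfA)}$ divides that of $\bfB$, which has distinct roots.

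Finally, pick a basis of each intersection $E_\lambda(\bfA)\cap E_\mu(\bfB)$ and take the union over all pairs $(\lambda,\mu)$. This gives a basis of the whole space consisting of common eigenvectors. Let $\mathbf P^{-1}$ be the matrix with these basis vectors as columns. Then $\mathbf P\bfA\mathbf P^{-1}$ and $\mathbf P\bfB\mathbf P^{-1}$ are both diagonal.

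For the intended use with the real symmetric $\bfSigma_i,\bfSigma_j$, the eigenspaces are mutually orthogonal. Choosing orthonormal bases of each intersection therefore yields an orthogonal $\mathbf P=\bfQ^\top$, consistent with $\bfSigma_0=\bfQ\bfLambda_0\bfQ^\top$.
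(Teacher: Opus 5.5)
Your proof is correct and complete. The paper gives no argument of its own for this lemma and simply cites Friedberg et al. Your route is the standard textbook one: show the eigenspaces of $\bfA$ are $\bfB$-invariant, show $\bfB$ restricted to each is diagonalizable (you do this via the directness of $\bigoplus_\mu E_\mu(\bfB)$), and concatenate bases of the intersections $E_\lambda(\bfA)\cap E_\mu(\bfB)$. Your orthogonality remark for symmetric matrices is also right, though in the paper's application it is even simpler: every $\bfSigma_k=\baralpha_k\bfSigma_0+(1-\baralpha_k)\I$ is already diagonalized by the same $\bfQ$.
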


First, we show an approximation which will be used frequently in the following proof. For $i\in\{0, \ldots, T-1\}$, we have
\begin{align*}
    &\big(\alpha_{i+1}\bfSigma_i+\frac{\rho}{2}\beta_{i+1}\I\big)^2\\ 
    =& \big(\alpha_{i+1}(\baralpha_i\bfSigma_0+(1-\baralpha_i)\I)+\frac{\rho}{2}\beta_{i+1}\I\big)^2\\
    =&\baralpha_{i+1}\baralpha_{i+1}\bfSigma_0\bfSigma_0 + 2\baralpha_{i+1}(\alpha_{i+1}-\baralpha_{i+1}+\frac{\rho}{2}\beta_{i+1})\bfSigma_0 + (\alpha_{i+1}-\baralpha_{i+1}+\frac{\rho}{2}\beta_{i+1})^2\I\\
    =&\baralpha_{i+1}\baralpha_{i+1}\bfSigma_0\bfSigma_0 + \baralpha_{i+1}(2\alpha_{i+1}-2\baralpha_{i+1}+\rho\beta_{i+1})\bfSigma_0 + (\alpha_{i+1}-\baralpha_{i+1}+\rho\beta_{i+1})(\alpha_{i+1}-\baralpha_{i+1})\I+\frac{\rho}{4}\beta_{i+1}^2\I\\
    =&\alpha_{i+1}\Big(\baralpha_{i}\baralpha_{i+1}\bfSigma_0\bfSigma_0 + \big((2-\rho)\baralpha_{i+1}+\rho\baralpha_{i}-2\baralpha_{i+1}\baralpha_{i}\big)\bfSigma_0 + \big((1-\rho)\alpha_{i+1}-\baralpha_{i+1}+\rho\big)(1-\baralpha_{i})\I\Big)+\mathcal{O}(\beta_{i+1}^2)\\
    \overset{(a)}{=}& \alpha_{i+1}\big(\baralpha_{i}\bfSigma_0+(1-\baralpha_{i})\I\big)\big(\baralpha_{i+1}\bfSigma_0+(\rho+(1-\rho)\alpha_{i+1}-\baralpha_{i+1})\I\big)\\
    =&\alpha_{i+1}\bfSigma_i(\rho\bfSigma_{i+1}+(1-\rho)\alpha_{i+1}\bfSigma_{i}),
\end{align*}
where $(a)$ is approximated by neglecting the $\mathcal{O}(\beta_{i+1}^2)$ term as $T\to\infty$.
Furthermore, since $\bfSigma_{i}$ and $(\rho\bfSigma_{i+1}+(1-\rho)\alpha_{i+1}\bfSigma_{i})$ commute, together with Lemma \ref{lem_LA_commute}, we have
\begin{align}
    \alpha_{i+1}\bfSigma_i+\frac{\rho}{2}\beta_{i+1}\I = \sqrt{\alpha_{i+1}}\bfSigma_i^{\frac{1}{2}}(\rho\bfSigma_{i+1}+(1-\rho)\alpha_{i+1}\bfSigma_{i})^{\frac{1}{2}}, \text{ as } T\to\infty.\label{Eq_Sigma_i_j}
\end{align} 
In particular, when $\rho=1$, $\alpha_{i+1}\bfSigma_i+\frac{1}{2}\beta_{i+1}\I = \sqrt{\alpha_{i+1}}\bfSigma_i^{\frac{1}{2}}\bfSigma_{i+1}^{\frac{1}{2}}$ as $T\to\infty$.

Then, we can derive the alternative expression of $\bfA_t^{\rho}$ and $\bfB_t^{\rho}$. For $t\in\{1,\ldots, T\}$, we have
\begin{align}
    \bfA_t^{\rho} &= \sqrt{\baralpha_t}\bfSigma_0\prod_{i=0}^{t-1}\big(\I+\frac{1}{2}\rho\frac{\beta_{i+1}}{\alpha_{i+1}}\bfSigma_i^{-1}\big)\bfSigma_t^{-1}\notag\\
    &=\sqrt{\baralpha_t}\bfSigma_0\prod_{i=0}^{t-1}\Big(\frac{1}{\alpha_{i+1}}(\alpha_{i+1}\bfSigma_{i}+\frac{\rho}{2}\beta_{i+1}\I)\bfSigma_{i}^{-1}\Big)\bfSigma_t^{-1}\notag\\
    &=\sqrt{\baralpha_t}\bfSigma_0\prod_{i=0}^{t-1}\Big(\frac{1}{\sqrt{\alpha_{i+1}}}\bfSigma_i^{-\frac{1}{2}}(\rho\bfSigma_{i+1}+(1-\rho)\alpha_{i+1}\bfSigma_{i})^{\frac{1}{2}}\Big)\bfSigma_t^{-1}~~\text{[By \eqref{Eq_Sigma_i_j}]}\notag\\
    &=\bfSigma_0\prod_{i=0}^{t-1}\Big(\bfSigma_i^{-\frac{1}{2}}(\rho\mathbf{I}+(1-\rho)\alpha_{i+1}\bfSigma_{i}\bfSigma_{i+1}^{-1})^{\frac{1}{2}}\bfSigma_{i+1}^{\frac{1}{2}}\Big)\bfSigma_t^{-1}\notag\\
    &=\bfSigma_0^{\frac{1}{2}}\prod_{i=0}^{t-1}\Big(\rho\I+(1-\rho)\alpha_{i+1}\bfSigma_{i}\bfSigma_{i+1}^{-1}\Big)^{\frac{1}{2}}\bfSigma_{t}^{-\frac{1}{2}},\label{Eq_A_t_rho_V2}
\end{align}
as $T\to\infty$, and 
\begin{align}
    \bfB_t^{\rho} &= \frac{1}{2}(2-\rho)\bigg( \sum_{i=2}^t \baralpha_{i-1} \beta_i \bfSigma_{0} \prod_{j=0}^{i-2} \big(\I+\frac{1}{2}\rho\frac{\beta_{j+1}}{\alpha_{j+1}}\bfSigma_j^{-1}\big) \bfSigma_{i-1}^{-1}\bfSigma_i^{-1} + \beta_{1}\bfSigma_{1}^{-1} \bigg)\notag\\
    &= \frac{1}{2}(2-\rho)\bigg( \sum_{i=2}^t \baralpha_{i-1} \beta_i \frac{1}{\sqrt{\baralpha_{i-1}}}\bfSigma_0^{\frac{1}{2}}\prod_{j=0}^{i-2}\Big(\rho\I+(1-\rho)\alpha_{j+1}\bfSigma_{j}\bfSigma_{j+1}^{-1}\Big)^{\frac{1}{2}}\bfSigma_{i-1}^{-\frac{1}{2}} \bfSigma_i^{-1} + \beta_{1}\bfSigma_{1}^{-1} \bigg)~~\text{[By \eqref{Eq_A_t_rho_V2}]}\notag\\
    &\overset{(a)}{=} (2-\rho)\bigg(\I-\sqrt{\baralpha_{t}}\bfSigma_{0}^{\frac{1}{2}}\bfSigma_{t}^{-\frac{1}{2}}-\sum_{i=1}^{t-1}\frac{1}{2}\sqrt{\baralpha_{i}}\beta_{i+1}\bfSigma_0^{\frac{1}{2}}\Big(\I-\prod_{j=0}^{i-1}\big(\rho\I+(1-\rho)\alpha_{j+1}\bfSigma_j\bfSigma_{j+1}^{-1}\big)^{\frac{1}{2}}\Big)\bfSigma_{i+1}^{-1}\bfSigma_{i}^{-\frac{1}{2}}\bigg),\label{Eq_B_t_rho_V2}
\end{align}
as $T\to\infty$, where (a) can be proved by induction. Note that the alternative expressions \eqref{Eq_A_t_rho_V2} and \eqref{Eq_B_t_rho_V2} will be used in the following sections to derive the marginal distributions of $Z_0$ and prove the achievability results of DP tradeoffs. 

Now we prove (a) in \eqref{Eq_B_t_rho_V2} by induction. For $t=1$, we have
\begin{align*}
    \bfB_{1}^{\rho} &= \frac{1}{2}(2-\rho)\beta_1\bfSigma_{1}^{-1} = (2-\rho)\big(\alpha_1\bfSigma_0-\alpha\bfSigma_0+(1-\alpha_1)\I-\frac{1}{2}\beta_1\I\big)\bfSigma_1^{-1}\\
    &=(2-\rho)\big(\bfSigma_1-\sqrt{\alpha_1}\bfSigma_0^{\frac{1}{2}}\bfSigma_{1}^{\frac{1}{2}}\big)\bfSigma_1^{-1}=(2-\rho)\big(\I-\sqrt{\alpha_1}\bfSigma_{0}^{\frac{1}{2}}\bfSigma_{1}^{-\frac{1}{2}}\big),
\end{align*}
as $T\to\infty$. Suppose we have
\begin{align*}
\bfB_{t-1}^{\rho}=(2-\rho)\Big(\I-\sqrt{\baralpha_{t-1}}\bfSigma_{0}^{\frac{1}{2}}\bfSigma_{t-1}^{-\frac{1}{2}}-\sum_{i=1}^{t-2} \underbrace{\frac{1}{2}\sqrt{\baralpha_{i}}\beta_{i+1}\bfSigma_0^{\frac{1}{2}}\big(\I-\prod_{j=0}^{i-1}(\rho\I+(1-\rho)\alpha_{j+1}\bfSigma_j\bfSigma_{j+1}^{-1})^{\frac{1}{2}}\big)\bfSigma_{i+1}^{-1}\bfSigma_{i}^{-\frac{1}{2}}}_{:=\bfC_{i}^\rho}\Big). 
\end{align*}
Then, by induction, we have 
\begin{align*}
    \bfB_t^{\rho} = &\bfB_{t-1}^{\rho} \!+\!  \frac{1}{2}(2\!-\!\rho)\sqrt{\baralpha_{t-1}} \beta_t \bfSigma_0^{\frac{1}{2}}\prod_{j=0}^{t-2}\big(\rho\I+(1-\rho)\alpha_{j+1}\bfSigma_{j}\bfSigma_{j+1}^{-1}\big)^{\frac{1}{2}}\bfSigma_t^{-1}\bfSigma_{t-1}^{-\frac{1}{2}} \\
    &=(2\!-\!\rho)\Big(\I-\sqrt{\baralpha_{t-1}}\bfSigma_{0}^{\frac{1}{2}}\bfSigma_{t-1}^{-\frac{1}{2}}-\sum_{i=1}^{t-2}\bfC_{i}^{\rho} + \frac{1}{2}\sqrt{\baralpha_{t-1}} \beta_t \bfSigma_0^{\frac{1}{2}}\prod_{j=0}^{t-2}\big(\rho\I+(1-\rho)\alpha_{j+1}\bfSigma_{j}\bfSigma_{j+1}^{-1}\big)^{\frac{1}{2}} \bfSigma_t^{-1}\bfSigma_{t-1}^{-\frac{1}{2}} \Big)\\
    &=(2\!-\!\rho)\Big(\I-\sqrt{\baralpha_{t-1}}\bfSigma_{0}^{\frac{1}{2}}\big(\bfSigma_t- \beta_t\frac{1}{2}\prod_{j=0}^{t-2}\big(\rho\I+(1-\rho)\alpha_{j+1}\bfSigma_{j}\bfSigma_{j+1}^{-1}\big)^{\frac{1}{2}}\big)\bfSigma_t^{-1}\bfSigma_{t-1}^{-\frac{1}{2}} -\sum_{i=1}^{t-2}\bfC_{i}^{\rho} \Big)\\
    &\overset{(b)}{=}(2-\rho)\Big(\I-\sqrt{\baralpha_{t-1}}\bfSigma_{0}^{\frac{1}{2}}\big(\sqrt{\alpha_{t}}\bfSigma_{t-1}^{\frac{1}{2}}\bfSigma_{t}^{\frac{1}{2}}+\frac{1}{2}\beta_{t}\big(\I-\prod_{j=0}^{t-2}\big(\rho\I+(1-\rho)\alpha_{j+1}\bfSigma_{j}\bfSigma_{j+1}^{-1}\big)^{\frac{1}{2}}\big)\big)\bfSigma_t^{-1}\bfSigma_{t-1}^{-\frac{1}{2}} -\sum_{i=1}^{t-2}\bfC_{i}^{\rho} \Big)\\
    &=(2\!-\!\rho)\Big(\I-\sqrt{\baralpha_{t}}\bfSigma_0^{\frac{1}{2}}\bfSigma_{t}^{-\frac{1}{2}}-\sum_{i=1}^{t-1}\frac{1}{2}\sqrt{\baralpha_{i}}\beta_{i+1}\bfSigma_0^{\frac{1}{2}}\big(\I-\prod_{j=0}^{i-1}(\rho\I+(1-\rho)\alpha_{j+1}\bfSigma_j\bfSigma_{j+1}^{-1})^{\frac{1}{2}}\big)\bfSigma_{i+1}^{-1}\bfSigma_{i}^{-\frac{1}{2}}\Big),
\end{align*}
as $T\to\infty$, where (b) follows from $\bfSigma_t = \alpha_t\bfSigma_{t-1}+\beta_t\I$ and \eqref{Eq_Sigma_i_j}.

\textbf{At the point of $\rho=1$}, $\bfA_t^\rho$ and $\bfB_t^\rho$ have simplified expressions, and the solution becomes
\begin{align*}
    Z_0 = \boldsymbol{\Sigma}_0^{\frac{1}{2}} \boldsymbol{\Sigma}_t^{-\frac{1}{2}} Z_t + (\I - \sqrt{\bar{\alpha}_t} \boldsymbol{\Sigma}_0^{\frac{1}{2}} \boldsymbol{\Sigma}_t^{-\frac{1}{2}}) \bfmu_0,
\end{align*}
as $T\to\infty$. This matches the optimal transport solution $\hat{X}_0$ in Theorem 2 of \cite{DP-tradeoff_Wasserstein_Freirich2021}, where $\hat{X}_0 = \boldsymbol{\Sigma}_{X}^{\frac{1}{2}}\boldsymbol{\Sigma}_{X^\star}^{-\frac{1}{2}}X^{\star} = \boldsymbol{\Sigma}_{0}^{\frac{1}{2}}\boldsymbol{\Sigma}_{t}^{-\frac{1}{2}}Z_t$ assuming zero mean Gaussian source. Here $X^\star$ is the MMSE solution with covariance matrix $\boldsymbol{\Sigma}_{X^{\star}} = \bar{\alpha}_t \boldsymbol{\Sigma}_0^2 \boldsymbol{\Sigma}_t^{-1}$.


\subsection{Marginal Distributions of $p_{Z_0}(\cdot)$} \label{App_Proof_Lem_End_SSODE_Unconditional}
In Section \ref{App_Proof_Lem_End_SSODE_Conditional}, we have derived the conditional distribution of $p_{Z_0|Z_t=\check\bfz_t}(\bfz_0)$ in \eqref{Eq_Conditonal_MultiGS_PDF}. With Lemma \ref{Gaussian_lem} and $p_{Z_t}(\bfz_t) = \mathcal{N}(\sqrt{\baralpha_t}\bfmu_0,~ \baralpha_t\bfSigma_0+(1-\baralpha_t)\I)$, we can obtain that $Z_0\sim\mathcal{N}(\hat{\bfmu}_0, \hat{\bfSigma}_0)$, where
\begin{align}
    \hat{\bfmu}_0 := \bfA_{t}^{\rho}\sqrt{\baralpha_t}\bfmu_0 + \bfB_t^{\rho}\bfmu_0 = (\bfA_{t}^{\rho}\sqrt{\baralpha_t} + \bfB_t^{\rho})\bfmu_0, \label{Eq_Uncondi_Z0_Mean}
\end{align}
and the variance is 
\begin{align}
   \hat{\bfSigma}_0:=\lambda\bfLambda_t^{\rho} + \bfA_t^{\rho}\big(\baralpha_t\bfSigma_0+(1-\baralpha_t)\I \big)\bfA_t^{\rho\top}. \label{Eq_Uncondi_Z0_Variance}
\end{align}

First, we show that the mean of $p_{Z_0}(\bfz_0)$ is $\bfmu_0$ regardless of the value of $\rho$. For any $\rho\in[0,1]$ and $t\in\{1, \ldots, T\}$, we have
\begin{align*}
    &\bfA_{t}^{\rho}\sqrt{\baralpha_t} + \bfB_t^{\rho}\\
    =&\baralpha_t\Big(\bfSigma_0+\frac{1}{2}\rho\frac{\beta_1}{\alpha_1}\I\Big)\Big(\prod_{i=1}^{t-1}\big(\I+\frac{1}{2}\rho\frac{\beta_{i+1}}{\alpha_{i+1}}\bfSigma_i^{-1}\big)\Big)\bfSigma_t^{-1}\\
    &\ \ \ \ \ + \frac{1}{2}(2-\rho)\bigg( \sum_{i=2}^t \baralpha_{i-1} \beta_i \Big(\bfSigma_{0}+\frac{1}{2}\rho\frac{\beta_{1}}{\alpha_{1}}\I\Big) \Big( \prod_{j=1}^{i-2} \big(\I+\frac{1}{2}\rho\frac{\beta_{j+1}}{\alpha_{j+1}}\bfSigma_j^{-1}\big) \Big)\bfSigma_{i-1}^{-1}\bfSigma_i^{-1} + \beta_{1}\bfSigma_{1}^{-1} \bigg)\\
    =&\baralpha_t\Big(\bfSigma_0+\frac{1}{2}\rho\frac{\beta_1}{\alpha_1}\I\Big)\big(\I+\frac{1}{2}\rho\frac{\beta_{2}}{\alpha_{2}}\bfSigma_1^{-1}\big)\cdots \big(\I+\frac{1}{2}\rho\frac{\beta_{t}}{\alpha_{t}}\bfSigma_{t-1}^{-1}\big)\bfSigma_t^{-1} &\circled{1}\\
    &+\frac{1}{2}(2-\rho)\baralpha_{t-1} \beta_t \Big(\bfSigma_{0}+\frac{1}{2}\rho\frac{\beta_{1}}{\alpha_{1}}\I\Big)\big(\I+\frac{1}{2}\rho\frac{\beta_{2}}{\alpha_{2}}\bfSigma_1^{-1}\big)\cdots\big(\I+\frac{1}{2}\rho\frac{\beta_{t-1}}{\alpha_{t-1}}\bfSigma_{t-2}^{-1}\big)\bfSigma_{t-1}^{-1}\bfSigma_{t}^{-1} &\circled{2}\\
    &+\frac{1}{2}(2-\rho)\baralpha_{t-2} \beta_{t-1} \Big(\bfSigma_{0}+\frac{1}{2}\rho\frac{\beta_{1}}{\alpha_{1}}\I\Big)\big(\I+\frac{1}{2}\rho\frac{\beta_{2}}{\alpha_{2}}\bfSigma_1^{-1}\big)\cdots\big(\I+\frac{1}{2}\rho\frac{\beta_{t-2}}{\alpha_{t-2}}\bfSigma_{t-3}^{-1}\big)\bfSigma_{t-2}^{-1}\bfSigma_{t-1}^{-1} &\circled{3}\\
    &+\cdots\\
    &+\frac{1}{2}(2-\rho)\baralpha_{1} \beta_2 \Big(\bfSigma_{0}+\frac{1}{2}\rho\frac{\beta_{1}}{\alpha_{1}}\I\Big)\bfSigma_{1}^{-1}\bfSigma_{2}^{-1} &\circled{t}\\
    &+\frac{1}{2}(2-\rho)\beta_1\bfSigma_{1}^{-1}.
\end{align*}
By first considering $\circled{1}+\circled{2}$, we have 
\begin{small}
\begin{align*}
    \circled{1}+\circled{2} &= \baralpha_{t-1}\Big(\bfSigma_{0}+\frac{1}{2}\rho\frac{\beta_{1}}{\alpha_{1}}\I\Big)\big(\I+\frac{1}{2}\rho\frac{\beta_{2}}{\alpha_{2}}\bfSigma_1^{-1}\big)\cdots\big(\I+\frac{1}{2}\rho\frac{\beta_{t-1}}{\alpha_{t-1}}\bfSigma_{t-2}^{-1}\big) \Big[ \alpha_t\big(\I+\frac{1}{2}\rho\frac{\beta_{t}}{\alpha_{t}}\bfSigma_{t-1}^{-1}\big) + \frac{1}{2}(2-\rho)\beta_{t}\bfSigma_{t-1}^{-1} \Big] \bfSigma_{t}^{-1}\\
    &=\baralpha_{t-1}\Big(\bfSigma_{0}+\frac{1}{2}\rho\frac{\beta_{1}}{\alpha_{1}}\I\Big)\big(\I+\frac{1}{2}\rho\frac{\beta_{2}}{\alpha_{2}}\bfSigma_1^{-1}\big)\cdots\big(\I+\frac{1}{2}\rho\frac{\beta_{t-1}}{\alpha_{t-1}}\bfSigma_{t-2}^{-1}\big) \bfSigma_{t-1}^{-1} \Big[ \alpha_t\bfSigma_{t-1}+\frac{1}{2}\rho\beta_{t}\I + \frac{1}{2}(2-\rho)\beta_{t}\I \Big] \bfSigma_{t}^{-1}\\
    &=\baralpha_{t-1}\Big(\bfSigma_{0}+\frac{1}{2}\rho\frac{\beta_{1}}{\alpha_{1}}\I\Big)\big(\I+\frac{1}{2}\rho\frac{\beta_{2}}{\alpha_{2}}\bfSigma_1^{-1}\big)\cdots\big(\I+\frac{1}{2}\rho\frac{\beta_{t-1}}{\alpha_{t-1}}\bfSigma_{t-2}^{-1}\big) \bfSigma_{t-1}^{-1} \Big[ \baralpha_t\bfSigma_0+(\alpha_t-\baralpha_t+\beta_{t})\I\Big] \bfSigma_{t}^{-1}\\
    &=\baralpha_{t-1}\Big(\bfSigma_{0}+\frac{1}{2}\rho\frac{\beta_{1}}{\alpha_{1}}\I\Big)\big(\I+\frac{1}{2}\rho\frac{\beta_{2}}{\alpha_{2}}\bfSigma_1^{-1}\big)\cdots\big(\I+\frac{1}{2}\rho\frac{\beta_{t-1}}{\alpha_{t-1}}\bfSigma_{t-2}^{-1}\big) \bfSigma_{t-1}^{-1} \bfSigma_{t}  \bfSigma_{t}^{-1}\\
    &=\baralpha_{t-1}\Big(\bfSigma_{0}+\frac{1}{2}\rho\frac{\beta_{1}}{\alpha_{1}}\I\Big)\big(\I+\frac{1}{2}\rho\frac{\beta_{2}}{\alpha_{2}}\bfSigma_1^{-1}\big)\cdots\big(\I+\frac{1}{2}\rho\frac{\beta_{t-1}}{\alpha_{t-1}}\bfSigma_{t-2}^{-1}\big)  \bfSigma_{t-1}^{-1},
\end{align*}
\end{small}
$\!\!\text{which}$ has a similar structure to $\circled{1}$ with $t$ replaced by $t-1$. Thus, when proceeding to the next line, we have a similar induction structure
\begin{small}
\begin{align*}
    &\circled{1}+\circled{2}+\circled{3}\\
    =&\baralpha_{t-2}\Big(\bfSigma_{0}+\frac{1}{2}\rho\frac{\beta_{1}}{\alpha_{1}}\I\Big)\big(\I+\frac{1}{2}\rho\frac{\beta_{2}}{\alpha_{2}}\bfSigma_1^{-1}\big)\cdots\big(\I+\frac{1}{2}\rho\frac{\beta_{t-2}}{\alpha_{t-2}}\bfSigma_{t-3}^{-1}\big)\Big[ \alpha_{t-1}\big(\I+\frac{1}{2}\rho\frac{\beta_{t-1}}{\alpha_{t-1}}\bfSigma_{t-2}^{-1}\big) + \frac{1}{2}(2-\rho)\beta_{t-1}\bfSigma_{t-2}^{-1} \Big]\bfSigma_{t-1}^{-1}\\
    =&\baralpha_{t-2}\Big(\bfSigma_{0}+\frac{1}{2}\rho\frac{\beta_{1}}{\alpha_{1}}\I\Big)\big(\I+\frac{1}{2}\rho\frac{\beta_{2}}{\alpha_{2}}\bfSigma_1^{-1}\big)\cdots\big(\I+\frac{1}{2}\rho\frac{\beta_{t-2}}{\alpha_{t-2}}\bfSigma_{t-3}^{-1}\big)\bfSigma_{t-2}^{-1} \Big[ \alpha_{t-1}\bfSigma_{t-2}+\frac{1}{2}\rho\beta_{t-1}\I + \frac{1}{2}(2-\rho)\beta_{t-1}\I \Big]\bfSigma_{t-1}^{-1}\\
    =&\baralpha_{t-2}\Big(\bfSigma_{0}+\frac{1}{2}\rho\frac{\beta_{1}}{\alpha_{1}}\I\Big)\big(\I+\frac{1}{2}\rho\frac{\beta_{2}}{\alpha_{2}}\bfSigma_1^{-1}\big)\cdots\big(\I+\frac{1}{2}\rho\frac{\beta_{t-2}}{\alpha_{t-2}}\bfSigma_{t-3}^{-1}\big)\bfSigma_{t-2}^{-1}\bfSigma_{t-1}\bfSigma_{t-1}^{-1}\\
    =&\baralpha_{t-2}\Big(\bfSigma_{0}+\frac{1}{2}\rho\frac{\beta_{1}}{\alpha_{1}}\I\Big)\big(\I+\frac{1}{2}\rho\frac{\beta_{2}}{\alpha_{2}}\bfSigma_1^{-1}\big)\cdots\big(\I+\frac{1}{2}\rho\frac{\beta_{t-2}}{\alpha_{t-2}}\bfSigma_{t-3}^{-1}\big)\bfSigma_{t-2}^{-1}.
\end{align*}
\end{small}
By induction, the summation over $\circled{1}\cdots\circled{t}$ is 
\begin{small}
\begin{align*}
    &\circled{1}+\cdots+\circled{t}\\
    =&\baralpha_2\Big(\bfSigma_0+\frac{1}{2}\rho\frac{\beta_1}{\alpha_1}\I\Big)\big(\I+\frac{1}{2}\rho\frac{\beta_2}{\alpha_2}\bfSigma_{1}^{-1}\big)\bfSigma_{2}^{-1} + \frac{1}{2}(2-\rho)\baralpha_{1}\beta_2\Big(\bfSigma_0+\frac{1}{2}\rho\frac{\beta_1}{\alpha_1}\I\Big)\bfSigma_{1}^{-1}\bfSigma_{2}^{-1}\\
    =&\alpha_1\Big(\bfSigma_0+\frac{1}{2}\rho\frac{\beta_1}{\alpha_1}\I\Big)\Big[\alpha_2\big(\I+\frac{1}{2}\rho\frac{\beta_2}{\alpha_2}\bfSigma_{1}^{-1}\big) + \frac{1}{2}(1-\rho)\beta_2\bfSigma_1^{-1}\Big]\bfSigma_{2}^{-1}\\
    =&\alpha_1\Big(\bfSigma_0+\frac{1}{2}\rho\frac{\beta_1}{\alpha_1}\I\Big) \bfSigma_{1}^{-1}\Big[\alpha_2\bfSigma_1+\frac{1}{2}\rho\beta_2 + \frac{1}{2}(1-\rho)\beta_2\Big] \bfSigma_{2}^{-1}\\
    =&\alpha_1\Big(\bfSigma_0+\frac{1}{2}\rho\frac{\beta_1}{\alpha_1}\I\Big) \bfSigma_{1}^{-1}.
\end{align*}
\end{small}
Thus, we have  
\begin{align*}
    \bfA_{t}^{\rho}\sqrt{\baralpha_t} + \bfB_t^{\rho} &= \alpha_1\Big(\bfSigma_0+\frac{1}{2}\rho\frac{\beta_1}{\alpha_1}\I\Big) \bfSigma_{1}^{-1} + \frac{1}{2}(2-\rho)\beta_1\bfSigma_{1}^{-1}\\
    &=\big(\alpha_1\bfSigma_0+\frac{1}{2}\rho\beta_1\I+\frac{1}{2}(2-\rho)\beta_1\I\big)\bfSigma_1^{-1}\\
    &=\big(\alpha_1\bfSigma_0+(1-\alpha_1)\I\big)\bfSigma_1^{-1}\\
    &=\bfSigma_1 \bfSigma_1^{-1} = \I.
\end{align*}
By plugging it into \eqref{Eq_Uncondi_Z0_Mean}, we can obtain that the mean of $p_{Z_0}(\bfz_0)$ is $\hat{\bfmu}_0 = (\bfA_{t}^{\rho}\sqrt{\baralpha_t} + \bfB_t^{\rho}) \bfmu_0 = \bfmu_0$.

Now, let's compute the variance of $p_{Z_0}(\bfz_0)$ with the help of the second expression of $\bfA_{t}^{\rho}$ in \eqref{Eq_A_t_rho_V2}. By substituting \eqref{Eq_A_t_rho_V2} into \eqref{Eq_Uncondi_Z0_Variance}, we have
\begin{align}
    \hat{\bfSigma}_0 =& \lambda\bfLambda_t^{\rho} + \bfA_t^{\rho}\big(\baralpha_t\bfSigma_0+(1-\baralpha_t)\I \big)\bfA_t^{\rho\top} \notag\\
    =& \lambda\bfLambda_t^{\rho} + \bfSigma_0^{\frac{1}{2}}\prod_{i=0}^{t-1}\Big(\rho\I+(1-\rho)\alpha_{i+1}\bfSigma_{i}\bfSigma_{i+1}^{-1}\Big)^{\frac{1}{2}}\bfSigma_{t}^{-\frac{1}{2}}\bfSigma_t \Big(\bfSigma_0^{\frac{1}{2}}\prod_{i=0}^{t-1}\Big(\rho\I+(1-\rho)\alpha_{i+1}\bfSigma_{i}\bfSigma_{i+1}^{-1}\Big)^{\frac{1}{2}}\bfSigma_{t}^{-\frac{1}{2}}\Big)^\top \notag\\
    =&\lambda\bfLambda_t^{\rho} + \bfSigma_{0}\prod_{i=0}^{t-1}\Big(\rho\I+(1-\rho)\alpha_{i+1}\bfSigma_{i}\bfSigma_{i+1}^{-1}\Big).\label{Eq_Marginal_Variance}
\end{align}
In summary, the marginal distribution of the reconstruction given by our score-scaled PF-ODE is (by setting $\lambda\to 0$)
\begin{align*}
    p_{\hat{X}^\rho}(\hat{\bfx}^\rho) = \mathcal{N}\Big(\bfmu_0,~ \bfSigma_{0}\prod_{i=0}^{t-1}\Big(\rho\I+(1-\rho)\alpha_{i+1}\bfSigma_{i}\bfSigma_{i+1}^{-1}\Big)\Big), \text{ as } T\to\infty.
\end{align*}

\textbf{When $\rho=0$}, the variance is
\begin{align*}
    \bfSigma_{0}\prod_{i=0}^{t-1}\Big(\alpha_{i+1}\bfSigma_{i}\bfSigma_{i+1}^{-1}\Big) &= \baralpha_{t} \bfSigma_{0}\bfSigma_{0}\bfSigma_1^{-1}\cdots\bfSigma_{t-1}\bfSigma_{t}^{-1}=\baralpha_t\bfSigma_0^2\bfSigma_t^{-1},
\end{align*}
and the marginal distribution of the reconstruction is $p_{\hat{X}^\rho}(\hat{\bfx}^\rho)=\mathcal{N}(\bfmu_0,\baralpha_t\bfSigma_0^2\bfSigma_t^{-1})$.

\textbf{When $\rho=1$}, the variance is $\bfSigma_0$, and the marginal distribution is $p_{\hat{X}^\rho}(\hat{\bfx}^\rho)=\mathcal{N}(\bfmu_0,\bfSigma_0)$, which match the marginal distribution of $p_X(\bfx)$.

\section{Proof of Theorem \ref{Thm_Optimal_DP}} \label{App_Proof_Thm_Optimal_DP}
\subsection{Scalar Gaussian Case} \label{App_sub-sec-DP_Scalar_Gaussian}
Let's first prove the scalar Gaussian case, which will be part of the basis for the achievability proof of the vector Gaussian case.

\subsubsection{Converse Proof}
Consider the source $X\sim \mathcal{N}(\mu_0, \sigma_0^2)$, and the decoder receives $Z_t = \sqrt{\bar{\alpha}_t}X+\sqrt{1-\bar{\alpha}_t}N, ~N\sim\mathcal{N}(0, 1)$, which has distribution $$p_{Z_t}(z_t) = \mathcal{N}(\underbrace{\sqrt{\bar{\alpha}_t}\mu_0}_{\mu_t}, \underbrace{\bar{\alpha}_t\sigma_0^2+1-\bar{\alpha}_t}_{\sigma_t^2}).$$

The DP function is defined as:
\begin{align*}
    D_t(P)=&\min_{p_{\hat{X}|Z_t}(\hat{x}|z_t)}\E[(X-\hat{X})^2]\\
    &\text{s.t.}~ W_2^2(p_X, p_{\hat{X}})\leq P.
\end{align*}
To be proved in Lemma \ref{lem_recon_linear} in the multivariate case, we can restrict the reconstruction to follow the form of $p_{\hat{X}|Z_t}(\hat{x}|z_t)=\mathcal{N}(az_t+b, c^2)$ for $a, b, c$ to be optimized.
Then, the marginal distribution can be expressed as $p_{\hat{X}}(\hat{x})=\mathcal{N}(a\sqrt{\baralpha_t}\mu_0+b, c^2+a^2\sigma_t^2)$, and the covariance of $X$ and $\hat X$ is $\Cov[X,\hat X]=a\sqrt{\baralpha_t}\sigma_0^2$. Define $\hat{\mu}_0 := a\sqrt{\baralpha_t}\mu_0+b$.

Then, we can express the distortion and perception term as 
\begin{align*}
    \E[(X-\hat{X})^2] &= \E[X^2] - 2\E[X\hat{X}] + \E[\hat{X}^2]\\
    &= \sigma_0^2+\mu_0^2 - 2(a\sqrt{\baralpha_t}\sigma_0^2 + \mu_0\hat{\mu}_0) + \hat{\mu}_0^2 + (c^2+a^2\sigma_t^2)\\
    &= (\mu_0-\hat{\mu}_0)^2 + \sigma_0^2 + (c^2 + a^2\sigma_t^2) - 2a\sqrt{\baralpha_t}\sigma_0^2,
\end{align*}
and 
\begin{align*}
    W_2^2(p_X, p_{\hat X}) = (\mu_0-\hat{\mu}_0)^2 + (\sigma_0-\sqrt{c^2 + a^2\sigma_t^2})^2.
\end{align*}
Without loss of optimality, we can set $\hat\mu_0=\mu_0$, which leads to $b=\mu_0-a\sqrt{\baralpha_t}\mu_0$. Then, the optimization problem can be simplified as
\begin{align}
    D_t(P)=&\min_{a, c} \sigma_0^2 + (c^2 + a^2\sigma_t^2) - 2a\sqrt{\baralpha_t}\sigma_0^2\\
    &\text{s.t.}~ (\sigma_0-\sqrt{c^2 + a^2\sigma_t^2})^2\leq P. \notag
\end{align}
Denoting $\hat\sigma_0^2:=c^2 + a^2\sigma_t^2$, the problem is equivalent to
\begin{align*}
    D_t(P)=\min_{a, \hat\sigma_0}~ &\sigma_0^2 + \hat\sigma_0^2 - 2a\sqrt{\baralpha_t}\sigma_0^2\\
    \text{s.t.}~ &(\sigma_0-\hat\sigma_0)^2\leq P\\
    &a^2\sigma_t^2 \leq \hat\sigma_0^2.
\end{align*}
For any fixed $\hat\sigma_0$ (which leads to fixed value of $(\sigma_0-\hat\sigma_0)^2$), we can find the optimal $a$ that minimizes the distortion level to $\frac{\hat{\sigma}_0}{\sigma_t}$, which corresponds to $c=0$. Then, the problem is just to minimize a quadratic function $\hat\sigma_0^2-2\frac{\hat{\sigma}_0}{\sigma_t}\sqrt{\baralpha_t}\sigma_0^2+\sigma_0^2=(\hat\sigma_0-\frac{\sqrt{\baralpha_t}\sigma_0^2}{\sigma_t})^2+(1-\baralpha_t)\frac{\sigma_0^2}{\sigma_t^2}$, with respect to $\sigma_0-\sqrt{P}\leq \hat\sigma_0\leq \sigma_0+\sqrt{P}$.

$1^o$ When $\sqrt{P}< \sigma_0-\frac{\sqrt{\baralpha_t}\sigma_0^2}{\sigma_t}$, the optimal $\hat\sigma_0$ is $\sigma_0-\sqrt{P}$, and the optimal distortion is $\sigma_0^2+(\sigma_0-\sqrt{P})^2-2\frac{\sigma_0-\sqrt{P}}{\sigma_t}\sqrt{\baralpha_t}\sigma_0^2 = \frac{(1-\baralpha_t)\sigma_0^2}{\sigma_t^2}+\Big(\sigma_0-\sqrt{P}-\frac{\sigma^2_0\sqrt{\baralpha_t}}{\sigma_t}\Big)^2$. The corresponding reconstruction distribution is
\begin{align*}
    p_{\hat{X}|Z_t}(\hat{x}|z_t) = \mathcal{N}\Big(\frac{\sigma_0-\sqrt{P}}{\sigma_t}z_t+\big(1-\frac{(\sigma_0-\sqrt{P})\sqrt{\baralpha_t}}{\sigma_t}\big)\mu_0, ~0\Big).
\end{align*}

$2^o$ When $\sqrt{P}\geq \sigma_0-\frac{\sqrt{\baralpha_t}\sigma_0^2}{\sigma_t}$, the optimal $\hat\sigma_0$ is $\frac{\sqrt{\baralpha_t}\sigma_0^2}{\sigma_t}$, and the optimal distortion is $\frac{(1-\baralpha_t)\sigma_0^2}{\sigma_t^2}$. The corresponding reconstruction distribution is 
\begin{align*}
    p_{\hat{X}|Z_t}(\hat{x}|z_t) = \mathcal{N}\Big(\frac{\sqrt{\baralpha_t}\sigma_0^2}{\sigma_t^2}z_t+\frac{1-\baralpha_t}{\sigma_t^2}\mu_0, ~0\Big).  
\end{align*}

The resulting DP tradeoff is 
\begin{align*}
    D_t(P) = \begin{cases}
        \frac{(1-\baralpha_t)\sigma_0^2}{\sigma_t^2}+\Big(\sigma_0-\sqrt{P}-\frac{\sigma^2_0\sqrt{\baralpha_t}}{\sigma_t}\Big)^2 & \sqrt{P}< \sigma_0-\frac{\sqrt{\baralpha_t}\sigma_0^2}{\sigma_t},\\
        \frac{(1-\baralpha_t)\sigma_0^2}{\sigma_t^2} & \sqrt{P}\geq \sigma_0-\frac{\sqrt{\baralpha_t}\sigma_0^2}{\sigma_t}.
    \end{cases}
\end{align*}

\begin{remark}
    We can observe that $c^2$, which is the conditional variance of the reconstruction, is always zero. This coincides with the fact that the reconstruction is a deterministic function of $z_t$.
\end{remark}

\subsubsection{Achievability Proof} \label{sub_sec-achievable_1d}
As shown in \eqref{Eq_ODE_Rec_AB}, the reconstruction $\hat{X}^{\rho}$ given by our proposed score-scaled PF-ODE is
\begin{align*}
    \hat{X}^{\rho} = a_t^\rho Z_t + b_t^\rho \mu_0,
\end{align*}
where
\begin{align*}
    &a_t^\rho = \sqrt{\baralpha_t}\frac{\sigma_0^2}{\sigma_t^2}\prod_{i=0}^{t-1}\Big(1+\frac{1}{2}\rho\frac{\beta_{i+1}}{\alpha_{i+1}}\frac{1}{\sigma_i^2}\Big) = \frac{\sigma_0}{\sigma_t}\prod_{i=0}^{t-1}\Big(\rho+(1-\rho)\alpha_{i+1}\frac{\sigma_{i}^2}{\sigma_{i+1}^2}\Big)^{\frac{1}{2}}, \\
    &
    b_t^\rho = \frac{1}{2}(2-\rho)\bigg( \sum_{i=2}^t \baralpha_{i-1} \beta_i \frac{\sigma^2_{0}}{\sigma_{i-1}^{2}\sigma_i^{2}} \prod_{j=0}^{i-2} \big(1+\frac{1}{2}\rho\frac{\beta_{j+1}}{\alpha_{j+1}}\frac{1}{\sigma_j}\big) + \beta_{1}\frac{1}{\sigma_{1}^{2}} \bigg)\\
    &
    ~~~= (2-\rho)\bigg(1-\sqrt{\baralpha_{t}}\frac{\sigma_0}{\sigma_t}-\sum_{i=1}^{t-1}\frac{1}{2}\sqrt{\baralpha_{i}}\beta_{i+1}\frac{\sigma_0}{\sigma_{i+1}^{2}\sigma_{i}}\Big(\I-\prod_{j=0}^{i-1}\big(\rho\I+(1-\rho)\alpha_{j+1}\bfSigma_j\bfSigma_{j+1}^{-1}\big)^{\frac{1}{2}}\Big)\bigg).
\end{align*}

The covariance between $\hat X^\rho$ and $X$ is 
\begin{align*}
    \Cov[X,\hat{X}^\rho] = \Cov[X,a_t^\rho Z_t + b_t^\rho \mu_0] = \Cov[X,a_t^\rho (\sqrt{\bar{\alpha}_t}X+\sqrt{1-\bar{\alpha}_t}N)] = a_t^\rho \sqrt{\bar{\alpha}_t}\sigma_0^2.
\end{align*}

According to \eqref{Eq_Marginal_Variance}, the marginal distribution of the reconstruction is
\begin{align*}
    p_{\hat X^\rho}(\hat x) = \mathcal{N}(\mu_0, \hat{\sigma}_0^{2}),
\end{align*}
where 
\begin{align*}
    \hat{\sigma}_0^{2} = \baralpha_t\frac{\sigma_0^4}{\sigma_t^2}\prod_{i=0}^{t-1}\Big(1+\frac{1}{2}\rho\frac{\beta_{i+1}}{\alpha_{i+1}}\frac{1}{\sigma_i^2}\Big)^2 = \sigma_0^2\prod_{i=0}^{t-1}\Big(\rho + (1-\rho) \alpha_{i+1}\frac{\sigma_i^2}{\sigma_{i+1}^2}\Big).
\end{align*}

We can now compute the achievable distortion and perception levels of our proposed score-scaled ODE as functions of $\rho$
\begin{small}
\begin{align}
    D_t^{\rho} &= \E[(X-\hat{X}^{\rho})^2] = \sigma_0^2 + \sigma_0^2\prod_{i=0}^{t-1}\Big(\rho + (1-\rho) \alpha_{i+1}\frac{\sigma_i^2}{\sigma_{i+1}^2}\Big) - 2\frac{\sigma_0}{\sigma_t}\prod_{i=0}^{t-1}\Big(\rho+(1-\rho)\alpha_{i+1}\frac{\sigma_{i}^2}{\sigma_{i+1}^2}\Big)^{\frac{1}{2}}\sqrt{\bar{\alpha}_t}\sigma_0^2\notag\\
    &=\sigma_0^2 \Big( \prod_{i=0}^{t-1}\big(\rho+(1-\rho)\alpha_{i+1}\frac{\sigma_{i}^2}{\sigma_{i+1}^2}\big)^{\frac{1}{2}} - \frac{\sqrt{\baralpha_t}\sigma_0}{\sigma_t} \Big)^2 + \sigma_0^2 -\frac{\baralpha_t\sigma_0^4}{\sigma_t^2},\label{D_rho_1d} \\
    P_t^{\rho} &= W_2^2(p_{X}, p_{\hat{X}^\rho}) = \Big(\sigma_0 - \sigma_0 \prod_{i=0}^{t-1}\big(\rho+(1-\rho)\alpha_{i+1}\frac{\sigma_{i}^2}{\sigma_{i+1}^2}\big)^{\frac{1}{2}}\Big)^2. \label{P_rho_1d}
\end{align}
\end{small}
From \eqref{P_rho_1d} together with the fact that $\rho+(1-\rho)\alpha_{i+1}\frac{\sigma_{i}^2}{\sigma_{i+1}^2}\leq 1, \forall i$, we can obtain $\prod_{i=0}^{t-1}\big(\rho+(1-\rho)\alpha_{i+1}\frac{\sigma_{i}^2}{\sigma_{i+1}^2}\big)^{\frac{1}{2}} = \frac{\sigma_0-\sqrt{P_t^{\rho}}}{\sigma_0}$. Since $0\leq \rho \leq 1$, we have $\prod_{i=0}^{t-1}\big(\rho+(1-\rho)\alpha_{i+1}\frac{\sigma_{i}^2}{\sigma_{i+1}^2}\big)^{\frac{1}{2}}\in \big[\frac{\sqrt{\baralpha_t}\sigma_0}{\sigma_t}, 1\big]$, thus $P_t^\rho\in [0, (\sigma_0-\frac{\sqrt{\baralpha_t}\sigma_0^2}{\sigma_t})^2]$. Plugging into \eqref{D_rho_1d} we can get the achievable DP tradeoff is 
\begin{align*}
    D^{\rho}_t(P_t^\rho) &=  \sigma_0^2 \Big( \frac{\sigma_0-\sqrt{P_t^\rho}}{\sigma_0} - \frac{\sqrt{\baralpha_t}\sigma_0}{\sigma_t} \Big)^2 + \frac{\sigma_0^2(\sigma_t^2-\baralpha_t\sigma_0^2)}{\sigma_t^2}\\
    &=\Big(\sigma_0 - \sqrt{P_t^\rho} - \frac{\sqrt{\baralpha_t}\sigma^2_0}{\sigma_t}\Big)^2 + \frac{(1-\baralpha_t)\sigma_0^2}{\sigma_t^2}, \text{ for } 0 \leq P_t^\rho \leq \big(\sigma_0-\frac{\sqrt{\baralpha_t}\sigma_0^2}{\sigma_t}\big)^2,
\end{align*}
which matches the optimal DP tradeoff derived in the last section.


\subsection{Converse Proof for Multivariate Gaussian Case} \label{App_sub-sec-DP_Converse_Multivariate_Gaussian}
For a $d$-dimensional source $X=(X_1, \ldots, X_d)\sim\mathcal{N}(\bfmu_0, \bfSigma_0)$, consider the eigen-decomposition of the covariance matrix 
\begin{align*}
    \bfSigma_0 = \bfQ\bfLambda_0\bfQ^\top,
\end{align*}
where $\bfQ$ is orthogonal and $\bfLambda_0$ is a diagonal matrix with positive eigenvalues $\bfLambda_0 = \diag(\lambda_1,\ldots,\lambda_d)$.
We then define
\begin{align*}
    Y = \bfQ^\top X,
\end{align*}
which implies $Y=(Y_1, \ldots, Y_d)\sim\mathcal{N}(\bfQ^\top\bfmu_0, \bfLambda_0)$. The components of $Y$ are mutually independent.

Given the received $Z_t=\sqrt{\baralpha_t}X+\sqrt{1-\baralpha_t}N,~N\sim\mathcal{N}(\mathbf{0}, \I)$, the DP function is defined as
\begin{align}
    D_t(P)=&\min_{p_{\hat{X}|Z_t}(\hat{\bfx}|\bfz_t)}\E[||X-\hat{X}||_2^2] \label{OW_dP}\\
    &\text{s.t.}~ W_2^2(p_X, p_{\hat{X}})\leq P.\notag
\end{align}

\begin{lemma}\label{lem_recon_linear}
    Without loss of optimality, for the optimization problem in \eqref{OW_dP}, we can restrict the conditional distribution $p_{\hat{X}|Z_t}(\hat{\bfx}|\bfz_t)$ as the following form: Let $\tilde{Z}_t=\bfQ^\top Z_t=\sqrt{\baralpha_t}Y + \sqrt{1-\baralpha_t} N_1$ and $\hat{Y}=\tildebfA\tilde{Z}_t+\tildebfb+\tildebfC N_2$, where $N_1, N_2\overset{i.i.d.}{\sim}\Normalnd$, $\tildebfA=\diag(\tilde{a}_{1}, \ldots, \tilde{a}_d)$ and $\tildebfC=\diag(\tilde{c}_{1}, \ldots, \tilde{c}_d)$ are diagonal matrices with $\tilde{c}_{\ell}\geq 0$ for $1\leq \ell\leq d$. Then $\hat{X} = \bfQ \hat Y$.
\end{lemma}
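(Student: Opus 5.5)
The plan is to reduce the problem in three moves. First, pass to the eigen-coordinates. Second, replace an arbitrary estimator by a jointly Gaussian one with the same second moments. Third, remove the off-diagonal couplings by a sign-symmetrization argument that exploits the independence of the components of $Y$.

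\textbf{Step 1 (rotation).} Since $\bfQ$ is orthogonal, $\|X-\hat X\|_2^2=\|Y-\hat Y\|_2^2$ with $\hat Y:=\bfQ^\top\hat X$. Likewise $W_2^2(p_X,p_{\hat X})=W_2^2(p_Y,p_{\hat Y})$, because an orthogonal map is an isometry and pushes couplings to couplings. Also $\tilde Z_t=\bfQ^\top Z_t=\sqrt{\baralpha_t}Y+\sqrt{1-\baralpha_t}N_1$, since $\bfQ^\top N\sim\Normalnd$. Hence it suffices to optimize over $p_{\hat Y|\tilde Z_t}$.

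Next, split off the means. Write $\hat Y=\E[\hat Y]+(\hat Y-\E[\hat Y])$. Both the MSE and $W_2^2$ decompose as $\|\E Y-\E\hat Y\|^2$ plus a term depending only on the centered variables. Choosing $\tildebfb$ so that $\E\hat Y=\E Y$ is therefore optimal, and from now on I assume all variables are centered.

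\textbf{Step 2 (Gaussianization).} Take any feasible $p_{\hat Y|\tilde Z_t}$. Let $(Y^g,\tilde Z_t,\hat Y^g)$ be jointly Gaussian with the same covariance as $(Y,\tilde Z_t,\hat Y)$.
\begin{itemize}
\item The distortion $\E\|Y-\hat Y\|^2$ depends only on second moments, so it is unchanged.
\item By the Gelbrich bound, $W_2^2(p_Y,p_{\hat Y})\ge W_2^2\big(\mathcal N(\mathbf 0,\bfLambda_0),\mathcal N(\mathbf 0,\bfSigma_{\hat Y})\big)$, so the perception value can only decrease.
\item The Gaussian triple must still be a valid estimator, i.e. satisfy the Markov chain $Y^g-\tilde Z_t-\hat Y^g$. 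For a Gaussian triple this is equivalent to $\Cov[Y,\hat Y]=\Cov[Y,\tilde Z_t]\Cov[\tilde Z_t]^{-1}\Cov[\tilde Z_t,\hat Y]$. This identity holds for the original triple: from $Y-\tilde Z_t-\hat Y$ and the linearity of $\E[Y|\tilde Z_t]$ we get $\Cov[Y,\hat Y]=\Cov[\E[Y|\tilde Z_t],\hat Y]$.
\end{itemize}
Thus we may take $\hat Y=\bfA\tilde Z_t+\bfC N_2$ with general matrices $\bfA$ and $\bfC\succeq 0$.

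\textbf{Step 3 (diagonalization; the main obstacle).} With full matrices, the Gelbrich term $\Tr\big(\bfLambda_0^{1/2}\bfSigma_{\hat Y}\bfLambda_0^{1/2}\big)^{1/2}$ couples coordinates, and optimizing over $\bfA$ directly is messy. I will instead symmetrize with a random sign matrix.

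Let $\mathbf S$ be drawn uniformly from the diagonal $\pm1$ matrices, independently of everything else. Because the components of $Y$ and $N_1$ are independent and centered, the pair $(\mathbf S Y,\mathbf S\tilde Z_t)$ has the same law as $(Y,\tilde Z_t)$. Consider the randomized decoder $\hat Y'=\mathbf S\hat Y(\mathbf S\tilde Z_t)$, where $\hat Y(\cdot)$ denotes the Step-2 estimator applied to its argument. The decoder can generate $\mathbf S$ from private randomness, so this is a valid estimator.
\begin{itemize}
\item Conditionally on each fixed $\mathbf S$, the joint law of $(Y,\hat Y')$ is the sign-conjugate of that of $(Y,\hat Y)$, so the distortion is unchanged.
\item $p_{\hat Y'}$ is a mixture of laws whose $W_2^2$ distance to $p_Y$ each equals $P$. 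Since $W_2^2(p_Y,\cdot)$ is convex in its second argument, the mixture is still feasible.
\item The averaging over $\mathbf S$ makes $\Cov[\hat Y',\tilde Z_t]$ diagonal, and it makes $\bfSigma_{\hat Y'}$ diagonal as well.
\end{itemize}

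Finally, apply Step 2 again to $\hat Y'$. This yields $\hat Y=\tildebfA\tilde Z_t+\tildebfC N_2$ with $\tildebfA=\Cov[\hat Y',\tilde Z_t]\Cov[\tilde Z_t]^{-1}$, which is diagonal. The conditional covariance is $\bfSigma_{\hat Y'}-\tildebfA\Cov[\tilde Z_t]\tildebfA^\top$, which is diagonal and positive semidefinite, so I take $\tildebfC$ to be its entrywise square root, giving $\tilde c_\ell\ge0$. Setting $\hat X=\bfQ\hat Y$ then gives the claimed form.

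The delicate points in this step are the convexity and invariance claims, and the need to check that the Markov structure survives the randomization. The latter holds because $\mathbf S$ is independent of $Y$ given $\tilde Z_t$.
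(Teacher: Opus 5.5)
Your proposal is correct, and it takes a genuinely different route from the paper.

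\textbf{The paper's route.} After the same rotation to eigen-coordinates, the paper uses the tensorization inequality $W_2^2(p_Y,p_{\hat Y})\ge\sum_\ell W_2^2(p_{Y_\ell},p_{\hat Y_\ell})$, which holds with equality when the pairs $(Y_\ell,\hat Y_\ell)$ are independent across $\ell$. It then appeals to Freirich et al.\ to restrict to independent coordinates. Next it cites prior work for the claim that the optimal $\hat Y$ is jointly Gaussian with $Y$, and hence with $\tilde Z_t$. Finally it writes each jointly Gaussian pair as $\hat Y_\ell=\tilde a_\ell\tilde Z_{t,\ell}+\tilde b_\ell+\tilde c_\ell N$.

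\textbf{Your route.} You prove both reductions yourself.
\begin{itemize}
\item Gaussianization comes from the Gelbrich bound plus an explicit check, via $\Cov[Y,\hat Y]=\Cov[Y,\tilde Z_t]\Cov[\tilde Z_t]^{-1}\Cov[\tilde Z_t,\hat Y]$, that the Gaussian surrogate still satisfies the Markov chain $Y-\tilde Z_t-\hat Y$.
\item Decoupling comes from random sign flips. These use the sign-invariance of the centered law of $(Y,\tilde Z_t)$ and the convexity of $W_2^2(p_Y,\cdot)$ under mixtures. They force $\Cov[\hat Y',\tilde Z_t]$ and $\bfSigma_{\hat Y'}$ to be diagonal, and a second Gaussianization then yields a diagonal $\tilde{\mathbf{A}}$ and $\tilde{\mathbf{C}}$.
\end{itemize}

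\textbf{What each buys.} Your argument is self-contained and more rigorous. The paper never constructs the independent-coordinate estimator it invokes; one would resample the other coordinates of $\tilde Z_t$ with private randomness. The paper also leaves implicit that its Gaussian optimizer is realizable from $\tilde Z_t$, and your Markov-chain check supplies exactly this. The paper's route, in turn, exhibits directly the per-coordinate decomposition $\sum_\ell W_2^2(p_{Y_\ell},p_{\hat Y_\ell})$ that its KKT analysis uses. In your version that decomposition only follows afterwards, from the Gelbrich formula with the commuting diagonal covariances $\bfLambda_0$ and $\bfSigma_{\hat Y}$.

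\textbf{One small slip.} In Step 3, each mixture component has $W_2^2$ distance to $p_Y$ equal to that of the original estimator, which is at most $P$, not necessarily equal to $P$.
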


\begin{proof}
    For any $Y=\bfQ^\top X$ and $\hat{X} = \bfQ \hat{Y}$, we have 
    \begin{align}
        \E[||X-\hat{X}||^2_2] \overset{(a)}{=} \E[||Y-\hat{Y}||^2_2] = \sum_{\ell=1}^d\E[(Y_\ell-\hat{Y}_\ell)^2_2], \label{D_ortho_tensor_transform}
    \end{align}
    where $(a)$ follows from the invariance of Euclidean distance under orthogonal matrix. Meanwhile,
    \begin{align}
        W_2^2(p_X, p_{\hat X}) \overset{(b)}{=} W_2^2(p_{Y}, p_{\hat Y}) \overset{(c)}{\geq} \sum_{i=1}^{d} W_2^2(p_{Y_\ell}, p_{\hat {Y}_\ell}), \label{P_ortho_tensor_transform}
    \end{align}
    where $(b)$ follows from the invariance of Wasserstein-2 under unitary transformations; $(c)$ follows from the tensorization property of Wasserstein-2 distance and the equality holds if $(Y_i,\hat{Y}_i)$ and $(Y_j,\hat{Y}_j)$ are independent for any $i\neq j$ \cite{Invitation_Wasserstein_space}. Thus, we can optimize on $p_{\hat{Y}|\tilde{Z}_t}(\hat{\bfy}|\tildebfz_t)$ instead of $p_{\hat{X}|Z_t}(\hat{\bfx}|\bfz_t)$ and assume that $\hat Y=(\hat{Y}_1,\ldots, \hat{Y}_d)$ has independent components without loss of optimality \cite{DP-tradeoff_Wasserstein_Freirich2021}.

    Furthermore, as discussed in \cite{ScoreDP_25_CVPR} and \cite{RDP_vector_Gaussian_Qian2025}, the optimal $\hat Y$ must be jointly Gaussian with $Y$, which implies that $\tilde{Z}_t$ and $\hat Y$ are jointly Gaussian. Together with the independence of components within each $\tilde{Z}_t$ and $\hat Y$, we have $(\tilde{Z}_{t,\ell}, Y_\ell)$ are jointly Gaussian for each $\ell$ and $\{(\tilde{Z}_{t,\ell}, Y_\ell)\}_{\ell=1}^d$ are mutually independent with each other. 
    
    For any bivariate Gaussian $U\sim\mathcal{N}(\mu_u, \sigma_u^2)$, $V\sim\mathcal{N}(\mu_v, \sigma_v^2)$, and given covariance $\Cov[U, V] = \theta$, one can express $U$ and $V$ as 
    \begin{align*}
        &U\sim\mathcal{N}(\mu_u, \sigma_u^2)\\
        &V = \frac{\theta}{\sigma_u^2}U - \frac{\sigma_v}{\sigma_u}\mu_u + \mu_v + \sqrt{\sigma_v^2-\frac{\theta^2}{\sigma_u^2}}N, ~N\sim\Normaloned.
    \end{align*}
    
    Thus, for any possible first and second moments induced by a conditional distribution $p_{\hat{Y}_\ell|Z_{t,\ell}}$ with $(\tilde{Z}_{t,\ell}, \hat Y_\ell)$ being jointly Gaussian, we can express the conditional relationship with a linear expression $\hat Y_\ell = \tilde{a}_\ell \tilde{Z}_{t,\ell} + \tilde{b}_\ell + \tilde{c}_\ell N, ~N\sim\Normaloned$ with adjustable $\tilde{a}_\ell$, $\tilde{b}_\ell$, and $\tilde{c}_\ell$.

    In summary, it is sufficient to consider the reconstruction of $\hat Y$ as $\hat{Y}=\tildebfA\tilde{Z}_t+\tildebfb+\tildebfC N$, where $N\sim\Normalnd$, $\tildebfA=\diag(\tilde{a}_{1}, \ldots, \tilde{a}_d)$ and $\tildebfC=\diag(\tilde{c}_{1}, \ldots, \tilde{c}_d)$ are diagonal matrices with $\tilde{c}_{\ell}\geq 0$ for $1\leq \ell\leq d$. Note that $\tilde{Z}_t = \bfQ^\top Z_t = \sqrt{\baralpha_t}\bfQ^\top X + \sqrt{1-\baralpha_t}\bfQ^\top N = \sqrt{\baralpha_t}Y + \sqrt{1-\baralpha_t} N^{'}, ~N^{'}\sim\Normalnd$.

    
\end{proof}

    Now we proceed to derive the optimal solution to the DP tradeoff \eqref{OW_dP}. For $Y=(Y_1, \ldots, Y_d)\sim\mathcal{N}(\bfQ^\top\bfmu_0, \bfLambda_0)$, and $\bfLambda_0 = \diag(\lambda_1,\ldots,\lambda_d)$, denote 
    \begin{itemize}
        \item $\lambda_\ell^{(t)}:=\baralpha_t\lambda_\ell + (1-\baralpha_{t})$ representing the variance of $\tilde Z_{t,\ell}$;
        \item $\mu_{y,\ell}$ as the mean of $Y_\ell$;
        \item $\hat\mu_{y,\ell}:= \tilde{a}_\ell\sqrt{\baralpha_t}\mu_{y,\ell}+\tilde{b}_\ell$ as the mean of $\hat Y_\ell$. 
    \end{itemize}

    Now, the optimization problem can be written as 
    \begin{align*}
        D_t(P)=&\min_{\tildebfA, \tildebfb, \tildebfC}  \sum_{\ell=1}^d\E\big[(Y_\ell-\hat{Y}_\ell)^2_2\big] = \sum_{\ell=1}^d \big(\mu_{y,\ell}-\hat{\mu}_{y,\ell}\big)^2 + \lambda_\ell + (\tilde{c}_\ell^2 + \tilde{a}_\ell^2\lambda_\ell^{(t)}) - 2\tilde{a}\sqrt{\baralpha_t}\lambda_\ell\\
        &~\text{s.t. } \sum_{\ell=1}^{d} W_2^2(p_{Y_\ell}, p_{\hat {Y}_\ell}) = \sum_{\ell=1}^{d} \big(\mu_{y,\ell}-\hat{\mu}_{y,\ell}\big)^2 + \Big(\sqrt{\lambda}_\ell - \sqrt{\tilde{c}_\ell^2 + \tilde{a}_\ell^2\lambda_\ell^{(t)}}\Big)^2 \leq P.
    \end{align*}

    We can set $\mu_{y,\ell}=\hat{\mu}_{y,\ell}$ (i.e., $\tilde{b}_\ell = \mu_{y,\ell} - \tilde{a}_\ell\sqrt{\baralpha_t}\mu_{y,\ell}$) and $\tilde{c}_\ell=0$ without loss of optimality. Let $f_\ell = \sqrt{\frac{\tilde{a}_\ell^2\lambda_\ell^{(t)}}{\lambda_\ell}}$. The optimization problem (\ref{OW_dP}) now becomes
    \begin{align*}
        D_t(P)=&\min_{\{f_\ell\}_{\ell=1}^d} \sum_{\ell=1}^d  \lambda_\ell \bigg(f_\ell - \sqrt{\frac{\baralpha_t\lambda_\ell}{\lambda_\ell^{(t)}}} \bigg)^2 + \frac{(1-\baralpha_t)\lambda_\ell}{\lambda_\ell^{(t)}}\\
        &~\text{s.t. } \sum_{\ell=1}^{d} \lambda_\ell\big(1-f_\ell\big)^2 \leq P,\\
        &~~~~~~~ f_\ell\geq 0.
    \end{align*}

    Consider the following KKT conditions:
    \begin{align}
        &\frac{\partial }{\partial f_\ell} \bigg[ \sum_{\ell=1}^d  \lambda_\ell \bigg(f_\ell - \sqrt{\frac{\baralpha_t\lambda_\ell}{\lambda_\ell^{(t)}}} \bigg)^2 + \frac{(1-\baralpha_t)\lambda_\ell}{\lambda_\ell^{(t)}} + \nu_0 \Big( \sum_{\ell=1}^{d} \lambda_\ell\big(1-f_\ell\big)^2 - P \Big) - \sum_{\ell=1}^d\nu_\ell f_\ell  \bigg], \notag \\
        &= 2\lambda_\ell \Big(f_\ell - \sqrt{\frac{\baralpha_t\lambda_\ell}{\lambda_\ell^{(t)}}} \Big) + 2\nu_0 \lambda_\ell(f_\ell-1) - \nu_\ell = 0, ~\text{ for } \ell=1,\ldots, d,  \label{Eq_KKT_a}\\
        &\nu_\ell \geq 0, ~\text{ for } \ell=1,\ldots, d,  \label{Eq_KKT_b}\\
        &\nu_0 \Big( \sum_{\ell=1}^{d} \lambda_\ell\big(1-f_\ell\big)^2 - P \Big) = 0,  \label{Eq_KKT_c}\\
        &\nu_\ell f_\ell = 0, ~\text{ for } \ell=1,\ldots, d.  \label{Eq_KKT_d}
    \end{align} 
    From \eqref{Eq_KKT_a}, we can solve $f_\ell = \frac{\sqrt{\baralpha_t}\sqrt{\lambda_\ell}+\nu_0\sqrt{\lambda_\ell^{(t)}}}{(1+\nu_0)\sqrt{\lambda_\ell^{(t)}}} +\frac{\nu_\ell}{2\lambda(1+\nu_0)}$.

    Since $\lambda_\ell>0$, by \eqref{Eq_KKT_b} and \eqref{Eq_KKT_d}, we have $\nu_\ell$ must be zero for $\ell=1,\cdots,d$. Thus, $f_\ell = \frac{\sqrt{\baralpha_t}\sqrt{\lambda_\ell}+\nu_0\sqrt{\lambda_\ell^{(t)}}}{(1+\nu_0)\sqrt{\lambda_\ell^{(t)}}}>0$. Plugging the value of $f_\ell$ into \eqref{Eq_KKT_c}, we have 
    \begin{align}
        \nu_0 \bigg(\sum_{\ell=1}^d \lambda_\ell \Big(1 - \frac{\sqrt{\baralpha_t}\sqrt{\lambda_\ell}+\nu_0\sqrt{\lambda_\ell^{(t)}}}{(1+\nu_0)\sqrt{\lambda_\ell^{(t)}}}\Big)^2 - P \bigg) = \nu_0 \Big( \frac{1}{(1+\nu_0)^2} \sum_{\ell=1}^d \frac{\lambda_\ell}{\lambda_\ell^{(t)}}\big(\sqrt{\lambda_\ell^{(t)}}-\sqrt{\baralpha_t}\sqrt{\lambda_\ell}\big)^2 - P \Big) = 0. \label{Eq_nu0}
    \end{align}
    If $\nu_0=0$, we have $f_\ell=\sqrt{\frac{\baralpha_t\lambda_\ell}{\lambda_\ell^{(t)}}}$, and $P$ should be larger than $\sum_{\ell=1}^d \frac{\lambda_\ell}{\lambda_\ell^{(t)}}\big(\sqrt{\lambda_\ell^{(t)}}-\sqrt{\baralpha_t}\sqrt{\lambda_\ell}\big)^2$ to satisfy primal feasibility. Then, the distortion level is $D_t = \sum_{\ell=1}^d\frac{(1-\baralpha_t)\lambda_\ell}{\lambda_\ell^{(t)}}$.

   Here $\tilde{a}_\ell=\frac{\sqrt{\baralpha_t}\lambda_\ell}{\lambda_\ell^{(t)}}, \tilde{b}_\ell = \frac{1-\baralpha_t}{\lambda_\ell^{(t)}}\mu_{y,\ell}, \tilde{c}_\ell=0$, and the distribution of $\tilde{Y}_\ell$ is $\mathcal{N}(\mu_{y,\ell}, \frac{\baralpha_t\lambda_\ell^2}{\lambda_\ell^{(t)}})$.

    If $\nu_0>0$, by \eqref{Eq_nu0}, we have $\frac{1}{(1+\nu_0)^2} \sum_{\ell=1}^d \frac{\lambda_\ell}{\lambda_\ell^{(t)}}\big(\sqrt{\lambda_\ell^{(t)}}-\sqrt{\baralpha_t}\sqrt{\lambda_\ell}\big)^2 = P$, which gives us 
    \begin{align}
        \nu_0 = \sqrt{\frac{1}{P}\sum_{\ell=1}^d \frac{\lambda_\ell}{\lambda_\ell^{(t)}}\big(\sqrt{\lambda_\ell^{(t)}}-\sqrt{\baralpha_t}\sqrt{\lambda_\ell}\big)^2} - 1.\label{nu_0}
    \end{align}
    In this case, $P<\sum_{\ell=1}^d \frac{\lambda_\ell}{\lambda_\ell^{(t)}}\big(\sqrt{\lambda_\ell^{(t)}}-\sqrt{\baralpha_t}\sqrt{\lambda_\ell}\big)^2$ to ensure $\nu>0$. With \eqref{nu_0}, we can obtain the value of $f_\ell$ as 
    \begin{align}
        f_\ell = 1 - \frac{\Big(1-\sqrt{\frac{\baralpha_t\lambda_\ell}{\lambda_\ell^{(t)}}}\Big)\cdot \sqrt{P}}{\sqrt{\sum_{i=1}^d\frac{\lambda_i}{\lambda_i^{(t)}}\big(\sqrt{\lambda_i^{(t)}}-\sqrt{\baralpha_t}\sqrt{\lambda_i}\big)^2}}.\label{f_ell}
    \end{align}
    Then, plugging \eqref{f_ell} into the expression of $D_t$, we can get the optimal DP curve for multivariate Gaussian case:
    \begin{small}
    \begin{align}
        D_t &= \sum_{\ell=1}^d \lambda_\ell \Bigg(1+ \frac{\Big(\sqrt{\frac{\alpha_t\lambda_\ell}{\lambda_\ell^{(t)}}}-1\Big)\cdot \sqrt{P}}{\sqrt{\sum_{i=1}^d\frac{\lambda_i}{\lambda_i^{(t)}}\big(\sqrt{\lambda_i^{(t)}}-\sqrt{\baralpha_t}\sqrt{\lambda_i}\big)^2}} - \sqrt{\frac{\baralpha_t\lambda_\ell}{\lambda_\ell^{(t)}}}\Bigg)^2 + \sum_{\ell=1}^d \frac{(1-\baralpha_t)\lambda_\ell}{\lambda_\ell^{(t)}} \notag\\
        &=\sum_{\ell=1}^d \lambda_\ell \frac{\Big( 1 - \sqrt{\frac{\baralpha_t\lambda_\ell}{\lambda_\ell^{(t)}}}\Big)^2 \cdot \Big( \sqrt{\sum_{i=1}^d \frac{\lambda_i}{\lambda_i^{(t)}}\big(\sqrt{\lambda_i^{(t)}}-\sqrt{\baralpha_t}\sqrt{\lambda_i}\big)^2} - \sqrt{P} \Big)^2}{\sum_{i=1}^d\frac{\lambda_i}{\lambda_i^{(t)}}\big(\sqrt{\lambda_i^{(t)}}-\sqrt{\baralpha_t}\sqrt{\lambda_i}\big)^2} + \sum_{\ell=1}^d\frac{(1-\baralpha_t)\lambda_\ell}{\lambda_\ell^{(t)}}\notag\\
        &=\frac{\Big( \sqrt{\sum_{i=1}^d \frac{\lambda_i}{\lambda_i^{(t)}}\big(\sqrt{\lambda_i^{(t)}}-\sqrt{\baralpha_t}\sqrt{\lambda_i}\big)^2} - \sqrt{P} \Big)^2}{\sum_{i=1}^d\frac{\lambda_i}{\lambda_i^{(t)}}\big(\sqrt{\lambda_i^{(t)}}-\sqrt{\baralpha_t}\sqrt{\lambda_i}\big)^2} \sum_{\ell=1}^d \frac{\lambda_\ell}{\lambda_\ell^{(t)}} \Big( \sqrt{\lambda_\ell^{(t)}} - \sqrt{\baralpha_t\lambda_\ell} \Big)^2 + \sum_{\ell=1}^d\frac{(1-\baralpha_t)\lambda_\ell}{\lambda_\ell^{(t)}}\notag\\
        &=\Bigg( \sqrt{\sum_{i=1}^d \frac{\lambda_i}{\lambda_i^{(t)}}\big(\sqrt{\lambda_i^{(t)}}-\sqrt{\baralpha_t}\sqrt{\lambda_i}\big)^2} - \sqrt{P} \Bigg)^2 + \sum_{\ell=1}^d\frac{(1-\baralpha_t)\lambda_\ell}{\lambda_\ell^{(t)}}. \label{Eq_DP_multivariate_optimal}
    \end{align}
\end{small}

\subsection{Achievability} \label{App_sub-sec-DW_achievability_Multivariate_Gaussian}
    For $Y\sim\mathcal{N}(\bfmu_y, ~\bfLambda_0)$ and $\tilde{Z}_t=\sqrt{\baralpha_t}Y + \sqrt{1-\baralpha_t} N^{'}, N^{'}\sim\Normalnd$, let $$\bfLambda_k = \baralpha_k\bfLambda_0 + (1-\baralpha_k)\I = \diag(\lambda_1^{(k)}, \ldots, \lambda_d^{(k)}), ~\text{ for } k\in\{0, \ldots, t\},$$
    where $\lambda_\ell^{(k)} = \baralpha_k\lambda_\ell + (1-\baralpha_k)$.

    Similar to \cite{DiffC21}, we can decompose the score-scaled PF-ODE into $d$ separate ODEs, each of which is given by
    \begin{align}
        d\overleftarrow{Z}_{\tau,\ell} = \Big[-\!\frac{1}{2}\beta(\tau)\overleftarrow{Z}_{\tau, \ell}-\frac{1}{2}(2\!-\!\rho_\ell)\beta(\tau)\nabla_{z_{\tau,\ell}}\log p_{Z_{\tau,\ell}}(\overleftarrow{Z}_{\tau,\ell})\Big]d\tau, ~ \overleftarrow{Z}_{t,\ell}=\tilde{Z}_{t,\ell}\sim \sqrt{\baralpha_t}Y_\ell &+ \sqrt{1\!-\!\baralpha_t} N, \label{Eq_perdim_ODE}\\
        & \text{for } \ell=1,\ldots, d.\notag
    \end{align}

    For each $\ell\in \{1, \ldots, d\}$, we can always find a $\rho_\ell\in[0,1]$ such that
    \begin{align}
        \prod_{i=0}^{t-1} \sqrt{\rho_\ell+(1-\rho_\ell)\alpha_{i+1}\frac{\lambda_\ell^{(i)}}{\lambda^{(i+1)}_\ell}} = 1 - \frac{\Big(1-\sqrt{\frac{\baralpha_t\lambda_\ell}{\lambda_\ell^{(t)}}}\Big)\cdot \sqrt{P}}{\sqrt{\sum_{i=1}^d\lambda_i\big(1-\sqrt{ \frac{\baralpha_t\lambda_i}{\lambda_i^{(t)}} }\big)^2}}.\label{Eq_Rho_ell}
    \end{align}

    Then, according to \eqref{D_rho_1d} and \eqref{P_rho_1d}, the achievable distortion and Wasserstein levels in each dimension $\ell$ given by the per-dimensional ODE in \eqref{Eq_perdim_ODE} is
    \begin{align*}
        D_\ell &= \lambda_\ell \Bigg( 1- \frac{\Big(1-\sqrt{\frac{\baralpha_t\lambda_\ell}{\lambda_\ell^{(t)}}}\Big)\cdot \sqrt{P}}{\sqrt{\sum_{i=1}^d\lambda_i\big(1-\sqrt{ \frac{\baralpha_t\lambda_i}{\lambda_i^{(t)}} }\big)^2}} - \sqrt{ \frac{\baralpha_t\lambda_\ell}{\lambda_\ell^{(t)}} } \Bigg)^2 + \frac{(1-\baralpha_t)\lambda_\ell}{\lambda_\ell^{(t)}},\\
        P_\ell &= \frac{\lambda_\ell\Big(1-\sqrt{\frac{\baralpha_t\lambda_\ell}{\lambda_\ell^{(t)}}}\Big)^2\cdot P}{\sum_{i=1}^d\lambda_i\big(1-\sqrt{ \frac{\baralpha_t\lambda_i}{\lambda_i^{(t)}} }\big)^2}.
    \end{align*}
    Denote the reconstruction in dimension $\ell$ as $\hat{Y}_\ell$.  Overall, we can obtain that the achievable MSE and Wasserstein-2 divergence is 
    \begin{small}
    \begin{align*}
        \E[||X-\hat{X}||^2_2] &= \E[||Y-\hat{Y}||^2_2] = \sum_{\ell=1}^d\E[(Y_\ell-\hat{Y}_\ell)^2] = \sum_{\ell=1}^d D_\ell\\
        & =\Bigg( \sqrt{\sum_{i=1}^d \frac{\lambda_i}{\lambda_i^{(t)}}\big(\sqrt{\lambda_i^{(t)}}-\sqrt{\baralpha_t}\sqrt{\lambda_i}\big)^2} - \sqrt{P} \Bigg)^2 + \sum_{\ell=1}^d\frac{(1-\baralpha_t)\lambda_\ell}{\lambda_\ell^{(t)}},\\
        W_2^2(p_X, p_{\hat X}) &= W_2^2(p_{Y}, p_{\hat Y}) \overset{(c)}{=} \sum_{i=1}^{d} W_2^2(p_{Y_\ell}, p_{\hat {Y}_\ell}) = \sum_{\ell=1}^d P_\ell = P.
    \end{align*}
    \end{small}
    Thus, the achievable DP tradeoff is
    \begin{align*}
        D_t^\rho = \Bigg( \sqrt{\sum_{i=1}^d \frac{\lambda_i}{\lambda_i^{(t)}}\big(\sqrt{\lambda_i^{(t)}}-\sqrt{\baralpha_t}\sqrt{\lambda_i}\big)^2} - \sqrt{P_t^\rho} \Bigg)^2 + \sum_{\ell=1}^d\frac{(1-\baralpha_t)\lambda_\ell}{\lambda_\ell^{(t)}},
    \end{align*}
    which coincide with the optimal DP tradeoff derived in \eqref{Eq_DP_multivariate_optimal}. The optimal DP tradeoff can be achieved by component-wise reconstruction with delicate design of $\rho_\ell$ as in \eqref{Eq_Rho_ell}.

\section{Proof of Theorem \ref{Thm_RDP_optimal_gaussian}} \label{App_Proof_Thm_RDP_Optimal_Gaussian}
Let $I_t=I(X; \sqrt{\baralpha_t}X + \sqrt{1-\baralpha_t} N)$. We have 
\begin{align*}
    I_t &= I(X; \sqrt{\baralpha_t}X + \sqrt{1-\baralpha_t} N)\\
    &= h(\sqrt{\baralpha_t}X + \sqrt{1-\baralpha_t} N) - h(\sqrt{\baralpha_t}X + \sqrt{1-\baralpha_t} N|X)\\
    &=\frac{1}{2}\log\big(2\pi e(\baralpha_t\sigma_0^2+1-\baralpha_t)\big) - \frac{1}{2}\log\big(2\pi e(1-\baralpha_t)\big)\\
    &=\frac{1}{2}\log\Big(\frac{\baralpha_t}{1-\baralpha_t}\sigma_0^2+1\Big),
\end{align*}
where $h(\cdot)$ denotes the differential entropy for continuous random variables. Given a noising level $t$, the RCC encoder transmits the codeword $M$, and the RCC decoder produces $Z_t=\sqrt{\baralpha_t}X + \sqrt{1-\baralpha_t} N$. Subsequently, the score-scaled PF-ODE reconstructs $\hat{X}^\rho$ for a chosen $\rho$ from $Z_t$.

According to the strong functional representation lemma \cite{Strong_Functional_Rep_Lemma}, the one-shot achievable rate $R_t^1$ is bounded by the cross-entropy between the distribution of $M$ and the Zipf distribution $\text{Zipf}(1+1/(I(X;Z_t)+1))$, i.e., 
\begin{align*}
    H(M) \leq I(X;Z_t)+\log (I(X;Z_t)+1)+4 \text{ bits}.
\end{align*}
Thus, the one-shot and asymptotic \emph{achievable} rates (denoted as $R_t^1$ and $R_t^{\infty}$) are \cite{Strong_Functional_Rep_Lemma}
\begin{align*}
    I_t \leq &R_t^1 \leq I_t + \log (I_t + 1) + 4,\\
    &R_t^{\infty} = I_t.
\end{align*}

According to \eqref{D_rho_1d} and \eqref{P_rho_1d}, the \emph{achievable} distortion and perception levels by adjusting compression parameter $t$ and score-scaling parameter $\rho$ are
\begin{small}
\begin{align}
    D_t^{\rho} & = \sigma_0^2 + \sigma_0^2\prod_{i=0}^{t-1}\Big(\rho + (1-\rho) \alpha_{i+1}\frac{\sigma_i^2}{\sigma_{i+1}^2}\Big) - 2\frac{\sigma_0}{\sigma_t}\prod_{i=0}^{t-1}\Big(\rho+(1-\rho)\alpha_{i+1}\frac{\sigma_{i}^2}{\sigma_{i+1}^2}\Big)^{\frac{1}{2}}\sqrt{\bar{\alpha}_t}\sigma_0^2\notag\\
    &=\sigma_0^2 \Big( \prod_{i=0}^{t-1}\big(\rho+(1-\rho)\alpha_{i+1}\frac{\sigma_{i}^2}{\sigma_{i+1}^2}\big)^{\frac{1}{2}} - \frac{\sqrt{\baralpha_t}\sigma_0}{\sigma_t} \Big)^2 + \sigma_0^2 -\frac{\baralpha_t\sigma_0^4}{\sigma_t^2} \tag{\ref{D_rho_1d}} \\
    &=\sigma_0^2 \Big(f_{t}^\rho - \frac{\sqrt{\baralpha_t}\sigma_0}{\sigma_t} \Big)^2 + \sigma_0^2 -\frac{\baralpha_t\sigma_0^4}{\sigma_t^2}, \label{D_rho_1dsim}\\
    P_t^{\rho} &= \Big(\sigma_0 - \sigma_0 \prod_{i=0}^{t-1}\big(\rho+(1-\rho)\alpha_{i+1}\frac{\sigma_{i}^2}{\sigma_{i+1}^2}\big)^{\frac{1}{2}}\Big)^2\tag{\ref{P_rho_1d}}\\
    &=\sigma_0^2\big(1 - f_t^\rho\big)^2, \label{P_rho_1dsim}
\end{align}
\end{small}
$\!\!\text{where}$ $f_t^\rho:=\prod_{i=0}^{t-1}\big(\rho+(1-\rho)\alpha_{i+1}\frac{\sigma_{i}^2}{\sigma_{i+1}^2}\big)^{\frac{1}{2}}$.

From \cite{UniversalRDPs}, the optimal RDP tradeoff for the scalar Gaussian source $X\sim\mathcal{N}(\mu_0,\sigma_0^2)$ is 
\begin{align*}
    R(D,P) = \begin{cases}
        \frac{1}{2}\log\frac{\sigma_0^2(\sigma_0-\sqrt{P})^2}{\sigma_0^2(\sigma_0-\sqrt{P})^2-(\sigma_0^2+(\sigma_0-\sqrt{P})^2-D)^2 / 4} &\text{if } \sqrt{P} < \sigma_0-\sqrt{|\sigma_0^2-D|},\\
        \max\{\frac{1}{2}\log\frac{\sigma_0^2}{D}, 0\} &\text{if } \sqrt{P} \geq \sigma_0-\sqrt{|\sigma_0^2-D|}.
    \end{cases}
\end{align*}

First, for $0< \rho \leq 1$, we have $f_t^\rho=\prod_{i=0}^{t-1}\big(\rho+(1-\rho)\alpha_{i+1}\frac{\sigma_{i}^2}{\sigma_{i+1}^2}\big)^{\frac{1}{2}}$ falls in $(\frac{\sqrt{\baralpha_t}\sigma_0}{\sigma_t}, ~1]$, which implies that $\sqrt{P_t^\rho} < \sigma_0-\sqrt{|\sigma_0^2-D_t^\rho|}$. 
Plugging \eqref{D_rho_1dsim} and \eqref{P_rho_1dsim} into the first case of $R(D,P)$ function, we have 
\begin{align*}
    R(D_t^\rho,P_t^\rho) &= \frac{1}{2}\log\bigg(\frac{\sigma_0^2\cdot\sigma_0^2(f_t^{\rho})^2}{\sigma_0^2\cdot\sigma_0^2(f_t^{\rho})^2-\big(\sigma_0^2+\sigma_0^2(f_t^{\rho})^2-\sigma_0^2(f_t^\rho-\frac{\sqrt{\baralpha_t}\sigma_0}{\sigma_t})^2-\frac{1-\baralpha_t}{\sigma_0}\big)^2/4}\bigg)\\
    &=\frac{1}{2}\log\bigg(\frac{4\sigma_0^4(f_t^{\rho})^2}{4\sigma_0^4(f_t^{\rho})^2-4\frac{\baralpha_t\sigma_0^2}{\sigma_t^2}(f_t^{\rho})^2\sigma_0^4}\bigg)\\
    &=\frac{1}{2}\log\Big(\frac{\baralpha_t}{1-\baralpha_t}\sigma_0^2+1\Big).
\end{align*}

For the second case, when $\rho=0$, $\sqrt{P_t^\rho} = \sigma_0-\sqrt{|\sigma_0^2-D_t^\rho|}$, the distortion now become the MMSE value
\begin{align*}
    D_t^0 = \sigma_0^2 -\frac{\baralpha_t\sigma_0^4}{\sigma_t^2},
\end{align*}
 and the optimal rate is 
\begin{align*}
    R(D_t^0, P_t^0) = \frac{1}{2}\log\frac{\sigma_0^2}{D_t^0} = \frac{1}{2}\log\Big(\frac{\baralpha_t}{1-\baralpha_t}\sigma_0^2+1\Big).
\end{align*}

 We can observe that in both cases, the optimal rate $R(D, P)$ given distortion level $D_t^{\rho}$ and perception level $P_t^{\rho}$ is $I_t=\frac{1}{2}\log\Big(\frac{\baralpha_t}{1-\baralpha_t}\sigma_0^2+1\Big)$, which coincides with the asymptotic achievable rate $I_t$ provided by PFR when transmitting $Z_t\sim \sqrt{\baralpha_t}X+\sqrt{1-\baralpha_t}N$.

 \begin{figure}[!b]
    \centering
    \vspace{0.5cm}
    \includegraphics[width=1\textwidth]{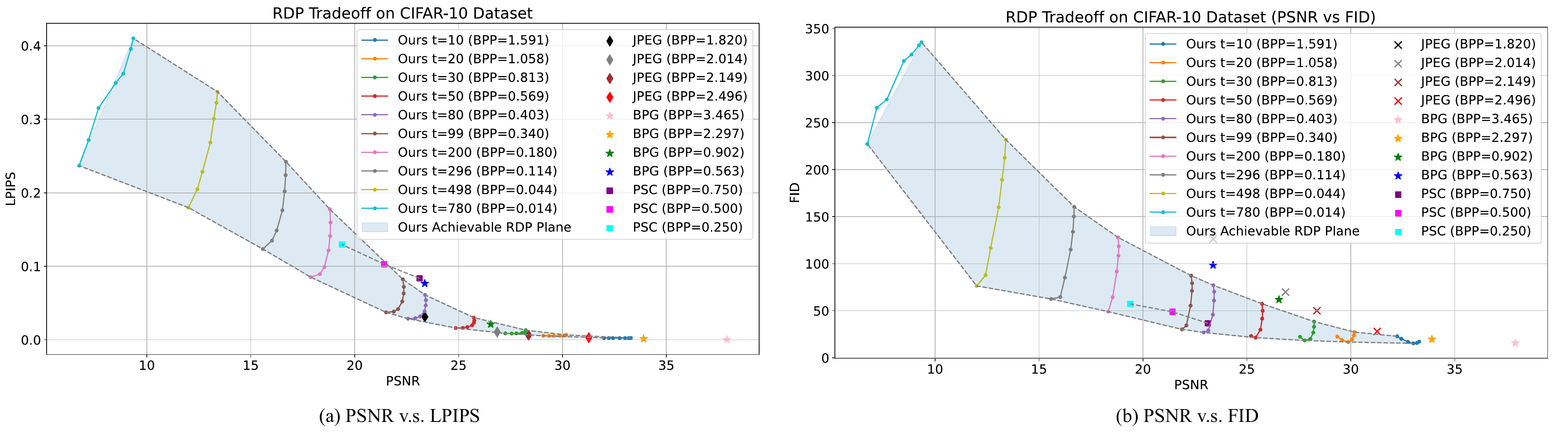}
    \caption{RDP curves on CIFAR-10 using different metrics.}
    \vspace{0.5cm}
    \label{Fig_CIFAR10_PSNR_LPIPS}
\end{figure}

\begin{figure}[!b]
    \centering
    \includegraphics[width=0.65\textwidth]{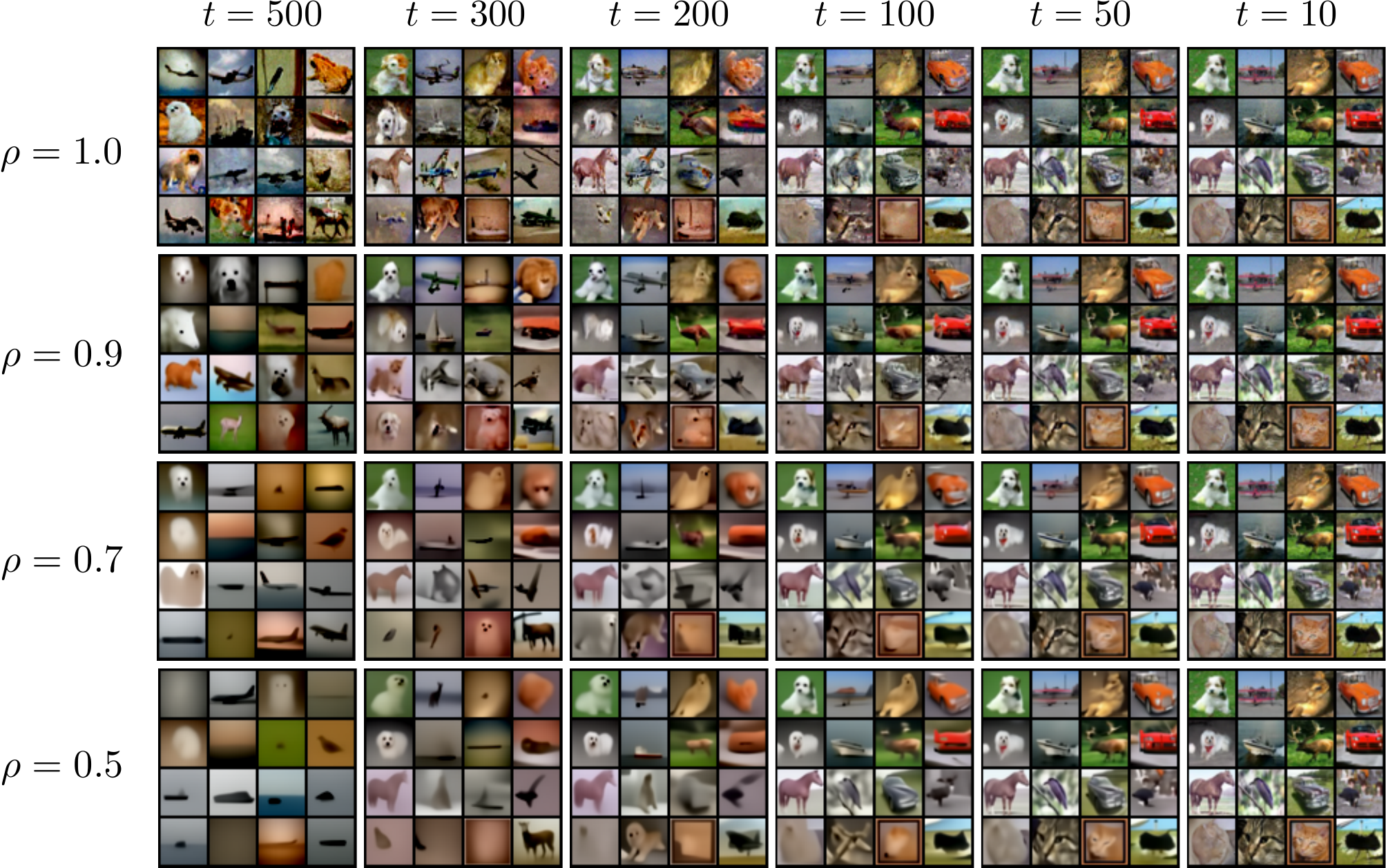}
    \caption{Sample reconstructions on CIFAR-10 under varying $t$ and $\rho$. Higher $\rho$ produces more vivid but less faithful images.}
    \label{Fig_CIFAR10_Samples}
\end{figure}

\section{More Experimental Results} \label{App_More_Experiments}

 \subsection{CIFAR-10 Dataset} \label{App_CIFAR10_Exp}

\textbf{Experimental details:}  
We use a pre-trained diffusion model provided by a third-party repository\footnote{\url{https://github.com/w86763777/pytorch-ddpm}}, which follows the original DDPM setup \cite{DDPM}. The training details can be found in the original repository. For FID computation, we compress and reconstruct 2,000 samples, extract features using the pre-trained Inception network, and compute the Fréchet distance using the empirical means and covariances of real and generated features. To generate the RDP curves in Figure~\ref{Fig_CIFAR10_RDP_MSE_LPIPS}, we vary the score-scaling parameter $\rho \in \{0.5, 0.6, 0.7, 0.8, 0.9, 0.95, 1\}$. All experiments are conducted on a single NVIDIA A100 GPU.

\textbf{More results:}
In Figures~\ref{Fig_CIFAR10_PSNR_LPIPS}, we present additional RDP results using PSNR as the distortion metric, and LPIPS and FID as the perception metrics. The tradeoff behavior remains consistent with our theoretical analysis: a higher $\rho$ leads to better perceptual quality but lower PSNR. We also provide qualitative examples in Figure~\ref{Fig_CIFAR10_Samples} showcasing reconstructions under different $t$ and $\rho$. These results further demonstrate the smooth and controllable tradeoff enabled by our method.

 \subsection{Kodak and DIV2K Datasets} \label{App_Kodak_Div2k_Exp}

\subsubsection{Experimental Details} \label{App_subsec-Exp-details}

We evaluate our method using two high-resolution datasets: Kodak (24 images of size $768 \times 512 \times 3$) and DIV2K validation (100 images at 2K resolution). We use pre-trained latent diffusion models, including Stable Diffusion (versions 1.5, 2.1, and SDXL)\cite{StableDiffusion} and Flux \cite{Flux}.  We follow the DiffC implementation \cite{DiffC_Pretrain} for the diffusion noise schedule and the CUDA-accelerated implementation of the PFR algorithm, specifically,
\begin{itemize}
    \item \textbf{Solver, steps, and guidance}: We use a standard 50-step DDIM scheduler \cite{song2021DDIM}. Actual denoising steps scale with the bitrate (e.g., 13 steps for $t=200$ at bpp=0.024). Since step counts are fixed, tolerances do not apply. No guidance is used.
    \item \textbf{RCC/PFR candidates}: Following the design in \cite{DiffC_Pretrain}, we transmit several chunks of 12-16 bits (about 2-10 chunks for $t>450$ and about 50 chunks for $t=20$), resulting in a bounded candidate pool of $2^{12}$ to $2^{16}$. Generating and encoding a 16-bit chunk takes $<3$ms.
    \item \textbf{Wall-clock time}: Latency scales with bitrate due to ODE steps. At low bitrates, encoding/decoding is comparable to or faster than HiFiC and CDC. At high bitrates, it is slower but remains significantly faster than DDCM (which uses 1000 DDPM steps). All experiments are conducted on a single NVIDIA A100 GPU.
\end{itemize}
Note that we evaluate the coding rate using the length of the actually compressed bitstream from an implementable encoder, not theoretical predictions.

\begin{figure}[!t]
    \centering
    \includegraphics[width=0.95\textwidth]{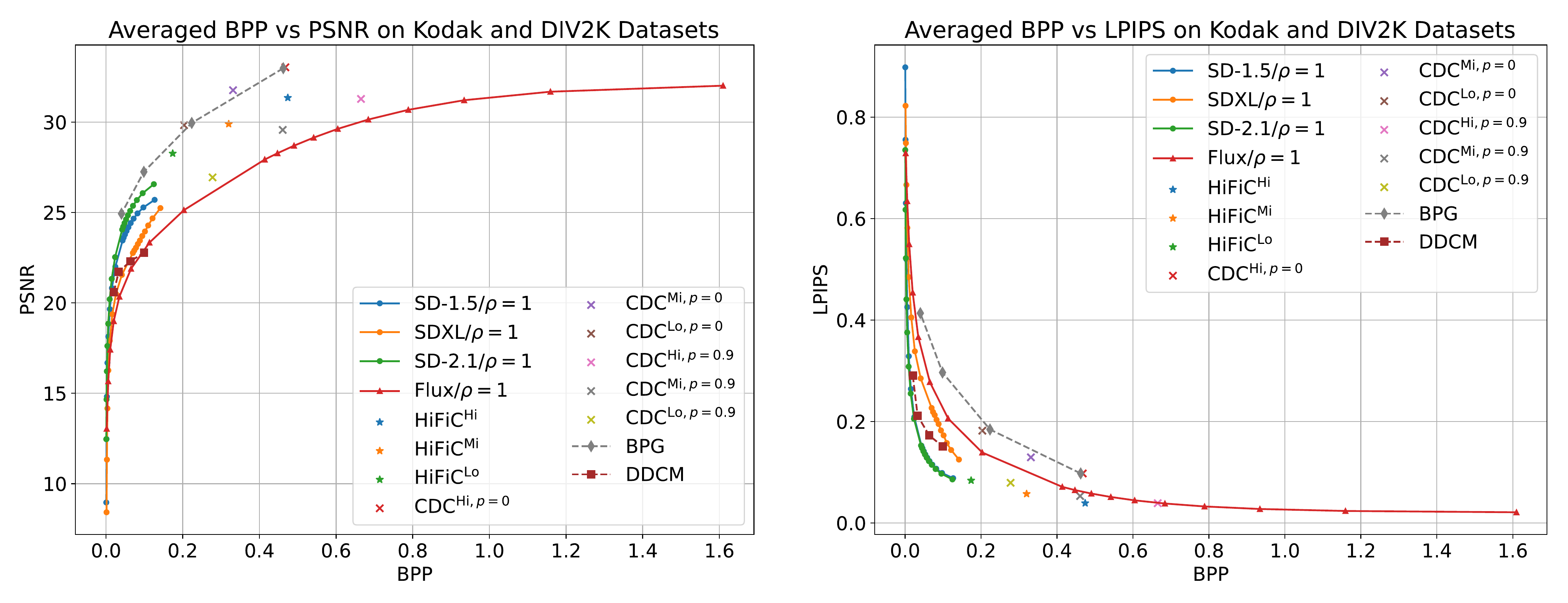}
    \vspace{-0.2cm}
    \caption{RDP metrics under $\rho = 1$ for SD1.5/2.1/XL and Flux.}\label{Fig_SD15_SDXL_RDP}
    \vspace{-0.2cm}
\end{figure}

\begin{figure*}[!t]
    \centering
    \includegraphics[width=0.95\textwidth]{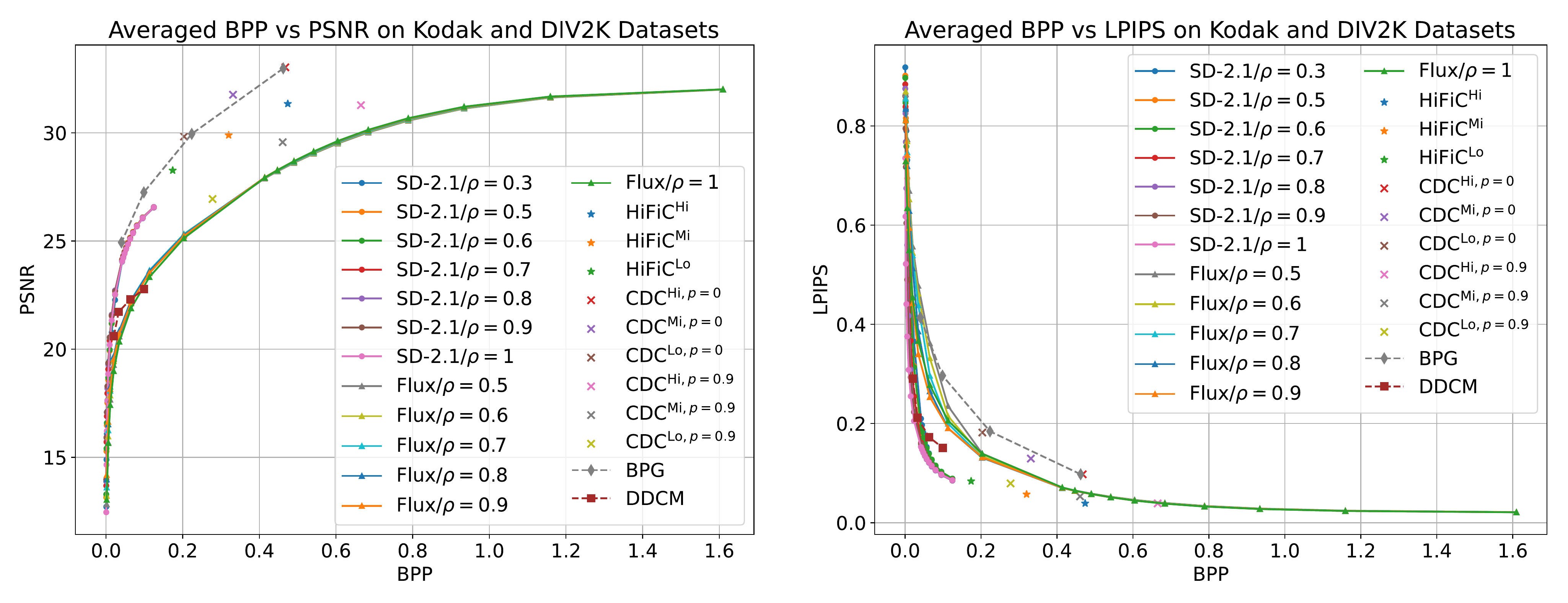}
    \vspace{-0.2cm}
    \caption{Effect of controlling $t$ and $\rho$ on different metrics for the Kodak and DIV2K datasets. Stable Diffusion 2.1 and Flux depict different rate-distortion (R-D) and rate-perception (R-P) curves.}\label{Fig_KD_metrics}
    \vspace{-0.2cm}
\end{figure*}

For DDCM \cite{DDCM_compression}, we follow their official implementation and configurations on the rate control. Specifically, we set $(K,M, C)=(256,1,1), (8192,1,1),  (2048,2,3), (2048,3,3)$ according to their paper to obtain the reported points.

In Figure~\ref{Fig_KD_RDP}, we choose different timesteps $t\in\{498, 296, 200, 99, 80, 50, 30, 20, 10\}$. Note that other choices of $t$ along the process of reverse diffusion sampling are also possible.  
The score-scaling parameter $\rho$ is varied on the latent space to explore the distortion-perception tradeoff under different compression levels $t$. Specifically, we use the following $\rho$ values for each model:
\begin{itemize}
    \item Stable Diffusion 2.1: $\rho \in \{0.75, 0.83, 0.85, 0.88, 0.9, 0.92, 0.93, 0.95, 1\}$,
    \item Flux: $\rho \in \{0.7, 0.75, 0.8, 0.85, 0.88, 0.9, 0.92, 0.95\}$.
\end{itemize}
The general rule is that increasing $\rho$ consistently improves perception (lower FID/LPIPS) at the cost of MSE. Because different models (e.g., SD vs. Flux) and datasets have distinct baseline statistical distributions, a per-model/per-dataset calibration is recommended. \emph{This is a one-time calibration}; then the target $\rho$ values can be reused during inference on this dataset. Importantly, our scheme only utilizes pre-trained diffusion models and requires no additional training for specific $t$ or $\rho$ values.

\begin{figure}[!t]
    \centering
    \includegraphics[width=1\textwidth]{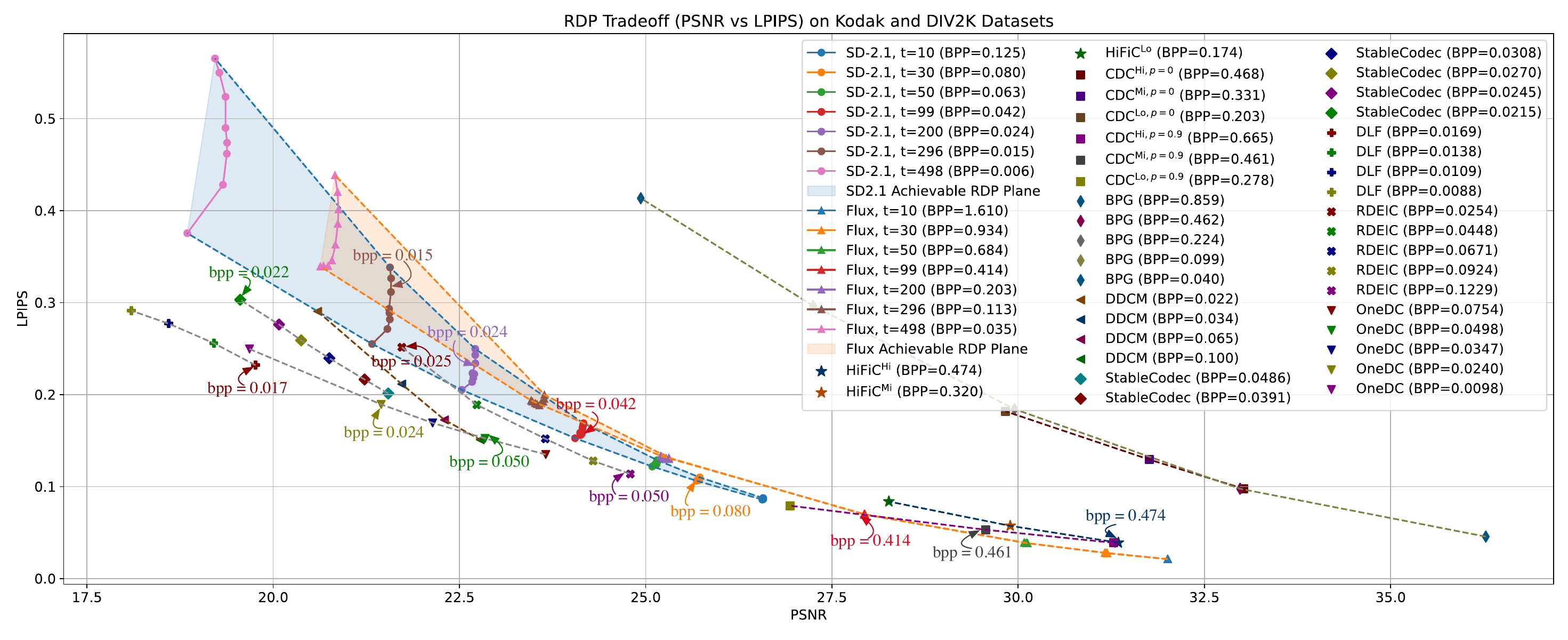}
    \caption{RDP curves for Kodak and DIV2K using PSNR vs. LPIPS under SD2.1 and Flux.}\label{Fig_Kodak_DIV2K_PSNR_LPIPS}
\end{figure}

\begin{figure}[!t]
    \centering
    \includegraphics[width=1\textwidth]{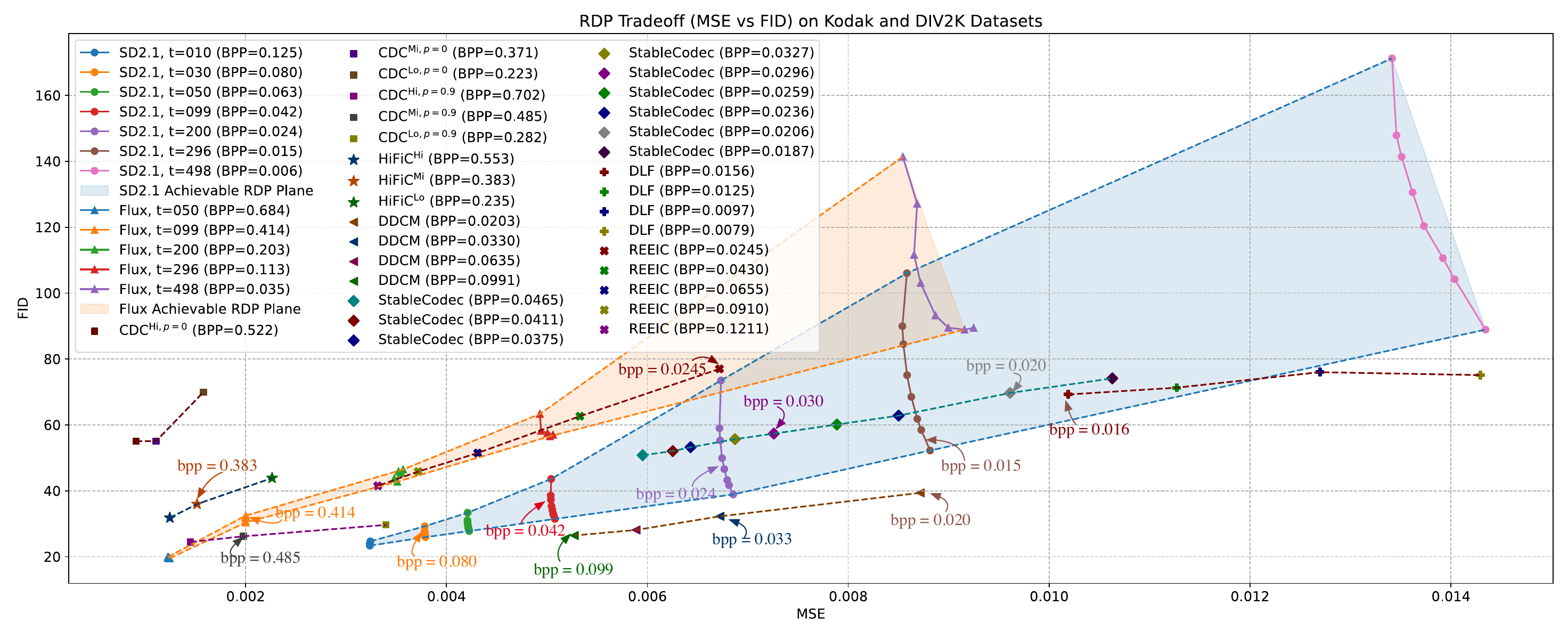}
    \caption{RDP curves for Kodak using MSE vs. FID under SD2.1 and Flux.}\label{Fig_Kodak_DIV2K_MSE_FID}
\end{figure}

\begin{table}[htbp]
\centering
\caption{MSE and FID values across different $\rho$ and $t$ values on Kodak dataset (SD 2.1)}
\label{tab:performance_metrics_MSE_FID}
\small 

\begin{tabular}{@{}l*{7}{c}@{}} 
\toprule
\multirow{2}{*}{\textbf{\makecell[c]{MSE ($\times 10^{-3}$) \\ /FID}}} & \multicolumn{7}{c}{\textbf{$t$ values}} \\ 
\cmidrule(lr){2-8} 
& \textbf{10} & \textbf{30} & \textbf{50} & \textbf{99} & \textbf{200} & \textbf{296} & \textbf{498} \\
\midrule
\textbf{$\rho=0.83$} & \makecell[c]{3.239/\\24.158} & \makecell[c]{3.782/\\27.854} & \makecell[c]{4.209/\\31.040} & \makecell[c]{5.038/\\38.579} & \makecell[c]{6.718/\\59.002} & \makecell[c]{8.539/\\89.957} & \makecell[c]{13.458/\\147.872} \\ \hline 
\textbf{$\rho=0.85$} & \makecell[c]{3.238/\\24.026} & \makecell[c]{3.783/\\27.529} & \makecell[c]{4.210/\\30.425} & \makecell[c]{5.040/\\37.295} & \makecell[c]{6.725/\\55.286} & \makecell[c]{8.549/\\84.503} & \makecell[c]{13.509/\\141.376} \\ \hline 
\textbf{$\rho=0.88$} & \makecell[c]{3.238/\\23.823} & \makecell[c]{3.784/\\27.026} & \makecell[c]{4.213/\\29.601} & \makecell[c]{5.046/\\35.397} & \makecell[c]{6.744/\\49.927} & \makecell[c]{8.586/\\75.094} & \makecell[c]{13.619/\\130.549} \\ \hline 
\textbf{$\rho=0.90$} & \makecell[c]{3.237/\\23.692} & \makecell[c]{3.786/\\26.734} & \makecell[c]{4.216/\\29.067} & \makecell[c]{5.054/\\34.241} & \makecell[c]{6.766/\\46.616} & \makecell[c]{8.628/\\68.504} & \makecell[c]{13.732/\\120.375} \\ \hline 
\textbf{$\rho=0.92$} & \makecell[c]{3.237/\\23.544} & \makecell[c]{3.788/\\26.405} & \makecell[c]{4.219/\\28.509} & \makecell[c]{5.062/\\33.085} & \makecell[c]{6.795/\\43.259} & \makecell[c]{8.688/\\61.814} & \makecell[c]{13.921/\\110.608} \\ \hline 
\textbf{$\rho=0.93$} & \makecell[c]{3.238/\\23.477} & \makecell[c]{3.789/\\26.250} & \makecell[c]{4.222/\\28.266} & \makecell[c]{5.068/\\32.538} & \makecell[c]{6.813/\\41.707} & \makecell[c]{8.727/\\58.462} & \makecell[c]{14.038/\\104.204} \\ \hline 
\textbf{$\rho=0.95$} & \makecell[c]{3.237/\\23.358} & \makecell[c]{3.791/\\25.933} & \makecell[c]{4.226/\\27.784} & \makecell[c]{5.080/\\31.470} & \makecell[c]{6.855/\\38.907} & \makecell[c]{8.815/\\52.233} & \makecell[c]{14.345/\\88.912} \\
\bottomrule
\end{tabular}
\end{table}

\subsubsection{More Results} \label{App-subsec-KD-more-results}

\textbf{Effect of RDP Control Parameters $t$ and $\rho$ on Different Models:} We include the distortion and perception performance results for SD1.5/2.1/XL and Flux under $\rho=1$ in Figure~\ref{Fig_SD15_SDXL_RDP}.  We can observe that the R-D and R-P performances of SD1.5 and SDXL are inferior to SD2.1. Thus, we choose SD2.1 in our experiments when comparing with the benchmark. 

Figure~\ref{Fig_KD_metrics} shows the PSNR and LPIPS results across different $\rho$ and $t$ values for both SD 2.1 and Flux. While SD 2.1 performs better at low bitrates, Flux supports broader RDP traversal at higher bitrates. In both cases, increasing $\rho$ improves perception but leads to worse distortion, consistent with theoretical predictions.

\textbf{More DP metrics:}
We provide additional RDP results using PSNR-LPIPS curves for SD2.1 and Flux across both datasets in Figure~\ref{Fig_Kodak_DIV2K_PSNR_LPIPS}. We also report the FID on the Kodak dataset and plot the rate-MSE-FID tradeoff in Figure~\ref{Fig_Kodak_DIV2K_MSE_FID}. Similar trends are observed as in MSE-LPIPS results. Note that the computation of FID here follows \cite{mentzer2020HiFiC,DDCM_compression}, wherein $64\times 64$ patches are extracted from the high resolution images to compute the FID scores on these patches. FID is mathematically defined as the squared Wasserstein-2 distance between two multivariate Gaussian distributions fitted to the Inception feature representations of the real and generated images. We also provide tables of numerical results in Table~\ref{tab:performance_metrics_MSE_FID}.

\textbf{More samples:}
Figures~\ref{Fig_Kodak_DIV2K_Samples1} and~\ref{Fig_Kodak_DIV2K_Samples2} present more sample reconstructions on Kodak and DIV2K datasets with diverse $\rho$ selections and bitrate levels against baselines. Figure~\ref{Fig_Flux_visual_sample} depicts the visual changes in reconstructions provided by Flux under different $t$ and $\rho$. Figures~\ref{Fig_HighResolution_Samples1} and~\ref{Fig_HighResolution_Samples2} show samples with high-resolution details. Note that we also provide an interactive demo online \footnote{https://diffrdp.github.io/} to help the reader compare the details of images for different values of $t$ and $\rho$. These qualitative results further illustrate the smooth and controllable RDP tradeoff achieved by our method.
\begin{figure}[!t]
    \centering
    \includegraphics[width=1\textwidth]{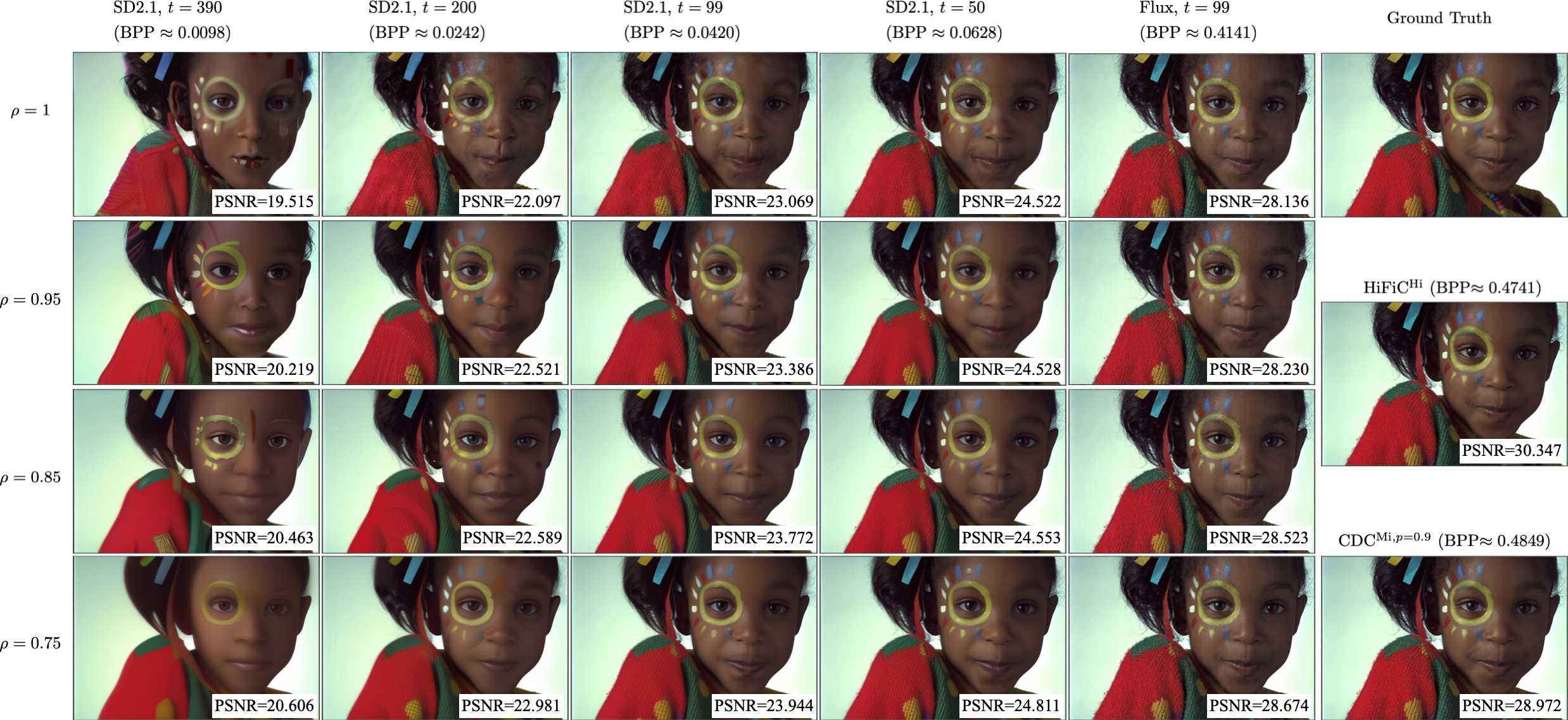}
    \caption{Sample reconstructions on Kodak dataset under different $t$ and $\rho$.}\label{Fig_Kodak_DIV2K_Samples1}
    \vspace{1cm}
\end{figure}

\begin{figure}[!t]
    \centering
    \includegraphics[width=1\textwidth]{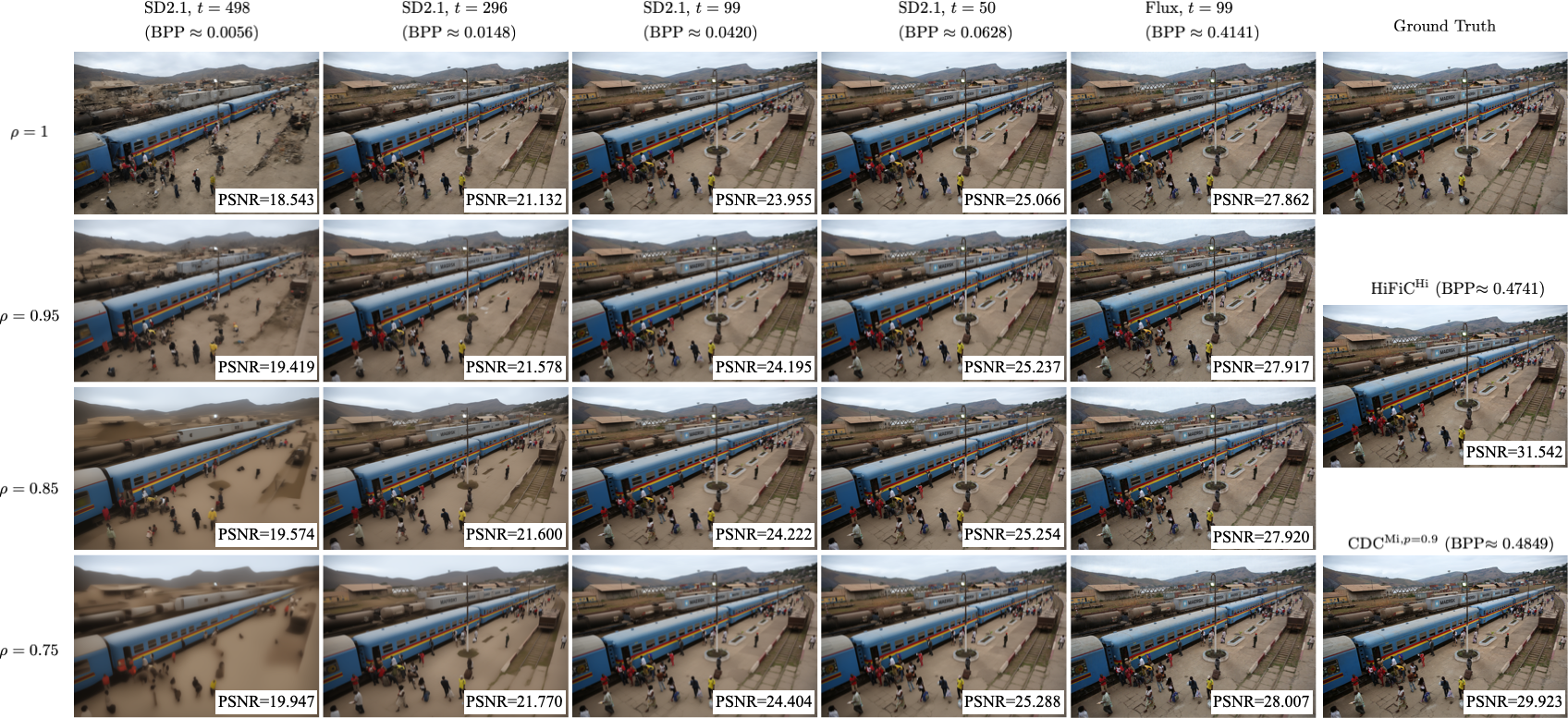}
    \caption{Sample reconstructions on DIV2K dataset under different $t$ and $\rho$.}\label{Fig_Kodak_DIV2K_Samples2}
    \vspace{3cm}
\end{figure}

\begin{figure}
    \centering
    \includegraphics[width=1\textwidth]{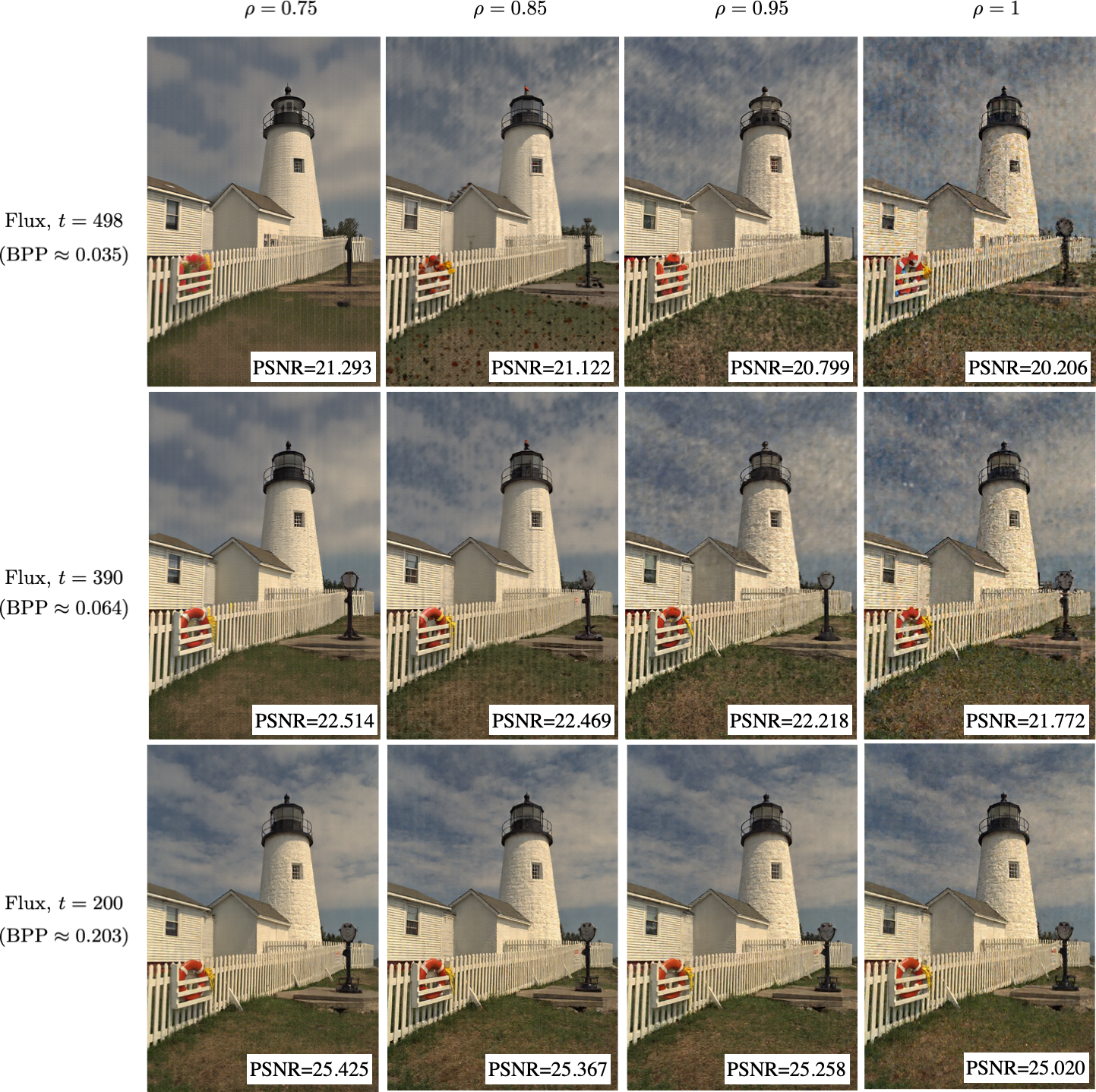}
    \caption{Sample reconstructions provided by Flux under different $t$ and $\rho$.}\label{Fig_Flux_visual_sample}
    \vspace{5cm}
\end{figure}

\begin{figure}[!t]
    \centering
    \includegraphics[width=1\textwidth]{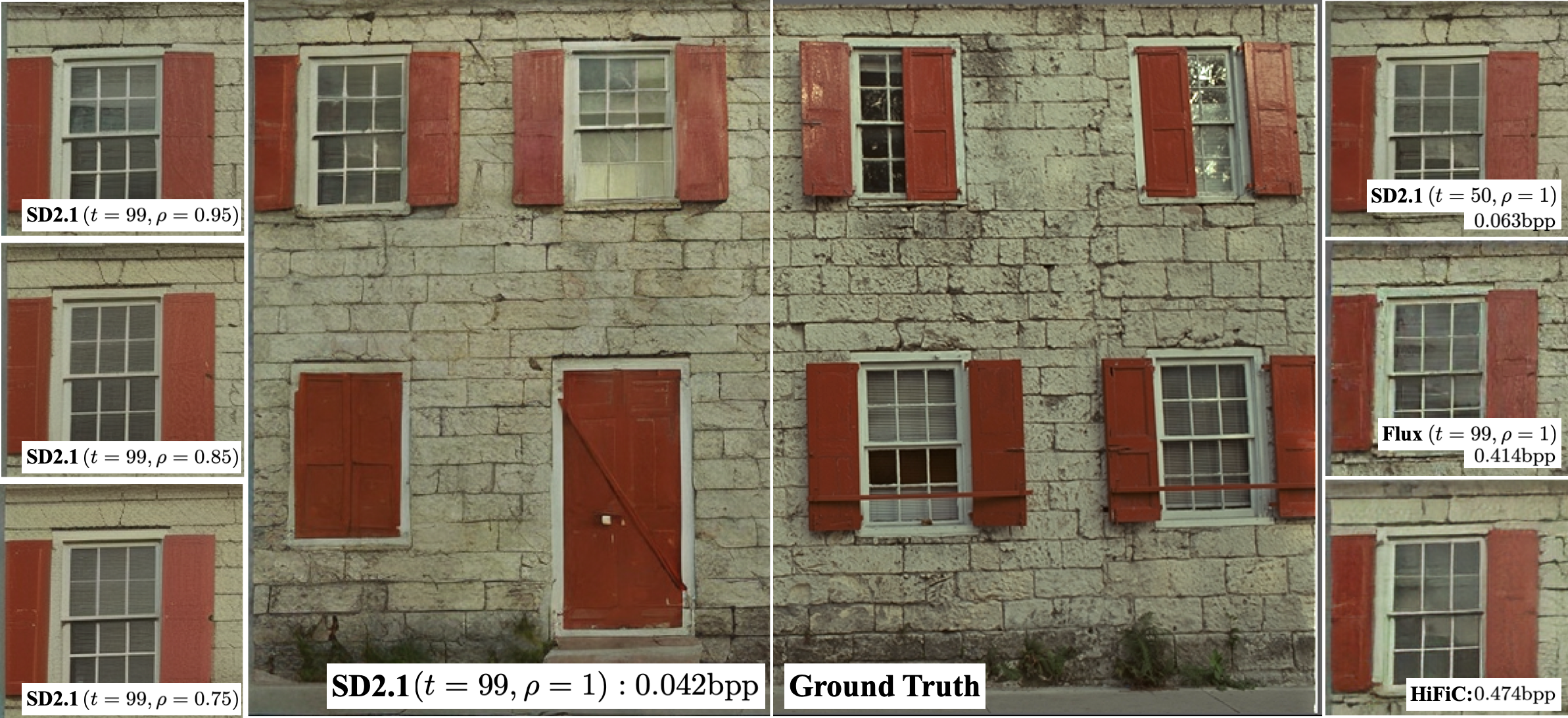}
    \caption{Sample reconstructions with high resolution details under different $t$ and $\rho$.}\label{Fig_HighResolution_Samples1}
\end{figure}

\begin{figure}[!t]
    \centering
    \includegraphics[width=1\textwidth]{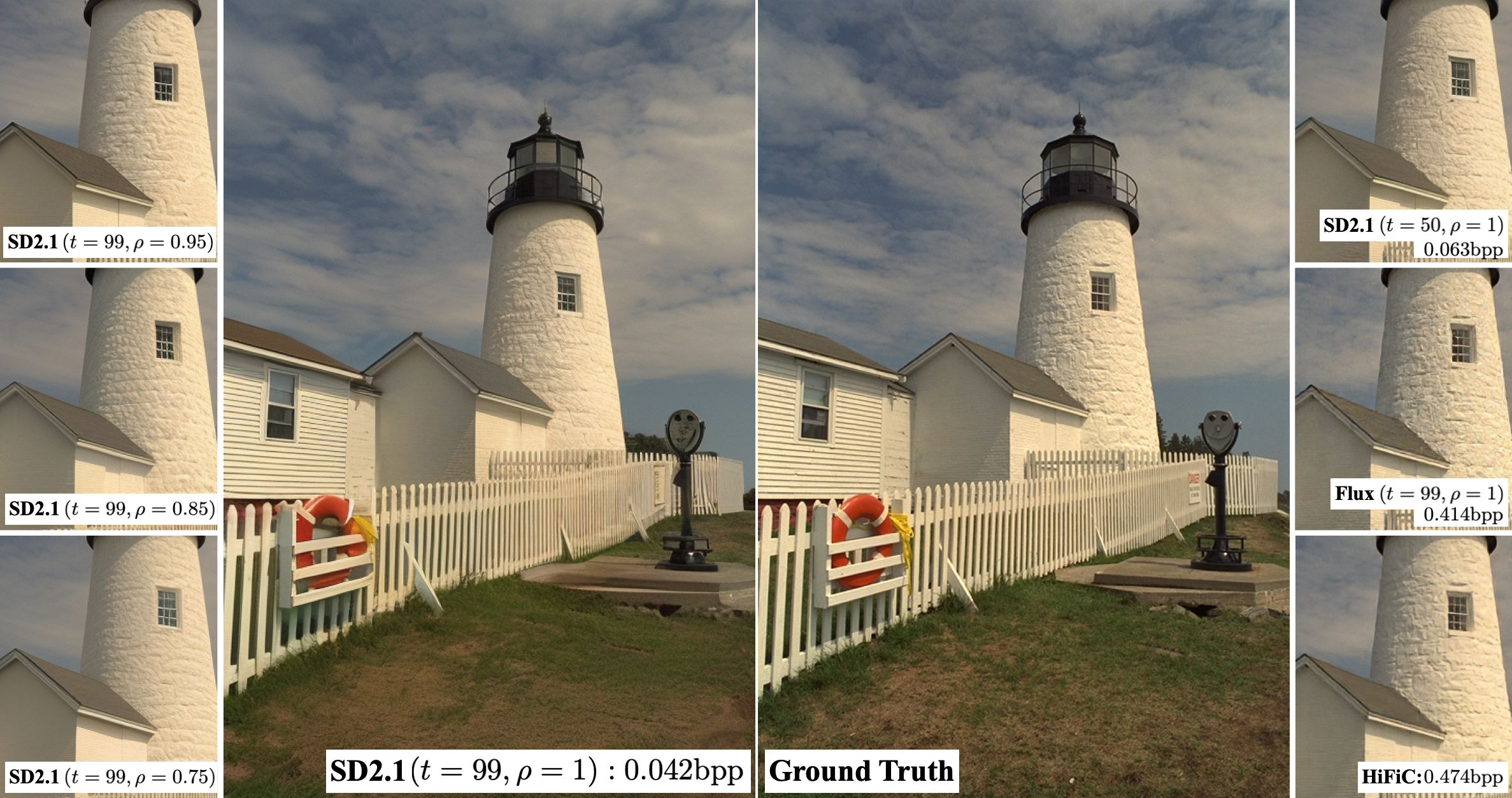}
    \caption{Sample reconstructions with high resolution details under different $t$ and $\rho$.}\label{Fig_HighResolution_Samples2}
    \vspace{3cm}
\end{figure}

}

\newpage
\bibliographystyle{IEEEtran}
\bibliography{refer}

@InProceedings{ScoreDP_25_CVPR,
    author    = {Wang, Yuhan and Bi, Suzhi and Zhang, Ying-Jun Angela and Yuan, Xiaojun},
    title     = {Traversing Distortion-Perception Tradeoff using a Single Score-Based Generative Model},
    booktitle = {Proceedings of the Computer Vision and Pattern Recognition Conference (CVPR)},
    month     = {June},
    year      = {2025},
    pages     = {2377-2386}
}

@book{Invitation_Wasserstein_space, 
  place={}, 
  title={An Invitation to Statistics in {W}asserstein Space}, 
  DOI={https://doi.org/10.1007/978-3-030-38438-8}, 
  publisher={Springer Cham}, 
  author={Victor M. Panaretos and Yoav Zemel},
  year={2020}}

@InProceedings{Tweedies,
    author    = {Robbins, Herbert E.},
    title     = {An Empirical Bayes Approach to Statistics},
    booktitle = {Proceedings of Third Berkeley Symposium on Mathematical Statistics and Probability},
    month     = {January},
    year      = {1956},
    pages     = {157-163}
}

@book{PRML, 
  place={}, 
  title={Pattern Recognition and Machine Learning}, 
  publisher={Springer New York}, 
  author={Christopher M. Bishop},
  isbn={978-0-387-31073-2},
  year={2006}}

@ARTICLE{CDC,
  author={Li, Songze and Maddah-Ali, Mohammad Ali and Yu, Qian and Avestimehr, A. Salman},
  journal={IEEE Transactions on Information Theory}, 
  title={A Fundamental Tradeoff Between Computation and Communication in Distributed Computing}, 
  year={2018},
  volume={64},
  number={1},
  pages={109-128},
  doi={10.1109/TIT.2017.2756959}}

@INPROCEEDINGS{PD-tradeoff,
  author={Blau, Yochai and Michaeli, Tomer},
  booktitle={2018 IEEE/CVF Conference on Computer Vision and Pattern Recognition (CVPR)}, 
  title={The Perception-Distortion Tradeoff}, 
  year={2018},
  volume={},
  number={},
  pages={6228-6237},
  doi={10.1109/CVPR.2018.00652}}

@InProceedings{RethinkingRDP,
  title = 	 {Rethinking Lossy Compression: The Rate-Distortion-Perception Tradeoff},
  author =       {Blau, Yochai and Michaeli, Tomer},
  booktitle = 	 {Proceedings of the 36th International Conference on Machine Learning (ICML)},
  pages = 	 {675--685},
  year = 	 {2019},
  editor = 	 {Chaudhuri, Kamalika and Salakhutdinov, Ruslan},
  volume = 	 {97},
  series = 	 {Proceedings of Machine Learning Research},
  month = 	 {09--15 Jun},
  publisher =    {PMLR}
}

@Article{RDPSurvey_Niu25,
AUTHOR = {Niu, Xueyan and Bai, Bo and Guo, Nian and Zhang, Weixi and Han, Wei},
TITLE = {Rate–Distortion–Perception Trade-Off in Information Theory, Generative Models, and Intelligent Communications},
JOURNAL = {Entropy},
VOLUME = {27},
YEAR = {2025},
NUMBER = {4},
ARTICLE-NUMBER = {373},
URL = {https://www.mdpi.com/1099-4300/27/4/373},
PubMedID = {40282608},
ISSN = {1099-4300}
}

@inproceedings{UniversalRDPs,
 author = {Zhang, George and Qian, Jingjing and Chen, Jun and Khisti, Ashish},
 booktitle = {Advances in Neural Information Processing Systems (NeurIPS)},
 editor = {M. Ranzato and A. Beygelzimer and Y. Dauphin and P.S. Liang and J. Wortman Vaughan},
 pages = {11517--11529},
 publisher = {Curran Associates, Inc.},
 title = {Universal Rate-Distortion-Perception Representations for Lossy Compression},
 volume = {34},
 year = {2021}
}

@inproceedings{DP-tradeoff_Wasserstein_Freirich2021,
title={A Theory of the Distortion-Perception Tradeoff in {W}asserstein Space},
author={Dror Freirich and Tomer Michaeli and Ron Meir},
booktitle={Advances in Neural Information Processing Systems (NeurIPS)},
editor={A. Beygelzimer and Y. Dauphin and P. Liang and J. Wortman Vaughan},
year={2021}
}

@ARTICLE{rate_distortion_perception_conditional_percep,
  author={Salehkalaibar, Sadaf and Chen, Jun and Khisti, Ashish and Yu, Wei},
  journal={IEEE Transactions on Information Theory}, 
  title={Rate-Distortion-Perception Tradeoff Based on the Conditional-Distribution Perception Measure}, 
  year={2024},
  volume={70},
  number={12},
  pages={8432-8454},
  doi={10.1109/TIT.2024.3467282}}

@article{mentzer2020HiFiC,
  title={High-Fidelity Generative Image Compression},
  author={Mentzer, Fabian and Toderici, George D and Tschannen, Michael and Agustsson, Eirikur},
  journal={Advances in Neural Information Processing Systems (NeurIPS)},
  volume={33},
  year={2020}
}

@article{wagner2022_RDP_Randomness,
  title={The Rate-Distortion-Perception Tradeoff: The Role of Common Randomness}, 
  author={Aaron B. Wagner},
  year={2022},
  journal         = {arXiv preprint},
  note            = {[Online]. Available: https://arxiv.org/abs/2202.04147}
}

@inproceedings{2021PerceptualReconstruction,
	author={Zeyu Yan and Fei Wen and Rendong Ying and Chao Ma and Peilin Liu},
	booktitle={Proceedings of the
	International Conference on Machine Learning (ICML)},
	title={On Perceptual Lossy Compression: The Cost of Perceptual Reconstruction and An Optimal Training Framework},
	year={2021}, 
}

@article{DiffC21,
  title={Lossy Compression with Gaussian Diffusion}, 
  author={Lucas Theis and Tim Salimans and Matthew D. Hoffman and Fabian Mentzer},
  year={2022},
  journal         = {arXiv preprint},
  note            = {[Online]. Available: https://arxiv.org/abs/2206.08889}
}

@inproceedings{DiffC_Pretrain,
title={Lossy Compression with Pretrained Diffusion Models},
author={Jeremy Vonderfecht and Feng Liu},
booktitle={The Thirteenth International Conference on Learning Representations (ICLR)},
year={2025}
}

@inproceedings{yang2023_ConditionalDiffusionC,
title={Lossy Image Compression with Conditional Diffusion Models},
author={Ruihan Yang and Stephan Mandt},
booktitle={Thirty-seventh Conference on Neural Information Processing Systems (NeurIPS)},
year={2023}
}

@inproceedings{CodingTheorem_RDP_Theis2021,
  author = "L. Theis and A. B. Wagner",
  title = "A coding theorem for the rate-distortion-perception function",
  year = 2021,
  booktitle = "Neural Compression Workshop at International Conference on Learning Representations (ICLR)"
}

@ARTICLE{OnRDP_Coding_Chen2022,
  author={Chen, Jun and Yu, Lei and Wang, Jia and Shi, Wuxian and Ge, Yiqun and Tong, Wen},
  journal={IEEE Journal on Selected Areas in Information Theory}, 
  title={On the Rate-Distortion-Perception Function}, 
  year={2022},
  volume={3},
  number={4},
  pages={664-673},
  doi={10.1109/JSAIT.2022.3231820}}

@article{Strong_Functional_Rep_Lemma,
author = {Li, Cheuk Ting and Gamal, Abbas El},
title = {Strong Functional Representation Lemma and Applications to Coding Theorems},
year = {2018},
issue_date = {Nov. 2018},
publisher = {IEEE Press},
volume = {64},
number = {11},
issn = {0018-9448},
doi = {10.1109/TIT.2018.2865570},
journal = {IEEE Transactions on Information Theory},
month = {nov},
pages = {6967-6978},
numpages = {12}
}

@article{Channel_Sim_Book,
author = {Li, Cheuk Ting},
title = {Channel Simulation: Theory and Applications to Lossy Compression and Differential Privacy},
year = {2024},
issue_date = {Dec 2024},
publisher = {Now Publishers Inc.},
address = {Hanover, MA, USA},
volume = {21},
number = {6},
issn = {1567-2190},
doi = {10.1561/0100000141},
journal = {Found. Trends Commun. Inf. Theory},
month = dec,
pages = {847-1106},
numpages = {264}
}

@Article{Fokker-Planck_Eq,
AUTHOR = {Maoutsa, Dimitra and Reich, Sebastian and Opper, Manfred},
TITLE = {Interacting Particle Solutions of {F}okker-{P}lanck Equations Through Gradient-Log-Density Estimation},
JOURNAL = {Entropy},
VOLUME = {22},
YEAR = {2020},
NUMBER = {8},
ARTICLE-NUMBER = {802},
PubMedID = {33286573},
ISSN = {1099-4300},
DOI = {10.3390/e22080802}
}

@book{AppliedSDE_Book, 
place={Cambridge}, 
series={Institute of Mathematical Statistics Textbooks},
title={Applied Stochastic Differential Equations},
publisher={Cambridge University Press}, 
author={Särkkä, Simo and Solin, Arno}, 
year={2019}, 
collection={Institute of Mathematical Statistics Textbooks}
}

@ARTICLE{RDP_vector_Gaussian_Qian2025,
  author={Qian, Jingjing and Salehkalaibar, Sadaf and Chen, Jun and Khisti, Ashish and Yu, Wei and Shi, Wuxian and Ge, Yiqun and Tong, Wen},
  journal={IEEE Journal on Selected Areas in Information Theory}, 
  title={Rate-Distortion-Perception Tradeoff for Gaussian Vector Sources}, 
  year={2025},
  volume={6},
  number={},
  pages={1-17},
  doi={10.1109/JSAIT.2024.3509420}}

@INPROCEEDINGS{RDP_role_private_randomness_Hamdi2024,
  author={Hamdi, Yassine and Wagner, Aaron B. and Gündüz, Deniz},
  booktitle={2024 IEEE International Symposium on Information Theory (ISIT)}, 
  title={The Rate-Distortion-Perception Trade-off: the Role of Private Randomness}, 
  year={2024},
  volume={},
  number={},
  pages={1083-1088},
  doi={10.1109/ISIT57864.2024.10619317}}

@inproceedings{yang2025UQDM,
title={Progressive Compression with Universally Quantized Diffusion Models},
author={Yibo Yang and Justus Will and Stephan Mandt},
booktitle={The Thirteenth International Conference on Learning Representations (ICLR)},
year={2025}
}

@inproceedings{DDPM,
author = {Ho, Jonathan and Jain, Ajay and Abbeel, Pieter},
title = {Denoising diffusion probabilistic models},
year = {2020},
isbn = {9781713829546},
booktitle = {Proceedings of the 34th International Conference on Neural Information Processing Systems (NeurIPS)},
articleno = {574},
numpages = {12},
location = {Vancouver, BC, Canada}
}

@inproceedings{song2021scorebased,
  title={Score-Based Generative Modeling through Stochastic Differential Equations},
  author={Yang Song and Jascha Sohl-Dickstein and Diederik P Kingma and Abhishek Kumar and Stefano Ermon and Ben Poole},
  booktitle={International Conference on Learning Representations (ICLR)},
  year={2021}
}

@inproceedings{song2021DDIM,
title={Denoising Diffusion Implicit Models},
author={Jiaming Song and Chenlin Meng and Stefano Ermon},
booktitle={International Conference on Learning Representations (ICLR)},
year={2021},
}

@ARTICLE{Vincent_score_matching,
  author={Vincent, Pascal},
  journal={Neural Computation}, 
  title={A Connection Between Score Matching and Denoising Autoencoders}, 
  year={2011},
  volume={23},
  number={7},
  pages={1661-1674},
  doi={10.1162/NECO_a_00142}}

@ARTICLE{Xue2024_scorebased_variational_infer,
  author={Xue, Zhipeng and Cai, Penghao and Yuan, Xiaojun and Gao, Xiqi},
  journal={IEEE Transactions on Signal Processing}, 
  title={Score-Based Reverse Mean Propagation for Solving Inverse Problems}, 
  year={2025},
  volume={73},
  number={},
  pages={3947-3962},
  doi={10.1109/TSP.2025.3605633}}

@article{ANDERSON1982_reverseSDE,
title = {Reverse-time diffusion equation models},
journal = {Stochastic Processes and their Applications},
volume = {12},
number = {3},
pages = {313-326},
year = {1982},
issn = {0304-4149},
doi = {https://doi.org/10.1016/0304-4149(82)90051-5},
author = {Brian D.O. Anderson},
}

@INPROCEEDINGS {PSCGAN,
author = { Ohayon, Guy and Adrai, Theo and Vaksman, Gregory and Elad, Michael and Milanfar, Peyman },
booktitle = { 2021 IEEE/CVF International Conference on Computer Vision (ICCV) Workshops},
title = {{ High Perceptual Quality Image Denoising with a Posterior Sampling CGAN }},
year = {2021},
volume = {},
ISSN = {},
pages = {1805-1813},
doi = {10.1109/ICCVW54120.2021.00207},
publisher = {IEEE Computer Society},
address = {Los Alamitos, CA, USA},
month =Oct}

@inproceedings{DDCM_compression,
    title     = {Compressed Image Generation with Denoising Diffusion Codebook Models},
    author    = {Guy Ohayon and Hila Manor and Tomer Michaeli and Michael Elad},
    booktitle = {Forty-second International Conference on Machine Learning (ICML)},
    year      = {2025},
    url       = {https://openreview.net/forum?id=cQHwUckohW}
}

@article{li2024_DiffEIC_TCSVT,
  author={Li, Zhiyuan and Zhou, Yanhui and Wei, Hao and Ge, Chenyang and Jiang, Jingwen},
  journal={IEEE Transactions on Circuits and Systems for Video Technology}, 
  title={Toward Extreme Image Compression with Latent Feature Guidance and Diffusion Prior}, 
  year={2025},
  volume={35},
  number={1},
  pages={888-899},
  doi={10.1109/TCSVT.2024.3455576}
}

@article{li2025_RDEIC_TCSVT,
  title={RDEIC: Accelerating Diffusion-Based Extreme Image Compression with Relay Residual Diffusion},
  author={Li, Zhiyuan and Zhou, Yanhui and Wei, Hao and Ge, Chenyang and Mian, Ajmal},
  journal={IEEE Transactions on Circuits and Systems for Video Technology},
  year={2025},
  publisher={IEEE}
}

@InProceedings{Zhang2025_StableCodec,
    author    = {Zhang, Tianyu and Luo, Xin and Li, Li and Liu, Dong},
    title     = {StableCodec: Taming One-Step Diffusion for Extreme Image Compression},
    booktitle = {Proceedings of the IEEE/CVF International Conference on Computer Vision (ICCV)},
    month     = {October},
    year      = {2025},
    pages     = {17379-17389}
}

@inproceedings{Ke2025resulic,
    author = {Ke, Anle and Zhang, Xu and Chen, Tong and Lu, Ming and Zhou, Chao and Gu, Jiawen and Ma, Zhan},
    title = {Ultra Lowrate Image Compression with Semantic Residual Coding and Compression-aware Diffusion},
    booktitle = {International Conference on Machine Learning (ICML)},
    year = {2025},
}

@InProceedings{xue2025_DLF,
  author={Xue, Naifu and Jia, Zhaoyang and Li, Jiahao and Li, Bin and Zhang, Yuan and Lu, Yan},
  title={DLF: Extreme Image Compression with Dual-generative Latent Fusion},
  booktitle={Proceedings of the IEEE/CVF International Conference on Computer Vision (ICCV)},
  month = {Oct},
  year={2025},
}

@inproceedings{xue2025_OneDC,
  title={One-Step Diffusion-Based Image Compression with Semantic Distillation},
  author={Naifu Xue and Zhaoyang Jia and Jiahao Li and Bin Li and Yuan Zhang and Yan Lu},
  booktitle={The Thirty-ninth Annual Conference on Neural Information Processing Systems (NeurIPS)},
   month = {Dec},
  year={2025},
}

@article{PSCompression_Elata_25,
title={{PSC}: Posterior Sampling-Based Compression},
author={Noam Elata and Tomer Michaeli and Michael Elad},
journal={Transactions on Machine Learning Research},
issn={2835-8856},
year={2025},
url={https://openreview.net/forum?id=OsqgU6Jz4t},
note={}
}

@book{Linear_Algebra_friedberg2003,
  title={Linear Algebra},
  author={Friedberg, S.H. and Insel, A.J. and Spence, L.E.},
  isbn={9780130084514},
  lccn={96026503},
  series={Featured Titles for Linear Algebra (Advanced) Series},
  year={2003},
  publisher={Pearson Education}
}

@INPROCEEDINGS{Div2k,
  author={Agustsson, Eirikur and Timofte, Radu},
  booktitle={2017 IEEE Conference on Computer Vision and Pattern Recognition (CVPR) Workshops}, 
  title={NTIRE 2017 Challenge on Single Image Super-Resolution: Dataset and Study}, 
  year={2017},
  volume={},
  number={},
  pages={1122-1131},
  doi={10.1109/CVPRW.2017.150}}

@article{Flux,
      title={FLUX.1 Kontext: Flow Matching for In-Context Image Generation and Editing in Latent Space}, 
      author={Black-Forest-Labs and Stephen Batifol and Andreas Blattmann and Frederic Boesel and Saksham Consul and Cyril Diagne and Tim Dockhorn and Jack English and Zion English and Patrick Esser and Sumith Kulal and Kyle Lacey and Yam Levi and Cheng Li and Dominik Lorenz and Jonas Müller and Dustin Podell and Robin Rombach and Harry Saini and Axel Sauer and Luke Smith},
      year={2025},
      note = {[Online]. Available: https://arxiv.org/abs/2506.15742},
      journal = {arXiv preprint},
}

@INPROCEEDINGS {StableDiffusion,
author = { Rombach, Robin and Blattmann, Andreas and Lorenz, Dominik and Esser, Patrick and Ommer, Bjorn },
booktitle = { 2022 IEEE/CVF Conference on Computer Vision and Pattern Recognition (CVPR) },
title = {{ High-Resolution Image Synthesis with Latent Diffusion Models }},
year = {2022},
volume = {},
ISSN = {},
pages = {10674-10685},
doi = {10.1109/CVPR52688.2022.01042},
publisher = {IEEE Computer Society},
address = {Los Alamitos, CA, USA},
month =Jun}



\end{document}